\date{\today}
\newtheorem{theorem}{Theorem}[section]
\newtheorem{lemma}[theorem]{Lemma}
\newtheorem{remark}[theorem]{Remark}
\numberwithin{equation}{section}
\begin{document}
\title[Invasion dynamics of super invaders]{Invasion dynamics of super invaders:  Elimination of Allee effects by a strategy at the range boundary }
 \author[Y. Du, L. Li, W. Ni and N. Shabgard]{Yihong Du$^{\dag}$, Ling Li$^\ddag$, Wenjie Ni$^{\dag}$, Narges Shabgard$^{\dag}$}
 \thanks{
 \mbox{\!$^{\dag}$} School of Science and Technology, University of New England, Armidale, NSW 2351, Australia.\\
\mbox{\ \ \ \ \ \ \   \bf  Emails:} \ {\tiny \bf ydu@une.edu.au (Y. Du),\ wni2@une.edu.au (W. Ni),\  nshabgar@myune.edu.au (N. Shabgard)}.\\
 \mbox{\ \ \ \  $^{\ddag}$} College of Science, Nanjing Agricultural University,
 Nanjing, China. {\bf Email:} {\tiny \bf liling.njnu@outlook.com (L. Li)}.
}
  
\date{\today}

\begin{abstract}
  Using a reaction-diffusion model with free boundaries in one space dimension for a single population species with  density $u(t,x)$ and population range $[g(t), h(t)]$, we demonstrate that the Allee effects can be eliminated if the species maintains its population density at a suitable level at the range boundary by advancing or retreating the fronts. It is proved that with such a strategy at the range edge the species can invade the environment successfully with all admissible initial populations, exhibiting the dynamics of  super invaders. Numerical simulations are used to help understand what happens if the population density level at the range boundary is maintained at other levels. If the invading cane toads in Australia used this strategy at the range boundary to become a super invader, then our results may explain why  toads near the invading front evolve to have longer legs and run faster.

\vspace{0.6cm}

\noindent
\textbf{Keywords}: Invasion dynamics, Allee effects, reaction-diffusion equation, free boundary

\medskip

\noindent
\textbf{AMS Subject Classification}: 35B40, 35K55, 35R35
\end{abstract}
\maketitle

\section{Introduction}
Understanding the dynamical behaviour of invasive species is a central problem in invasion biology, and reaction-diffusion equations have proven to be a useful tool for this purpose \cite{SK, CC}. In the existing reaction-diffusion models for propagation, the long-time dynamics is primarily determined by the reaction terms (or growth terms in the context of population dynamics), and three types of reaction terms have been widely used, namely monostable, bistable and combustion types of reactions. The last two types of reaction terms can  capture the so called Allee effects in population dynamics, which assert that small population size/density may cause inbreeding depression leading to negative growth and eventual vanishing of a species. 

However, our analysis of a reaction-diffusion model with  free boundaries in this paper suggests  that an invasive species may eliminate the Allee effects and spread successfully by a  strategy at its range boundary alone. More precisely, Allee effects can be avoided if members of the species at the edge of the population range maintain a favourable density there by advancing or retreating the fronts, and this strategy  at the range edge will  lead to consistent successful invasion. This appears to be a new phenomenon in reaction diffusion models, and we hope it may shed some light on the understanding of the mechanisms of biological invasion. 

Indeed, the assumption here  means that for members of the species near the range boundary, their movement in space is governed by two factors: (a) random dispersal similar to all other members, and (b) keeping the density at the front at a preferred level. These two factors are balanced at the range boundary by the species moving the front forward or backward, which  is the main reason for the super invasion dynamics.
 If the invading cane toads in Australia used this strategy at the range boundary to become a super invader, then our results may explain why  toads near the invading front evolve to have longer legs and run faster \cite{P-Nature}: advancing and retreating faster makes it easier to maintain the favoured population level at the front.
\medskip

To put this research into perspective, let us briefly recall some background for the modelling of species spreading.

\subsection{Fisher-KPP type reaction-diffusion models}
 Starting from the pioneering works of Fisher \cite{Fisher} and  Kolmogorov, Petrovsky and Piskunov (KPP) \cite{KPP},
  the spreading behaviour of an invading or new species has been widely modelled by the
  Cauchy problem
\begin{equation}
\label{Cauchy}
\left\{
\begin{aligned}
&U_t=d\Delta U+f(U) && \mbox{ for } x\in\mathbb R,\; t>0,\\
&U(0,x)=U_0(x) && \mbox{ for } x\in \mathbb R,
\end{aligned}
\right.
\end{equation}
where $U(t,x)$ stands for the population density of the concerned species at time $t$ and spatial location $x$, with initial population density $U_0(x)$ assumed to be nonnegative with nonempty compact support, to represent the fact that initially the population exists only locally in space. Fisher \cite{Fisher} assumed $f(U)=U(1-U)$ and KPP \cite{KPP} allowed more general functions $f$ but with similar behaviour; they all belong to the more general class of monostable functions, which are functions with the following properties:
\[
{\bf (f_m):} \ \ \  f\in C^1,\ f(0)=f(1)=0,\ f'(0)>0>f'(1),\; (1-u)f(u)>0 \mbox{ for } u\in (0,1)\cup(1,\infty).
\]
\medskip
A striking feature of \eqref{Cauchy} with $f$ satisfying ${\bf (f_m)}$ is that it predicts consistent successful spreading with an asymptotic spreading speed: There exists $c^*>0$ such that  for any small $\epsilon>0$, 
 \begin{equation}\label{c*}
  \lim_{t\to\infty,|x|<(c^*-\epsilon)t}U(t,x)=1,~~~ \lim_{t\to\infty,|x|\geq(c^*+\epsilon)t}U(t,x)=0.
 \end{equation}
 This fact was  proved by Aronson and Weinberger \cite{AW,AW1978}, where $c^*$ was first determined independently by Fisher \cite{Fisher} and KPP \cite{KPP} in 1937, who found in \cite{Fisher, KPP}
 that \eqref{Cauchy} admits a traveling wave $U(t,x)=\phi(x-ct)$ with $\phi(-\infty)=1$ and $\phi(\infty)=0$  if and only if  $c\geq c^*:=2\sqrt {f'(0)d}$ and they further claimed that $c^*$ is the spreading speed of the species. The existence of a spreading speed was supported by numerous observations of real world examples, such as the spreading of muskrats in Europe in the early 20th century; see
  \cite{S51, SK} for more details.
  
  The above result of  Aronson and Weinberger  on \eqref{Cauchy} has subsequently been proved to be rather robust;  for example, the existence of an asymptotic spreading speed has been established for propagation in  various heterogeneous environments (see, e.g., \cite{BHN, BHRo, LZ,  W82, wein02}).
  \medskip
  
  \subsection{Allee effects} The growth function $f$ in \eqref{Cauchy} being monostable means, in biological terms, that the environment is favourable for the growth of the species as long as its density is between 0 and 1, with 1 standing for the normalised carrying capacity of the environment. However, it has been observed that very often small population density may cause inbreeding depression which leads to negative growth or even extinction of a species; such a phenomenon is  known as the Allee effects in the literature (\cite{AB, KBD}). These effects can be captured by \eqref{Cauchy} when $f$ is of bistable type, namely it has the following properties:
  \begin{equation*}
	(\bf{f_b})\ \ 
	\left\{
	\begin{aligned}
		& f\in C^1,\ f(0)=f(\theta)=f(1)=0\ \text{for\ some}\ \theta\in (0,1), \ \quad f'(0)<0,\quad f'(1)<0,\\
		& f(u)< 0\ \text{in} \ (0,\theta)\cup (1,\infty), \quad f(u)>0\ \text{in} \ (\theta,1),  \quad\int_{0}^{1} f(s) ds >0.
	\end{aligned}
	\right.
\end{equation*}  
  The constant $\theta$ here is known as the Allee threshold density, below which the population has negative growth. It was shown in \cite{AW, AW1978} that
  when $f$ satisfies ${\bf (f_b)}$, for small initial population $U_0(x)$ the species vanishes eventually ($U(t,x)\to 0$ as $t\to\infty$), and the population spreads successfully $(U(t,x)\to 1$ as $t\to\infty$) when $U_0(x)$ is large. The  sharp threshold results in \cite{DM} further imply that generically only these two types of long-time dynamical behaviour are possible. Moreover, as in the monostable case, when spreading is successful, there is also a spreading speed determined by the associated traveling wave problem \cite{AW, AW1978}, and the density function $U(t,x)$ for large $t$ is well approximated by the wave profile function \cite{FM}.
  
  The stationary problem of \eqref{Cauchy} with a bistable $f$ also arises in material science and known as the Allen-Cahn equation \cite{AC, FT}.
 In combustion theory, \eqref{Cauchy} is used to model the temperature change with time, and the nonlinear function $f$ in such a context  is usually assumed to be of combustion type \cite{Kanel}, characterised by the following properties:
 \begin{equation*}
	(\bf{f_c})\ \ 
	\left\{
	\begin{aligned}
		& f\in C^1, \ f(u)\equiv 0\ \text{in} \ [0,\theta]\ \text{ for\ some}\ \theta\in (0,1), \ f(1)=0> f'(1),\\
		&  f(u)>0\ \text{in} \ (\theta,1), \quad f(u)<0\ \text{in} \ (1,\infty).
	\end{aligned}
	\right.
\end{equation*} 
In this context, $\theta$ is known as the ignition temperature. Qualitatively, the long-time dynamics of \eqref{Cauchy} with $f$ of $({\bf f_c})$ type is similar to
the situation that $f$ is of bistable type $({\bf f_b})$; namely for small initial function $U_0(x)$, vanishing happens ($U(t,x)\to 0$ as $t\to\infty$), and the  spreading is successful $(U(t,x)\to 1$ as $t\to\infty$) when $U_0(x)$ is large (see, e.g., \cite{Kanel, AW, AW1978}), and the  sharp threshold results in \cite{DM} imply that generically only these two types of long-time dynamical behaviour are possible. 
 
 In biological terms, a combustion type of $f$ can be viewed as reflecting some kind of weak Allee effects, while a bistable $f$  reflecting strong Allee effects.
 
  In this paper, we propose to use the following more general class of functions to reflect Allee effects:
 \[
 ({\bf f_{A}}):\ \  \left\{
	\begin{aligned}
		&f\in C^1,\ \ f(0)=f(\theta)=f(1)=0\ \text{for\ some}\ \theta\in [0,1),  \quad f'(0)\leq 0,\quad f'(1)<0,\\
		& f(u)\leq 0\ \text{in} \ [0,\theta], \quad f(u)>0\ \text{in} \ (\theta,1), \quad f(u)<0\ \text{in} \ (1,\infty),\ \int_0^1f(s)ds>0.
	\end{aligned}
	\right.
 \]
We will henceforth call functions satisfying ${\bf (f_{A})}$ of  Allee type. 

For each $f$ of type ${\bf (f_A)}$, since $\displaystyle \int_0^u f(s)ds\leq 0$ for $u\in [0, \theta]$ and $\displaystyle \int_0^1f(s)ds>0$, it is easily seen that there exists a unique $\theta^*=\theta^*_f\in [\theta, 1)$ such that
\begin{equation}\label{theta*}
\int_0^{\theta^*}f(s)ds=0,\ \int_0^uf(s)ds>0 \mbox{ for } u\in (\theta^*, 1].
\end{equation}

Clearly ${\bf (f_{A})}$ functions include those in ${\bf (f_{b})}$, and as mentioned above, we call this type of growth functions strong Allee type. If $f$ satisfies ${\bf (f_{A})}$ but not ${\bf (f_{b})}$, we will henceforth say it is of weak Allee type, which includes in particular
functions satisfying ${\bf (f_{c})}$ or satisfying ${\bf (f_{A})}$ with $\theta=0$ (which implies  $f'(0)=0$); for these two special classes of weak Allee functions clearly $\theta^*=\theta$. In general, from \eqref{theta*} we have
\[
 \theta^*_f\in \begin{cases} [\theta, 1) &\mbox{ if $f$ is of weak Allee type},\medskip\\
 (\theta, 1) & \mbox{ if $f$ is of strong Allee type.}
\end{cases}
\]

\subsection{Reaction-diffusion models with a Stefan type free boundary}
To describe spreading in population dynamics, the models in the previous subsections have a shortcoming:  They do not give the precise location of the evolving population range, since although $U(0,x)=U_0(x)$ has compact support, for any $t>0$, one has $U(t,x)>0$ for all $x\in\mathbb R$. Therefore the population range determined by \eqref{Cauchy}
 is $\Omega(t):=\{x: U(t,x)>0\}=\mathbb R$ once $t>0$, although $\Omega(0)$ is a bounded set by assumption ($U_0(x)$ has compact support).
  To avoid this shortcoming, one may nominate a small constant $\sigma\in (0,1)$,
and regard
 \[
\Omega_\sigma(t):=\{x: U(t,x)>\sigma\}
\]
  as the population range, which is  a bounded set for all $t>0$. Then
\[
\Gamma_\sigma(t):=\{x: U(t,x)=\sigma\}
\]
 can be viewed as the  spreading front, and \eqref{c*} implies that for any small $\epsilon>0$ and all large $t>0$,
 \[
 \Gamma_\sigma(t)\subset\{x: (c^*-\epsilon)t\leq |x|\leq (c^*+\epsilon)t\}.
 \]
 Therefore  \eqref{c*} can be interpreted as saying that the fronts go to infinity with asymptotic speed $c^*$. It should be noted that $c^*$ is determined by the associated traveling wave problem, and is independent of  the choice of $\sigma\in (0,1)$ in $\Omega_\sigma$. Interestingly, however, in the real world  spreading speed  is obtained from observations of the actual range expansion; see subsection 1.4 below for more details.

\smallskip

To avoid using an artificial number $\sigma$ in the expression of the population range,  Du and Lin \cite{DL10} modified \eqref{Cauchy} into a free boundary problem
of the form
 \begin{equation}\label{free-bound-0}
  \begin{cases}
  u_t-du_{xx}=f(u), & t>0, \ g(t)<x<h(t),\\
    u(t,g(t))= u(t,h(t))=0,&t>0,\\
    g'(t)=-\mu u_x(t,g(t)), & t>0,\\
     h'(t)=-\mu u_x(t,h(t)),&t>0,\\
    u(0,x)=u_0(x),\ -g(0)=h(0)=h_0,& -h_0\leq x\leq h_0,
  \end{cases}
  \end{equation}
  where $\mu>0$ is a constant and the population range is explicitly given by the interval $[g(t), h(t)]$ in the model.  The free boundary conditions in \eqref{free-bound-0} coincide with the Stefan conditions in the classical free boundary model for melting of ice in contact with water, and for the biological setting here, they can be derived from some biological assumptions \cite{BDK}: If we assume that in order to expand the population range, the  species  sacrifices $k$ units of its population at the range boundary per unit of  time and space, then $\mu=d/k$. 
  
  For $f$ of monostable type, it has been shown (see \cite{dl2015}) that 
 \eqref{free-bound-0} exhibits a spreading-vanishing dichotomy for its long-time dynamics: As $t\to\infty$, either $[g(t), h(t)]$ converges to a finite interval $[g_\infty, h_\infty]$ and $u(t,x)\to 0$ uniformly (the vanishing case), or $[g(t), h(t)] \to \mathbb R$ and $u(t,x)\to 1$ (the spreading case). Moreover, in the latter case, there exists an asymptotic spreading speed determined by an associated traveling wave problem (called a semi-wave problem in the literature partly due to the fact that the wave profile function is only defined over the half line, partly due to the need to distinguish it from the classical traveling wave problem of Fisher and KPP).   
Furthermore, it was shown in \cite{DMZ} that when spreading occurs, the density function $u(t,x)$ is well approximated  by the semi-wave profile function as $t\to\infty$.  

When $f$ is of the type ${\bf (f_b)}$ or ${\bf (f_c)}$, it follows from \cite{dl2015} that for all small initial population $u_0(x)$ vanishing happens, and for all large $u_0(x)$ spreading happens; the sharp transition result in \cite{dl2015} implies that generically these are the only possible long-time behaviour of \eqref{free-bound-0}.
The results in \cite{DMZ} on the spreading speed and population density  also apply to these cases when spreading is successful.

\subsection{Super invaders} The dynamics displayed by the models in the previous two subsections, especially those with $f$ of bistable type ${\bf (f_b)}$
or of combustion type ${\bf (f_c)}$, where the population vanishes when its initial size is small  and spreads successfully when the initial size is large, agrees with many real world observations \cite{DBS}; for example, 
744 of 1466 (51\%) introduction events of birds in the global data set analyzed by Blackburn
\& Duncan \cite{BD} did not result in establishment.

However, there are examples of super invaders which spread successfully against all  odds. The spreading of muskrats  in Europe in the early 1900s is an example  with very small initial size: In 1905, five escaped muskrats from a farm near Prague resulted in a successful invasion of the entire European continent  within less than 3 decades (\cite{SK}). This example enabled Skellam \cite{S51}  to observe the phenomenon of  spreading with an asymptotic speed (as claimed by Fisher and KPP in 1937): He calculated the area of the
muskrat range from a map obtained from field data, took the square root (which
gives a constant multiple of the range radius) and plotted it against years, and found that the data
points lay on a straight line; see Fig, 1.

\begin{center}
\vspace{-1cm} 
\includegraphics[height=100mm]{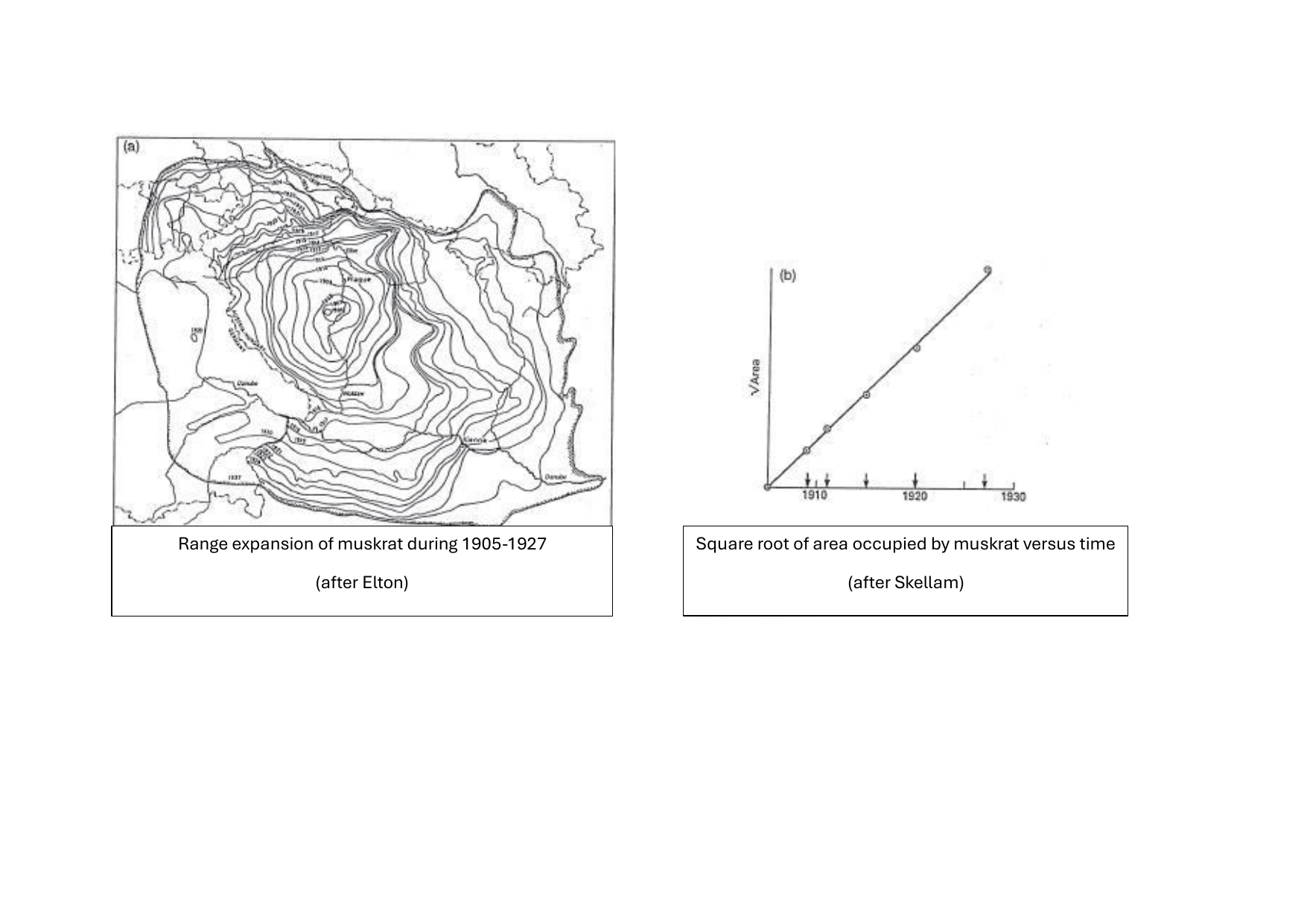}
\\
\vspace{-3cm}
{\small Fig. 1. \ Evolution of the  population range of muskrats in Europe during 1905-1927}
\end{center}

 \bigskip
 
 Another well known super invader is the cane toads in Australia.
Introduced in Australia from Hawaii in 1935, as an attempt to control ``cane beetles" in sugar cane fields of Northern Queensland, the cane toads ended up
a pest causing significant environmental detriment (and with no evidence that they have affected the number of cane beetles which they were introduced to prey upon). The invasion of cane toads in Australia is continuing, at an estimated speed of 40-50 kilometres per year; see Fig. 2.

\begin{center}
\vspace{-0.8cm}
\includegraphics[height=100mm]{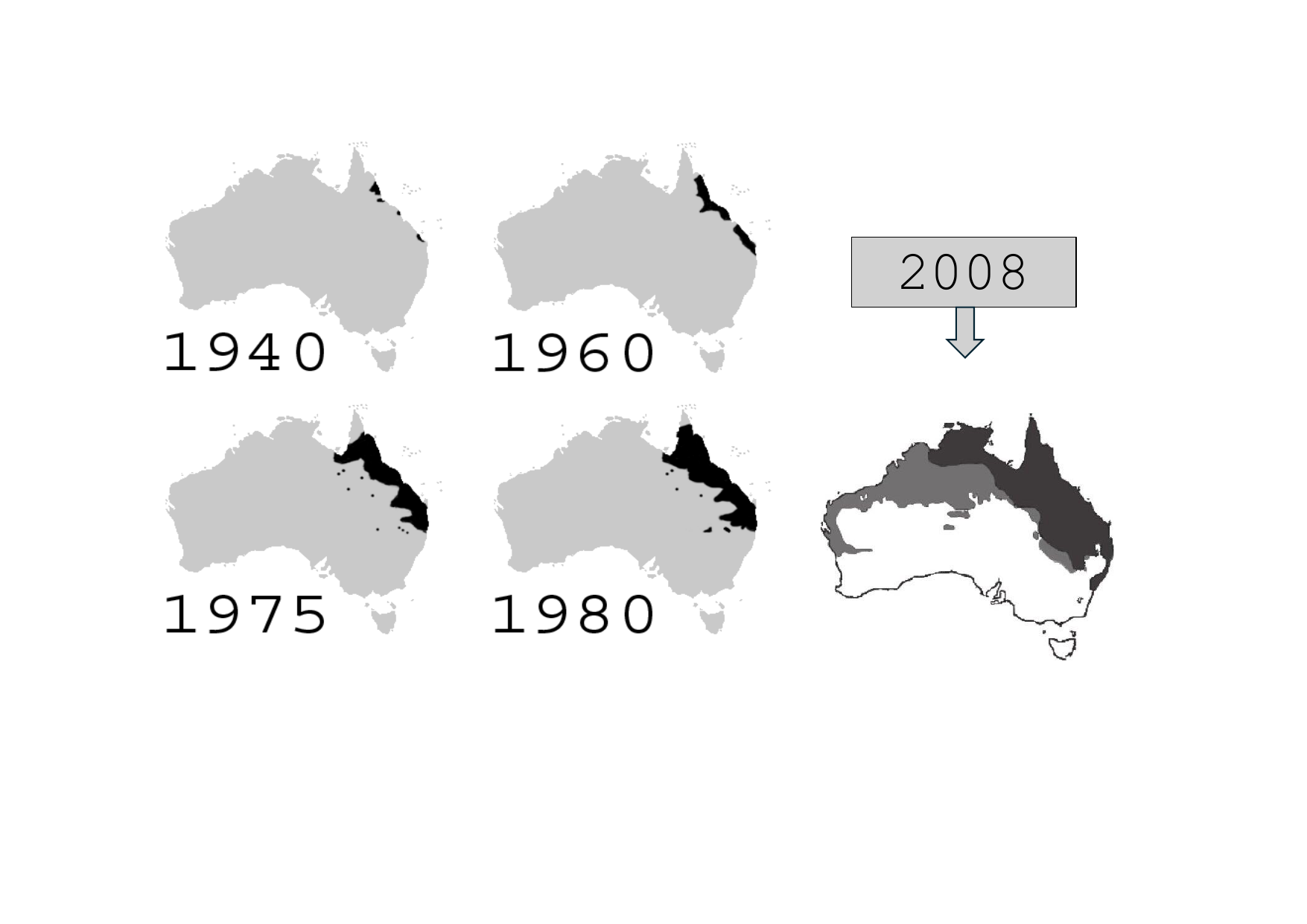}
\vspace{-2.6cm}

{\small Fig. 2. Evolution of the  population range of cane toads over the years  (graphs taken from the internet) }

\end{center}

\bigskip

Based on field observations, biologists have noticed  several distinctive behaviours of the invading cane toads, which include:
\begin{itemize} 
\item[(a)] toads at the front have longer legs and move faster \cite{P-Nature},
\item[(b)]  toads can invade  into much less favourable territory  with lower speed, including temporary front retreats \cite{M-BI}. 
\end{itemize}

\subsection{Reaction-diffusion models with traits}
 
 The observation that toads at the front have longer legs and move faster \cite{P-Nature} is an example of non-uniform space-trait distribution in an invasive population. Therefore it is natural to assume that the population density function $u$ depends on a certain trait denoted by $\theta$, apart from its dependence on time $t$ and spatial location $x$. Such an approach was taken in several recent works; see, for example, \cite{BH, BHR,  BCHK, AGHR}.

More precisely, the model in \cite{BH} has the form
\begin{equation}\label{traits-fb}
u_t=\theta u_{xx}+u_{\theta\theta}+f(u)\ \ \mbox{ for } t>0,\ x\in\mathbb{R},\ \theta\geq \bar\theta,
\end{equation}
 where $f(u)$ is a bistable function, and $\bar\theta>0$ is a constant.  So in this model the variable $\theta$ stands for a trait that represents the spatial dispersal rate of the species. 
 In \cite{BHR}, a special monostable $f(u)$ is used, and both local and nonlocal crowding effects are considered; more precisely, the following model is analysed for local crowding:
 \begin{equation}\label{traits-kpp-loc}
 u_t=\theta u_{xx}+u_{\theta\theta}+u(1-u)\ \  \mbox{ for } t>0,\ x\in\mathbb{R},\ \theta\geq \bar\theta,
 \end{equation}
 and for nonlocal crowding, \eqref{traits-kpp-loc} is modified to
 \begin{equation}\label{traits-kpp-nonloc}
 u_t=\theta u_{xx}+u_{\theta\theta}+u(1-\int_{\bar\theta}^\infty ud\theta)\ \ \mbox{ for } t>0,\ x\in\mathbb{R},\ \theta\geq \bar\theta.
 \end{equation}
 A common feature shared by \eqref{traits-fb}, \eqref{traits-kpp-loc} and \eqref{traits-kpp-nonloc} is that the species propagates in space with a superlinear rate in time, of the order $t^{3/2}$ as $t\to\infty$; see \cite{BH, BHR} for details, and also see \cite{BCHK} for a variation of \eqref{traits-kpp-loc} where the intrinsic growth rate 1 is replaced by $1-m(\theta)$. 
 
 In \cite{AGHR}, a bistable growth function was used with the trait $\theta$ representing the Allee threshold level, which varies over a finite interval $(\theta_*, \theta^*)$ in $\mathbb{R}$, and the model has the form
 \begin{equation}\label{traits-Allee}
 u_t=d u_{xx}+\alpha u_{\theta\theta}+u(\int_{\theta_*}^{\theta^*} ud\theta-\theta)(1-\int_{\theta_*}^{\theta^*} ud\theta) \ \ \mbox{ for } t>0,\ x\in\mathbb{R},\ \theta\in (\theta_*,\theta^*).
 \end{equation}
 Here $d$ and $\alpha$ are fixed positive constants. Interesting dynamical behaviour of \eqref{traits-Allee} was established in \cite{AGHR}, although many questions remain to be answered.

 \subsection{A free boundary model for super invaders} 
In \cite{d2024}, the following variation of  \eqref{free-bound-0} was considered,
\begin{equation}\label{a}
\begin{cases}
u_t-du_{xx}=f(u),  & t>0,\; g(t)<x<h(t),\\
u(t,g(t))=u(t,h(t))=\delta, & t>0,\\
g'(t)=-\frac{d}{\delta}u_{x}(t,g(t)), &  t>0,\\
h'(t)=-\frac{d}{\delta}u_x(t,h(t)), &  t>0,\\
-g(0)=h(0)=h_0, u(0,x)=u_0(x), & -h_0\leq x\leq h_0.
\end{cases}
\end{equation}
Similar to \eqref{free-bound-0}, here $u(t,x)$ stands for the population density and $[g(t), h(t)]$ represents the population range.
The initial function   $u_0$ is assumed to belong to  
$${X}(h_0):=\{\phi\in C^2([-h_0, h_0]): \phi(\pm h_0)=\delta, \ \phi>0 \ \text{in} \ [-h_0, h_0]\}.$$
While the population range in \eqref{free-bound-0} always expands as time increases, this is no longer the case for \eqref{a}, where the fronts $x=h(t)$ and $x=g(t)$ 
 may advance or retreat as time increases. The constant $\delta\in(0,1)$ in \eqref{a} represents the species' preferred density, and the equations governing the evolution of the free boundaries $x=g(t)$ and $x=h(t)$, namely the second, third and fourth equations in \eqref{a}, can be deduced from the biological assumption that the species maintains its preferred  density $\delta$ at the range boundary by advancing or retreating the fronts; see \cite{d2024} for a detailed deduction and more background. This assumption means that for members of the species near the range boundary, their movement in space is governed by two factors: (a) random movement similar to all other members, and (b) advance or retreat to keep the density at the front at the preferred level $\delta$.

When the growth function satisfies ${\bf (f_m)}$, it was shown in \cite{d2024} that the unique solution $(u(t,x), g(t), h(t))$ of  \eqref{a} exhibits successful spreading for all admissible initial data (namely for every $u_0\in X(h_0)$ with $h_0>0$). We will show in this paper that successful spreading also happens consistently when $f$ satisfies ${\bf (f_A)}$ and $\delta\in (\theta^*_f, 1)$, and so  \eqref{a} is capable of exhibiting the dynamics of super invaders.

\subsection{Main results and structure of the paper}
The main purpose of this paper is to demonstrate that successful spreading is always achieved by \eqref{a}  when $f$ is of Allee type, namely ${\bf (f_{A})}$ holds, provided that the preferred density $\delta$ at the range boundary satisfies $\delta\in (\theta^*_f, 1)$.
  In other words, the Allee effects exhibited in \eqref{Cauchy} and \eqref{free-bound-0} can be eliminated if the species chooses the strategy to advance or retreat the  boundary of the population range to keep the density there at a level $\delta\in (\theta^*_f, 1)$.

  \begin{theorem}\label{th1.3}
	Suppose that $f$ is of Allee type ${\bf (f_A)}$ and $\delta\in (\theta_f^*, 1)$. Then for every initial function $u_0 \in X(h_0)$, \eqref{a} has a unique solution $(u(t,x), g(t), h(t))$ defined for all $t>0$. Moreover,  as $t \to \infty$,
	\begin{align*}
		(g(t), h(t)) \to (-\infty, \infty), \quad u(t, x) \to 1 \quad \text{locally uniformly in } x \in \mathbb{R}.
	\end{align*}
Furthermore, there exist some constants $\hat{h}, \hat{g} \in \mathbb{R}$ such that
 \begin{align*}
 	&\lim_{t \to \infty}[h(t) - c^* t] = \hat{h}, \quad \lim_{t \to \infty} h'(t) = c^*,\\
 	&\lim_{t \to \infty}[g(t) + c^* t] = \hat{g}, \quad \lim_{t \to \infty} g'(t) = -c^*,\\
 		&\lim_{t \to \infty} \sup_{x \in [0, h(t)]} |u(t, x) - q^*( h(t) - x)| = 0,\\
 			&\lim_{t \to \infty} \sup_{x \in [g(t), 0]} |u(t, x) - q^*( x - g(t))| = 0,
 \end{align*}
where $(c^*, q^*)$ is the unique solution pair $(c,q)$ of 
		\begin{equation}\label{semi1}
			\begin{cases}
				dq'' - cq' + f(q) = 0, \quad q > 0 \quad \text{in } (0,\infty), \\
				q(0) = \delta, \quad q(\infty) = 1, \quad q'(0) = \frac{c \delta}{d},\quad q'>0 \mbox{ in } [0,\infty).
			\end{cases}
		\end{equation}
	
\end{theorem}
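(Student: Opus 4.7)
I would split the proof into four stages: local well-posedness with continuation to all $t > 0$, existence and uniqueness of the semi-wave $(c^*, q^*)$, the spreading assertion that $h(t) - g(t) \to \infty$ with $u \to 1$ locally, and the sharp asymptotics through the semi-wave profile. Local existence of a classical solution is obtained by the standard straightening $y = (2x - g(t) - h(t))/(h(t) - g(t))$, which converts \eqref{a} into a quasi-linear parabolic problem on $[-1, 1]$, solved by contraction in parabolic H\"older spaces exactly as in \cite{d2024}; the assumption $f \in C^1$ suffices. Since $f(u) < 0$ for $u > 1$, the constant $M := \max\{\|u_0\|_\infty, 1\}$ is a supersolution in the interior and dominates $\delta$ on the parabolic boundary, so $u \leq M$ throughout; parabolic boundary estimates then bound $|h'|, |g'|$ and extend the solution to all $t > 0$. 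Comparing with the ODE $\dot V = f(V),\ V(0) = M$, which decreases monotonically to $1$, further gives the sharper $\limsup_{t \to \infty} u(t, x) \leq 1$ uniformly in $x$.

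\textbf{The semi-wave.} For \eqref{semi1} I would use phase-plane shooting. Setting $p = q'$, the equilibrium $(1,0)$ is a saddle (since $f'(1) < 0$) with negative eigenvalue $\lambda_-(c) = \bigl(c - \sqrt{c^2 - 4 d f'(1)}\bigr)/(2d)$, so for each $c > 0$ there is a unique trajectory entering $(1, 0)$ tangentially from $\{q < 1,\ p > 0\}$. Tracking this trajectory backward until it first reaches $q = \delta$ gives a slope $p_s(c)$; the free-boundary condition $q'(0) = c \delta/d$ then asks for $p_s(c) = c\delta/d$. A comparison argument between trajectories for different $c$ shows $c \mapsto p_s(c)$ is monotone decreasing while $c \mapsto c \delta/d$ is strictly increasing, so an intermediate value argument produces a unique intersection $c^*$. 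Positivity $c^* > 0$ is equivalent to $F(\delta) = \int_0^\delta f(s)\,ds > 0$, which is precisely the assumption $\delta > \theta^*_f$; the orientation of the stable manifold gives $q^{*\prime} > 0$ automatically.

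\textbf{Spreading.} The heart of the proof is showing $h(t) - g(t) \to \infty$ and $u \to 1$ locally uniformly. I would argue by contradiction: assume $L(t) := h(t) - g(t)$ stays bounded. Passing to the moving frame $\tilde u(t, y) = u(t, y + g(t))$ on $[0, L(t)]$, parabolic regularity and Arzel\`a-Ascoli produce, along some $t_n \to \infty$, a limit $\tilde u_\infty$ on $[0, L_\infty]$. A Lyapunov-type identity
\begin{equation*}
\frac{dE}{dt} = -\int_{g(t)}^{h(t)} u_t^2 \, dx - \frac{\delta^2}{2d}\bigl[(h'(t))^3 - (g'(t))^3\bigr], \quad E(t) := \int_{g(t)}^{h(t)} \bigl[\tfrac{d}{2} u_x^2 - F(u)\bigr] dx + F(\delta)(h - g),
\end{equation*}
derived by differentiating under the moving boundary and using the free-boundary conditions, combined with boundedness of $u$ and $L$, should force $u_t \to 0$ and $h'(t_n), g'(t_n) \to 0$ along a subsequence. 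Hence $\tilde u_\infty$ satisfies $d u'' + f(u) = 0$ on $[0, L_\infty]$ with $u(0) = u(L_\infty) = \delta$ and $u'(0) = u'(L_\infty) = 0$. The first integral $\tfrac{d}{2}(u')^2 + F(u) = F(\delta)$, together with the shape of $F$ (strictly increasing on $[\theta, 1]$ from $F(\theta) \leq 0$ to $F(1) > F(\delta)$, strictly decreasing on $[1, \infty)$), forces any such $u$ either to be constant $\delta$ (impossible since $f(\delta) > 0$) or to reach some $\tilde \delta > 1$ with $F(\tilde \delta) = F(\delta)$, contradicting $\limsup u \leq 1$. Thus $L(t) \to \infty$, and a moving sub-solution built from a compactly supported truncation of $q^*$ then yields $u \to 1$ locally.

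\textbf{Sharp asymptotics and the main obstacle.} With spreading in hand, I would construct upper and lower barriers of the form $q^*\bigl(h(t) - x \pm \eta(t)\bigr)$ near $h$ and symmetric ones near $g$, where $\eta(t)$ is a small correction whose ODE comes from linearising the free-boundary condition about $q^*$. Stability of $q^*$, proved by comparison using the exponential decay of $1 - q^*(s)$ at infinity, then delivers $h(t) - c^* t \to \hat h$, $h'(t) \to c^*$, and uniform convergence of $u(t, \cdot)$ to $q^*(h(t) - x)$ on $[0, h(t)]$; the assertions near $g$ follow by symmetry. The hardest step is clearly the spreading argument: the sign of $(h')^3 - (g')^3$ in $E'$ is indefinite on retreating segments, so one must either modify $E$ to restore monotonicity along retreat or separately rule out sustained retreat by exploiting that the boundary density $\delta > \theta^*_f$ continually pushes the fronts outward; I expect combining a refined energy identity with the Allee-threshold condition $\delta > \theta_f^*$ is what makes the compactness argument close.
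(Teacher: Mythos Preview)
Your spreading argument has the gap you yourself flag: the energy identity does not give a Lyapunov function because the boundary term $-\tfrac{\delta^2}{2d}\bigl[(h')^3-(g')^3\bigr]$ can have the wrong sign when a front retreats, and you do not actually show how to repair this. Without a monotone quantity or some other mechanism to force $u_t\to 0$ and $h',g'\to 0$ along a subsequence, the compactness step that produces a stationary limit on a bounded interval is unjustified, and the contradiction never closes.

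The paper avoids this difficulty entirely by a much simpler reduction. The key observation, which you do not use, is that $\delta>\theta_f^*$ means $\int_0^\delta f>0$, so one can choose a bistable $\hat f\le f$ with zeros at $0,\hat\theta,\delta$ and $\int_0^\delta\hat f>0$. The Aronson--Weinberger bistable wave $q_0$ for $\hat f$ connects $0$ to $\delta$ with some speed $c_0>0$, and $\underline u(t,x):=\max\{q_0(c_0t-x-L),\,q_0(c_0t+x-L)\}$ is a weak subsolution on the whole line, stays in $[0,\delta]$ (hence lies below $u$ at the free boundaries), and converges uniformly to $\delta$. This forces $u(t,\cdot)>\theta$ for all large $t$; an ODE comparison with $w'=f(w)$ then lifts $u$ to $\delta$ in finite further time, after which the constant $\delta$ is itself a subsolution, so $u(t,x)\ge\delta$ for all $t\ge T_0$ and $x\in[g(t),h(t)]$.

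Once this floor is in place, one redefines $f$ on $[0,\delta)$ to any $C^1$ monostable extension $\tilde f$ with $\tilde f=f$ on $[\delta,\infty)$. Since $u\ge\delta$, the solution never sees the modified region, so $(u,g,h)$ solves the same free-boundary problem with $\tilde f$, and every conclusion of the theorem---spreading, the semi-wave pair $(c^*,q^*)$, and the sharp asymptotics $h(t)-c^*t\to\hat h$, $u(t,x)\to q^*(h(t)-x)$---is inherited verbatim from the monostable theory of \cite{d2024}. In particular there is no need for a direct phase-plane construction of $(c^*,q^*)$ or for barrier arguments near the fronts: these are already contained in the cited monostable result, and $(c^*,q^*)$ depends only on $f|_{[\delta,1]}$ anyway. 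Your route might be salvageable with substantial extra work on the energy, but the paper's reduction is both shorter and sidesteps the indefinite-sign obstacle completely.
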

\bigskip

To make the reading of this paper more accessible, we have tried to avoid as much mathematical technicalities as possible before the end of Section 2, and so most of the technical contents are contained in Section 3.

To emphasise the new dynamical feature of \eqref{a}, the part of  Theorem \ref{th1.3} on the long-time dynamical behaviour of \eqref{a} is proved in Section 2, while the proof of the existence and uniqueness  result  for \eqref{a} is left to Section 3, where a more general system \eqref{c} will be treated.
   In Section 2, we also discuss, through numerical simulations, what happens to \eqref{a} if the assumption $\delta\in (\theta_f^*, 1)$ is not satisfied.  

In Section 3, we first prove the existence and uniqueness of a solution to the more general system \eqref{c} (which implies the corresponding conclusion for \eqref{a} stated in Theorem \ref{th1.3}), then we
use Theorem \ref{th1.3}  to show the elimination of Allee effects for \eqref{c} (see Theorem \ref{th3.1}).
Moreover, we prove some further theoretical results on the long-time dynamics of  \eqref{c} (see Theorems  \ref{th3.2} and \ref{th3.3}) and confirm several conjectures listed in Section 2. These imply, in particular, that in Theorem \ref{th1.3}, if the assumption on $\delta$ is changed to $\delta>1$ then vanishing always happens, and a transition phenomenon happens when $\delta=1$. While some of the techniques in this section are based on existing ones, several new techniques are introduced in the proof of Lemma \ref{lemma2.5} and  in subsections 3.5 and  3.6, which may find applications in future work.
\medskip

We refer to \cite{BDLZ, CLZ, FLWW, KMY} for a small sample of other variations of the free boundary model \eqref{free-bound-0}. See also \cite{BLM, BLM2025, LFZ} for rather different free boundary models for the evolution of range boundary. They are all for very different purposes from the one  in this paper.

\section{Longtime dynamics of \eqref{a}}

\subsection{Proof of Theorem \ref{th1.3}} In order not to interrupt thoughts with too much technicalities, and also to make the reading more accessible to biologically oriented readers, the existence and uniqueness of the solution to \eqref{a} is proved in  Section 3, where a more general problem than \eqref{a} will be considered. 

Here we only prove the longtime behavior of the solution. The strategy is to reduce the problem to the case of \eqref{a} with a monostable growth function, and then all the conclusions will follow from \cite{d2024}.

A key step is to show that when ${\bf (f_A)}$ holds and $\delta\in (\theta^*_f, 1)$,  there exists $T_0>0$ such that
 \begin{equation}\label{>delta} u(t, x) \geq \delta \mbox{  for $ t\geq T_0$ and $x\in [g(t), h(t)]$}.
 \end{equation}
 
 Once \eqref{>delta} is established,  the conclusions will follow from the monostable case in \cite{d2024} as explained below. Let 
 \[
 (\tilde u(t,x), \tilde g(t),\tilde h(t)):=(u(T_0+t, x), g(T_0+t), h(T_0+t)).
 \]
 Then clearly 
 \begin{equation}\label{b}
\begin{cases}
\tilde u_t-d \tilde u_{xx}=f(\tilde u), & t>0,\; \tilde g(t)<x< \tilde h(t),\\
\tilde u(t, \tilde g(t))=\tilde u(t,\tilde h(t))=\delta, &  t>0,\\
\tilde g'(t)=-\frac{d}{\delta} \tilde u_{x}(t, \tilde g(t)),& t>0,\\
\tilde h'(t)=-\frac{d}{\delta}\tilde u_x(t, \tilde h(t)), & t>0,\\
\tilde g(0)=g(T_0),\ \tilde h(0)=h(T_0), \  \tilde u(0,x)=u(T_0, x), & g(T_0)\leq x\leq h(T_0).
\end{cases}
\end{equation}

 We now redefine $f(s)$ for $s\in (0, \delta)$ to obtain a new $\tilde f(s)$ such that $\tilde f$ satisfies ${\bf (f_m)}$ and $\tilde f(s)=f(s)$ for $s\geq \delta$.
 Since $\tilde u\geq \delta$ by \eqref{>delta}, we see that $(\tilde u(t,x), \tilde g(t),\tilde h(t))$ satisfies \eqref{b} with $f$ replaced by $\tilde f$. Therefore we can use the results in \cite{d2024} to conclude that $(\tilde u(t,x), \tilde g(t),\tilde h(t))$ has all the properties stated in Theorems 1.2 and 1.4 there, which is equivalent to saying that $(u, g, h)$ has all the properties stated in Theorem \ref{th1.3}. Let us note that the requirement $h(0)=-g(0)$ in \eqref{a} 
 (and in \cite{d2024}) is for convenience only; we can easily recover $\tilde h(0)=-\tilde g(0)$ in \eqref{b}  by a suitable translation of the variable $x$.
 \medskip
 
 It remains to prove \eqref{>delta}. We will need a well known result (see, e.g., \cite[Theorem 4.1]{AW1978}) on traveling wave solutions of 
 \begin{equation}\label{F}
	u_t - d u_{xx} = F(u), \quad t > 0, \; x \in \mathbb{R},
	\end{equation}
with
 \( F \in C^1([0, \infty)) \) satisfying the following (unnormalised) bistable condition:
\begin{equation*}
	(\bf{F_b}):\ \ \ \ 
	\left\{
	\begin{aligned}
		& F(0) = F(P) = F(Q) = 0 \ \text{with } 0<P <Q<\infty, \quad F'(0)<0,\ F'(Q) < 0, \\
		& F < 0 \ \text{in } (0, P)\cup (Q,\infty), \quad F > 0 \ \text{in } (P, Q), \ \int_{0}^{Q} F(s) \, ds > 0.
	\end{aligned}
	\right.
\end{equation*}

\begin{lemma}\cite{AW1978}\label{lemma3.1}
	Let \( F \) satisfy \(\mathbf{(F_b)}\). Then there exists a constant \( c_* > 0 \) such that \eqref{F}
	has a traveling wave solution $u(t,x)=\phi(x+c_*t)$ with speed \( c_*>0 \); more precisely,   the  problem
	\begin{equation*}
		\begin{cases}
			d \phi''(x) - c_* \phi'(x) + F(\phi(x)) = 0, \quad x \in \mathbb{R}, \\
			\phi(-\infty) = 0, \quad \phi(+\infty) = Q,
		\end{cases}
	\end{equation*}
	admits a solution \( \phi \in C^2(\mathbb{R}) \) which is strictly increasing.
\end{lemma}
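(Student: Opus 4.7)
My plan is to establish Lemma \ref{lemma3.1} by the classical phase-plane shooting argument for bistable traveling waves. I would first recast the ODE $d\phi'' - c\phi' + F(\phi) = 0$ as the planar system $\phi' = \psi$, $d\psi' = c\psi - F(\phi)$, and examine the equilibria $E_0 = (0,0)$ and $E_Q = (Q, 0)$. Because $F'(0) < 0$ and $F'(Q) < 0$, the characteristic equation $d\lambda^2 - c\lambda + F'(e) = 0$ at either equilibrium has two real roots of opposite sign for every $c \in \mathbb{R}$, so $E_0$ and $E_Q$ are both hyperbolic saddles. A heteroclinic orbit from $E_0$ to $E_Q$ lying in the strip $S := \{(\phi, \psi) : 0 < \phi < Q,\ \psi > 0\}$ is exactly the sought profile: reading it as $\phi = \phi(x)$ yields a $C^2$ function with $\phi(-\infty) = 0$, $\phi(+\infty) = Q$, and $\phi' > 0$.

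For each $c$, I would track the forward orbit $\Gamma_c$ leaving $E_0$ along its first-quadrant unstable branch, and classify $c$ as \emph{undershoot} if $\Gamma_c$ meets $\{\psi = 0\}$ at some $\phi \in (0, Q)$, \emph{overshoot} if $\Gamma_c$ crosses $\{\phi = Q\}$ with $\psi > 0$, and \emph{hit} if $\Gamma_c$ converges to $E_Q$. Continuous dependence on $c$ makes the undershoot and overshoot sets $\mathcal{U}, \mathcal{O} \subset \mathbb{R}$ both open, so the complementary hit set contains at least one value $c_*$ as soon as both $\mathcal{U}$ and $\mathcal{O}$ are non-empty. The central tool is the energy identity obtained by multiplying the ODE by $\phi'$:
\begin{equation*}
\frac{d}{dx}\Bigl[\tfrac{d}{2}\phi'(x)^2 + \int_0^{\phi(x)} F(s)\, ds\Bigr] = c\, \phi'(x)^2.
\end{equation*}
Since the bracket equals $0$ at $x = -\infty$ along $\Gamma_c$ and would have to equal $\int_0^Q F(s)\, ds > 0$ at any hit or overshoot, the identity forces $c > 0$; in particular $c_* > 0$ and $(-\infty, 0] \subset \mathcal{U}$. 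On the other side, for $c$ sufficiently large one produces an overshoot by comparing $\Gamma_c$ near $E_0$ with the linear orbit along the unstable eigenvector, whose slope $\psi/\phi = \lambda_+(c) \sim c/d$ diverges as $c \to \infty$ and drives the orbit across $\phi = Q$ before the nonlinear effects can reverse it. Strict monotonicity of the resulting profile is automatic because $\psi$ cannot vanish along the interior of a hit orbit.

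The step I expect to be the main obstacle is establishing $\mathcal{O} \neq \emptyset$ rigorously, because the planar system has a third equilibrium $(P, 0)$ inside $S$ that can trap $\Gamma_c$ as a spiral or node and obscure its long-term behaviour for moderate values of $c$. A clean alternative route, which I would use if the direct overshoot argument proves awkward, is the truncation device of Aronson-Weinberger: modify $F$ near $0$ to obtain a monostable-type $F^K$, invoke the well-developed monostable traveling wave theory to produce a wave of some speed $c_K$ for $F^K$, and pass to the limit as the modification is removed, using the energy identity to extract uniform a priori bounds and phase-plane compactness. The limit profile is the desired heteroclinic, and uniqueness of $c_*$ then follows from a standard sliding / maximum-principle argument exploiting the saddle geometry at $E_Q$.
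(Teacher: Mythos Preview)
The paper does not prove this lemma at all: it is stated with the citation \cite{AW1978} (see the text ``a well known result (see, e.g., [Theorem 4.1]\{AW1978\})'') and is invoked as a black box in the proof of \eqref{>delta} and in subsection~3.4. So there is no ``paper's own proof'' to compare against; your task here was simply to recognise the result as classical and cite it.

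That said, your sketch is essentially the standard phase-plane shooting argument that underlies the Aronson--Weinberger and Fife--McLeod treatments, and the outline is sound. The one place where your write-up is loose is the overshoot argument for large $c$: saying that the unstable eigendirection has slope $\lambda_+(c)\sim c/d$ is correct, but ``drives the orbit across $\phi=Q$ before the nonlinear effects can reverse it'' is not yet a proof. The clean way is to work with $\psi$ as a function of $\phi$ along $\Gamma_c$ (valid while $\psi>0$), obtaining $d\,\psi\,d\psi/d\phi = c\psi - F(\phi)$, and then exhibit an explicit lower barrier of the form $\psi \ge \varepsilon\phi$ (or a linear function of $\phi$) that the orbit cannot cross when $c$ is large enough relative to $\sup_{[0,Q]}|F|$. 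This replaces the heuristic with a differential inequality and avoids any interference from the interior equilibrium $(P,0)$. Your alternative truncation-to-monostable route is also legitimate and is closer in spirit to how \cite{AW1978} actually organises the construction.
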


We are now ready to prove \eqref{>delta}.
	Since $\delta\in (\theta^*_f, 1)$,  we have $\int_0^\delta f(s)ds>\int_0^{\theta_f^*}f(s)ds=0$. Therefore we are able to choose a function $\hat{f} \in C^1$  sufficiently close to $f$ in $L^\infty$ such that $\hat{f}(s) \leq f(s)$ for $s \geq 0$, and  $\hat f$ satisfies $\mathbf{(F_b)}$  with $(P, Q)=(\hat\theta, \delta)$ for some $\hat\theta\in [\theta, \delta)\cap (0,\delta)$. In particular, we have $\hat{f}(0) = \hat{f}(\hat\theta) = \hat{f}(\delta)=0$ and $\hat{f}(s) > 0$ for $s \in (\hat\theta, \delta)$.  Then, by Lemma \ref{lemma3.1}, the following problem
	\begin{equation}\label{fg}
		\left\{
		\begin{aligned}
			&dq'' - cq' + \hat{f}(q) = 0, \quad z \in \mathbb{R}, \\
			&q(-\infty) = 0, \quad q(\infty) = \delta,
		\end{aligned}
		\right.
	\end{equation}
	has a solution pair $(c,q)=(c_0,q_0)$ with $c_0>0$ and $q_0(\cdot)$ strictly increasing.
	
	Next, we will make use of $q_0(z)$ to construct a lower solution to bound $u(t,x)$ from below. Since $q_0(-\infty) = 0$, we can choose $L > 0$ sufficiently large such that $q_0(h_0 - L) \leq \min_{x \in [-h_0, h_0]} u_0(x)$. Define
	\[
	\underline{u}(t, x) := \max\{q_0(ct - x - L), q_0(ct + x - L)\}.
	\]
	Due to $\hat {f} \leq f$ and  $\underline{u}_x(t, 0^-) = -q_0'(ct - L)\leq 0 \leq q_0'(ct - L) = \underline{u}_x(t, 0^+)$, it is easy to see
	 that $\underline{u}(t, x)$ satisfies (in the weak sense)
	\[
	\underline u_t \leq  d \underline u_{xx} + f(\underline u)  \quad \text{for } t > 0, \; x \in \mathbb{R}.
	\]
	Additionally, we have
	\[
		0 \leq \underline{u}(t, x) \leq  \delta \quad \text{for } x \in \mathbb{R},\ \mbox{which implies } \underline u(t, x)\leq u(t,x) \mbox{ for } t>0,\ x\in\{g(t), h(t)\},\]
		and
		\[
		\underline{u}(0, x) = \max\{q_0(-x - L), q_0(x - L)\} \leq q_0(h_0 - L) \leq u_0(x) \quad \text{for } x \in [-h_0, h_0].
	\]
	Therefore we can apply the standard comparison principle over $\{(t, x) : t > 0, x \in [g(t), h(t)]\}$ to deduce  $u(t, x) \geq \underline{u}(t, x)$ in this region. 
	
	Moreover, since $q_0(\infty) = \delta$ and $q_0$ is increasing, we conclude that for large $t > 0$,
	\[
	\delta \geq \sup_{x \in \mathbb{R}} \underline{u}(t, x) \geq \inf_{x \in \mathbb{R}} \underline{u}(t, x) \geq q_0(ct - L) \to \delta\quad \text{as } t \to \infty.
	\]
	Hence
	\[
	\lim_{t \to \infty} \|\underline{u}(t,\cdot) - \delta\|_{L^\infty(\mathbb{R})} = 0.
	\]
	Since $\delta > \theta$, there exists  $t_0 > 0$ such that
	\[
{u}(t, x) \geq\underline{u}(t, x) >\theta \quad \mbox{ for }  t \geq t_0, \; x \in [g(t), h(t)].
	\]
	
	Now, let $m_0 := \min_{x \in [g(t_0), h(t_0)]} u(t_0, x)$. Then $\delta \geq m_0 >\theta$. Consider the auxiliary ODE problem:
	\[
	w' = f(w) \mbox{ for } t > t_0, \qquad w(t_0) = m_0.
	\]
	Since $f > 0$ in $(\theta, 1)$, the solution $w(t)$ is increasing in $t$, and $w(t) \to 1$ as $t \to \infty$. Thus, there exists $T_\delta \geq  t_0$ such that $w(T_\delta) = \delta$ and $m_0 \leq w(t) \leq \delta$ for $t \in [t_0, T_\delta]$. By the comparison principle over the region $\{(t, x) : t_0 \leq t \leq T_\delta, \; g(t) \leq x \leq h(t)\}$, we obtain $u(t, x) \geq w(t)$ in this region. In particular,
	\[
	u(T_\delta, x) \geq w(T_\delta) = \delta \quad \mbox{ for } x \in [g(T_\delta), h(T_\delta)].
	\]
	
	Comparing $u(t, x)$ with the constant function $\underline{u}_1(t, x) \equiv \delta$ over $\{(t, x) : T_\delta \leq t < \infty, \; g(t) \leq x \leq h(t)\}$, we conclude 
	by the usual comparison principle that $u(t, x) \geq \delta$ for $x \in [g(t), h(t)]$ and $t \in [T_\delta, \infty)$.  This proves \eqref{>delta}, and the proof of Theorem \ref{th1.3} is finished except that the existence and uniqueness of the solution to \eqref{a} will follow from a more general result proved in Section 3 below.	\hfill $\Box$

	\subsection{Numerical simulation and conjectures}
	
One naturally wonders what happens if $\delta\not\in (\theta_f^*, 1)$ in \eqref{a}. Our numerical simulation\footnote{In the simulations here, we use standard finite difference method for the reaction-diffusion equation, combined with the front tracking method for the moving boundary. Such a combination has been developed and successfully used in \cite{LL, LDL, KLSD} for related models. One could also combine finite difference with the front fixing method of \cite{PCJ}. Some comparisons of these two approaches can be found in \cite{LL}.} on \eqref{a} with monostable $f$, bistable $f$ and combustion $f$ suggests the following conjectures.

\medskip

\noindent
{\bf Conjecture 1:} If $\delta>1$ and $f$ satisfies ${\bf (f_m)}$, or ${\bf (f_b)}$, or ${\bf (f_c)}$, then there exists $\xi^*\in\mathbb R$ depending on $f$ and $u_0$ such that 
\[
\lim_{t\to\infty} g(t)=\lim_{t\to\infty} h(t)=\xi^*, \ \lim_{t\to\infty} u(t,x)=\delta \mbox{ uniformly for } x\in [g(t), h(t)].
\]

Let us note that  the above conclusions indicate that the population range vanishes as $t\to\infty$, and we will called this the {\bf vanishing case} henceforth.\footnote{This contrasts interestingly to the vanishing case for \eqref{free-bound-0}, where as $t\to\infty$, $u(t,x)\to 0$ uniformly and $[g(t), h(t)]$ converges to a finite interval $[g_\infty, h_\infty]\supset [-h_0, h_0]$.}

\medskip

In our numerical simulations, for monostable $f$, we take
\begin{equation}\label{mono}
f(u)=u(1-u);
\end{equation}
for bistable $f$, we take
\begin{equation}\label{bi}
f(u)=u(u-\theta)(1-u) \mbox{ with } \theta\in (0, 1/2);
\end{equation}
and for combustion $f$, we take,  with $ \theta\in (0,1)$ and $ \beta=0.1$,
\begin{equation}\label{comb}
f(u) =
\begin{cases} 
    0 & \text{ for } 0\leq u \leq \theta, \\ 
    \beta (u - \theta)^2 (1 - u) & \text{ for } u > \theta.
\end{cases} 
\end{equation}
For the initial function,  most of the times we take 
\begin{equation}\label{ini}
u_{0}(x)=\delta+ \alpha \cos(\frac{\pi x}{2h_{0}})\  \mbox{ with } \alpha> -\delta,\ h_0=10.
\end{equation}

Since $u_0(x)$ is even, the solution $(u(t,x), g(t),h(t))$ is also even in $x$ for any time $t>0$, and so $g(t)=-h(t)$ and $u(t, -x)=u(t,x)$. Therefore we will only consider
$u(t,x)$ for $x\geq 0$. This simplifies the simulation and discussions. Note that in such a case, we always have $\xi^*=0$ in Conjecture 1. 

A sample of our numerical simulations is given in Fig. 3, where the graph of $x\to u(t,x)$ (for several time moments) and the graph of $t\to h(t)$ are given.
In these simulations, $\theta=0.2$ was taken for bistable and combustion $f$ given in \eqref{bi} and \eqref{comb}, respectively, and $u_0(x)$ is  given by \eqref{ini} with $\alpha=0.1$.

\begin{center}
\includegraphics[width=0.8\textwidth]{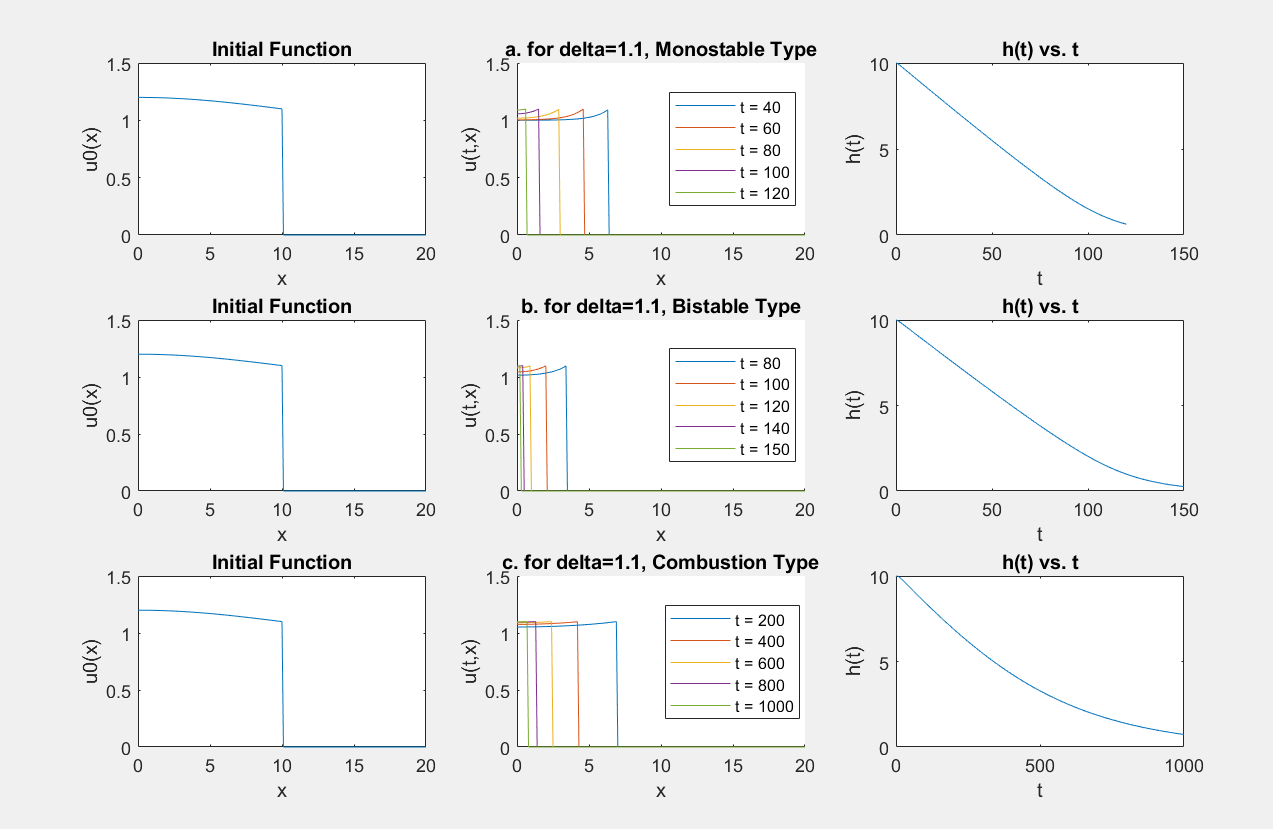}
\\
{\small Fig. 3. Simulation results with $\delta>1$: vanishing always happens.}
\end{center}

\begin{remark}\label{rm1}
Conjecture 1 will be confirmed by  Theorem \ref{th3.2} in Section 3, where we also prove that  when $\delta=1$, a new transition phenomenon between vanishing and successful spreading happens; see Theorem \ref{th3.3}.
\end{remark}

\medskip

\noindent
{\bf Conjecture 2:} When ${\bf (f_b)}$ holds and $\delta\in (\theta, \theta_f^*]$, the conclusions in Theorem \ref{th1.3} still hold except when $(u_0(x), -h_0, h_0)$ happens to be  a stationary solution of \eqref{a}.

\medskip

By a phase-plane analysis, it is easy to see that all the stationary solutions of \eqref{a} with  a bistable $f$ and $\delta\in (0, 1)$ are found as follows:

\begin{itemize}
\item[(i)] If { $\delta\in[\theta^*, 1)$}, then there are no stationary solutions.
\item[(ii)] If \ $\delta=\theta$, then the stationary solutions are: $(u, g, h)=(\theta, -h_0, h_0)$, $h_0>0$.
\item[(iii)] If { $\delta\in (0, \theta^*)\setminus \{\theta\}$}, then the solution $v(x)$ of the initial value problem
\[
dv''+f(v)=0,\; v'(0)=0, v(0)=\delta
\]
 is periodic in $x$  with $0<\min v<\theta<\max v<\theta^*$. 
 Let $L$ be  the minimal period of $v$; then
for every  integer $j\geq 1$,
\[
(v(x), -jL, jL) \mbox{ and } \Big(v(x+\frac L2), -(j-\frac 12)L, (j-\frac 12)L\Big)
\]
are stationary solutions of \eqref{a}.
There are no other stationary solutions.
\end{itemize}

Our numerical simulations for this case always exhibit successful spreading, even when we tried to start from a numerically obtained stationary solution, indicating that the stationary solutions are unstable. A sample of our simulation results for this case are given in Fig. 4, where $f(u)$ is given by \eqref{bi} and $(u_0(x), -h_0, h_0)$ is given by a numerically obtained stationary solution (for the given value of $\delta$).
\bigskip

\begin{center}
\includegraphics[width=0.8\textwidth]{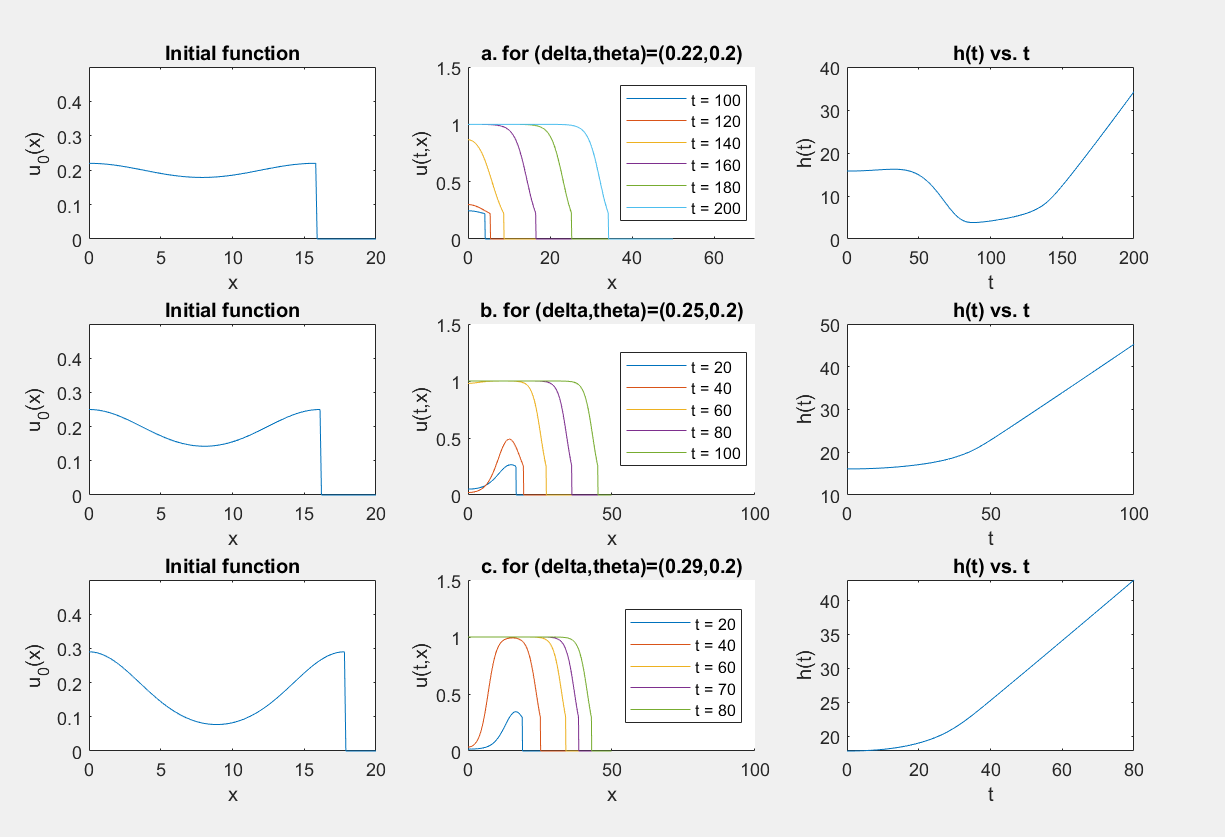}
\\
{\small Fig. 4. Simulation results for $\delta\in (\theta, \theta^*)=(0.2, 0.3101)$: spreading always happens.}
\end{center}

\bigskip

\noindent
{\bf Conjecture 3:} When {${\bf (f_b)}$ holds and  $\delta\in (0, \theta)$}, then, as $t\to\infty$, vanishing happens if the initial population is small and spreading happens if the initial population is large.

\medskip

Our simulation results for this case are listed in the table below, where $f(u)$ and $u_0(x)$ are given by \eqref{bi} and \eqref{ini}, respectively.
 \begin{table}[h]
    \centering
    {\small
    \begin{tabular}{ c c c c c}
        \toprule
      $\theta$ & $\delta$ & $\alpha$ & Longtime dynamics \\ 
        \midrule
        0.2 & 0.10 & 0.1935 & Spreading \\ 
         &  & 0.1934 & Spreading \\ 
         &  & \textbf{0.1933} & Vanishing \\ 
         &  & 0.1932 & Vanishing \\ 
        \midrule
        0.2 & 0.15 & 0.1045 & Spreading \\ 
         &  & 0.1044 & Spreading \\ 
         &  & 0.1043 & Spreading \\ 
         &  & \textbf{0.1042} & Vanishing \\ 
         &  & 0.1041 & Vanishing \\ 
        \midrule
        0.2 & 0.19 & 0.0235 & Spreading \\ 
         &  & 0.0234 & Spreading \\ 
        &  & 0.0233 & Spreading \\ 
        &  & 0.0232 & Spreading \\ 
         &  & 0.0231 & Spreading \\ 
         &  & \textbf{0.0230} & Vanishing  \\ 
        &  & 0.0229 & Vanishing \\ 
        \bottomrule
    \end{tabular}}
    \label{tab:alpha_behavior}
\end{table}

Some samples of the graphs of $x\to u(t,x)$ and $t\to h(t)$ obtained from the simulations listed in this table are given in Fig. 5.

\begin{center}
\includegraphics[width=0.8\textwidth]{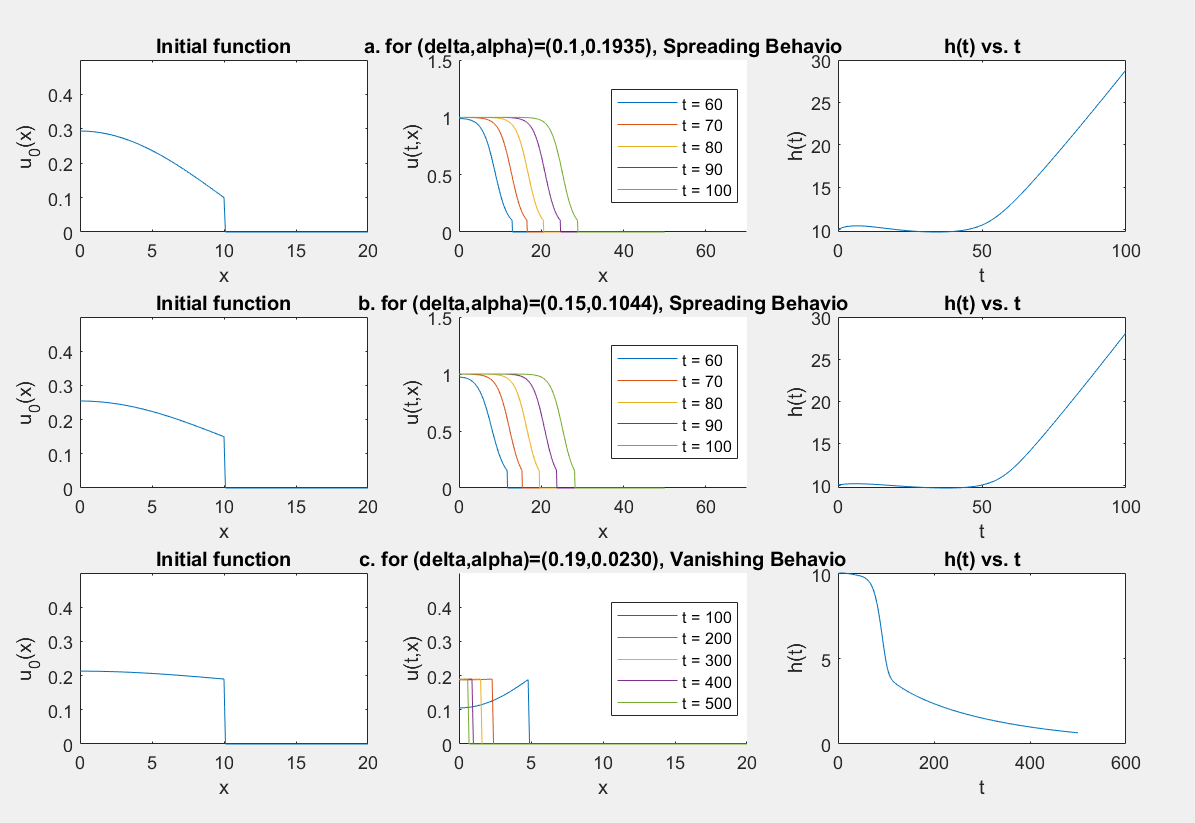}
\\
{\small Fig. 5. Simulation results for $\delta\in (0,\theta)$ with a bistable $f(u)$: either spreading or vanishing happens.}
\end{center}

\bigskip

\noindent
{\bf Conjecture 4:} When ${\bf (f_c)}$ holds and  $\delta\in (0, \theta_f^*]=(0, \theta]$, then 
\begin{itemize}
\item[(i)] $u_0\in X(h_0)$ and $u_0(x)\leq \theta$ in $[-h_0, h_0]$ imply that, as $t\to\infty$, $u(t,x)\to \delta$
and $[g(t), h(t)]\to [g_\infty, h_\infty]$ is a finite interval
 with 
 \begin{equation}\label{conserv}(h_\infty-g_\infty)\delta= \int_{-h_0}^{h_0}u_0(x)dx.
 \end{equation}
 \item[(ii)] If $u_0\in X(h_0)$ and  $u_0(x)>\theta$ for some $x\in (-h_0, h_0)$, then either  spreading happens or the above conclusions in (i) hold except that \eqref{conserv} need not be true anymore. 
 \end{itemize}
 
 \medskip

 Our simulation results for this case are listed in the table below, where $f(u)$ and $u_0(x)$ are given by, respectively, \eqref{comb} and \eqref{ini}.
 \begin{center}
 {\small
    \begin{tabular}{ c c c c c }
        \toprule
          $\theta$ & $\delta$ & $\alpha$ & Longtime dynamics \\ 
        \midrule
         0.3 & 0.28 & 5.315 & Spreading \\ 
         &  & 5.314 & Spreading \\ 
         &  & 5.313 & Spreading \\ 
         &  & 5.312 & Spreading \\ 
         &  & \textbf{5.311} & Converging to $\delta$ \\ 
         &  & 5.310 & Converging to $\delta$ \\ 
        \midrule
        0.3 & 0.29  & 4.500 & Spreading \\ 
         &  & 4.400 & Spreading \\ 
         &  & 4.300 & Spreading \\ 
         &  & \textbf{4.299} & Converging to $\delta$ \\ 
         & & 4.298 & Converging to $\delta$ \\ 
        \bottomrule
    \end{tabular}}
    \end{center}
Some samples of the graphs of $x\to u(t,x)$ and $t\to h(t)$ obtained from the simulations listed in this table are given in Fig. 6.

\begin{center}
\includegraphics[width=0.8\textwidth]{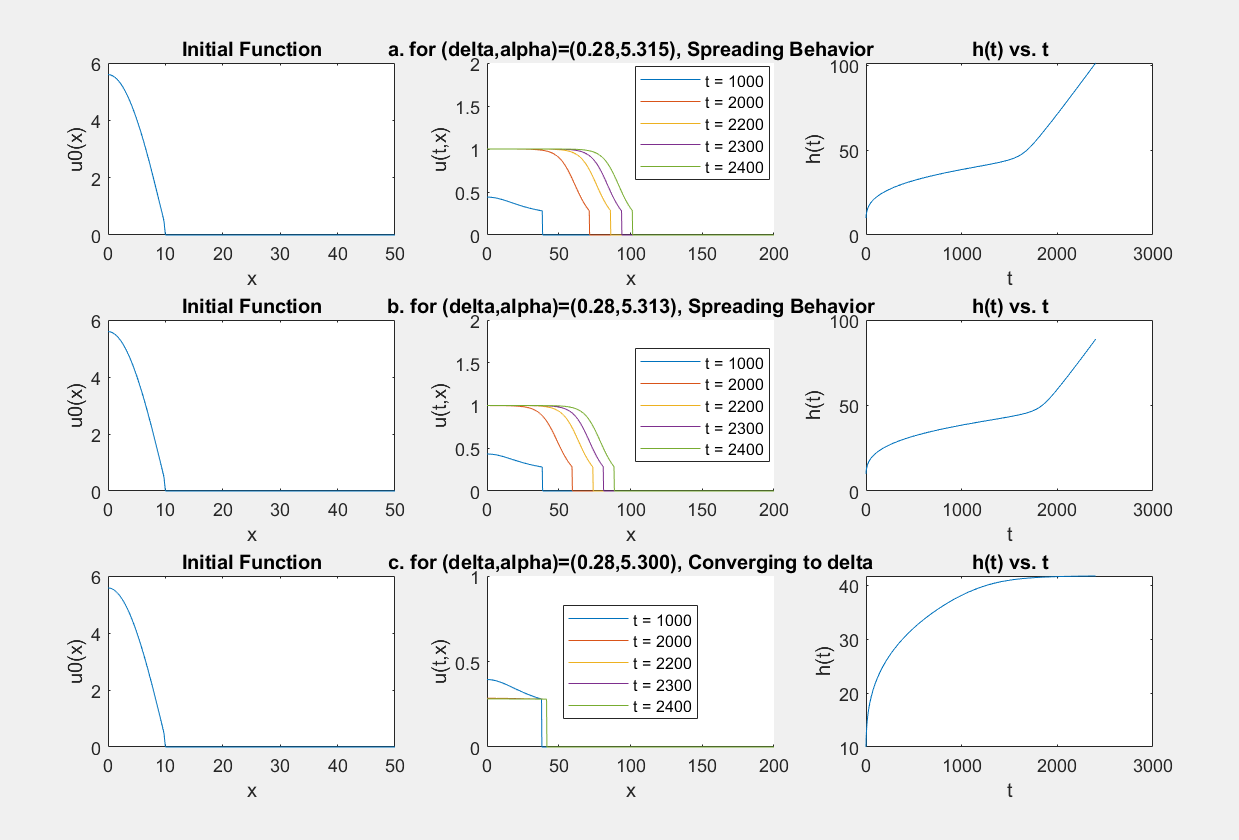}
\\
{\small Fig. 6. Simulation results for $\delta\in (0,\theta)$ with a combustion $f(u)$: either spreading or convergence to $\delta$ happens.}
\end{center}

\begin{remark}\label{rmk2} Part {\rm (i)} of Conjecture 4 will be rigorously proved in subsection 3.6.
\end{remark}

\section{Extensions to heterogeneous environment and  proof of some conjectures}
In this section, we extend the theoretical results in the previous section to situations involving heterogeneous environment, and prove several new results including in particular positive answers to Conjecture 1 and Conjecture 4 (i). 

We consider the following more general system which contains \eqref{a} as a special case. 
\begin{equation}\label{c}
\begin{cases}
u_t-du_{xx}=f(t,x, u),  & t>0,\; g(t)<x<h(t),\\
u(t,g(t))=u(t,h(t))=\delta, & t>0,\\
g'(t)=-\frac{d}{\delta}u_{x}(t,g(t)), &  t>0,\\
h'(t)=-\frac{d}{\delta}u_x(t,h(t)), &  t>0,\\
-g(0)=h(0)=h_0, u(0,x)=u_0(x), & -h_0\leq x\leq h_0.
\end{cases}
\end{equation}

For the local existence and uniqueness result, we only require $\delta>0$ and
\begin{equation}\label{f2}
	\begin{cases}
	\ f \in C^{\beta/2, \beta, 1}([0, \infty)\times \mathbb{R} \times [0, \infty)) \mbox{ for some $\beta\in (0,1)$},    f(t,x,0) \equiv  0, \mbox{ and for any $M>0$},\\
	\mbox{ $f(t,x,u)$ is Lipschitz in $u\in [0, M]$ uniformly with respect to $t\geq 0$ and $x\in\mathbb R$.}
	\end{cases}
\end{equation}

For global existence and uniqueness, we will further require
\begin{equation}\label{f3}
	f(t,x,u)<0 \mbox{ for all large $u$, say $u\geq M_0$, and all $t\geq 0$, $x\in\mathbb R$}.
\end{equation}

After proving the global existence result for \eqref{c}, we will focus on a special class of $f(t,x, u)$ with the following properties:
\begin{equation}
\label{f4}
\begin{cases} \mbox{(i) \eqref{f2} holds, $f(t,x,1)\equiv 0,\ f(t,x,u)\leq\bar f(u)< 0$ for  $t\geq 0, \ x\in\mathbb R$ and  $u>1,$}\\
\mbox{\ \ \ \ \ \  where $\bar f(u)$ is a continuous function over $[1,\infty)$, }\\
\mbox{(ii) there exists $\underline f(u)$ satisfying ${\bf (f_A)}$  such that $f(t,x,u)\geq \underline f(u)$ for $t\geq 0,\ x\in\mathbb R,\ u\geq 0$}.
\end{cases}
\end{equation}

Clearly any $f(u)$ of type ${\bf (f_A)}$ satisfies \eqref{f4}.  Let us note that any monostable  $f(u)$ also satisfies \eqref{f4} with some $\underline f(u)$ of type ${\bf (f_A)}$ with $\theta=\theta^*_{\underline f}=0$.

  \begin{theorem}\label{th3.1}
	Suppose that \eqref{f4} holds and $\delta\in (\theta_{\underline f}^*, 1)$. Then for every initial function $u_0 \in X(h_0)$, \eqref{c} has a unique solution $(u(t,x), g(t), h(t))$ defined for all $t>0$. Moreover,  as $t \to \infty$,
	\begin{align*}
		(g(t), h(t)) \to (-\infty, \infty), \quad u(t, x) \to 1 \quad \text{locally uniformly in } x \in \mathbb{R}.
	\end{align*}
Furthermore, there exists a constant $M_0>0$  such that
 \begin{align*}
 	&[g(t), h(t)]\supset [-\underline c^* t+M_0, \ \underline c^* t- M_0] \mbox{ for } t>0, \\
 			&\lim_{t \to \infty}  u(t, x)  = 1 \mbox{ uniformly for } x\in [-(\underline c^*-\epsilon)t, (\underline c^*-\epsilon)t] \mbox{ for any small } \epsilon>0,
 \end{align*}
where $\underline c^*>0$ is the unique speed of the  semi-wave problem \eqref{semi1} with $f$ replaced by $\underline f$.	
\end{theorem}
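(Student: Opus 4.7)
The plan is a sandwich argument that reduces the heterogeneous problem to the homogeneous Allee case already handled by Theorem~\ref{th1.3}. From \eqref{f4} we dominate $f$ from above by $\bar f$ (for $u>1$) and from below by the Allee-type $\underline f$ everywhere; since $\delta\in(\theta^*_{\underline f},1)$, Theorem~\ref{th1.3} applies to the homogeneous problem driven by $\underline f$ and produces spreading with semi-wave speed $\underline c^*$. Local existence and uniqueness for \eqref{c} will come from the general theory under hypothesis \eqref{f2} developed earlier in Section~3. For global existence, I would combine this with an a priori bound $u(t,x)\le \max\{1,\|u_0\|_\infty\}$: compare $u$ with the ODE super-solution $W(t)$ solving $W'=\bar f(W)$, $W(0)=\|u_0\|_\infty$, truncated at the level $1$ if necessary (using $f(t,x,1)\equiv 0$, so that the constant $1$ itself is a super-solution once $W$ reaches $1$, together with the Stefan boundary value $\delta<1$). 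The resulting uniform $L^\infty$ bound, combined with parabolic Schauder estimates up to the moving boundaries, precludes finite-time blowup of $u$ or of $g,h$.

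For the long-time behaviour, let $(\underline u,\underline g,\underline h)$ be the global solution of \eqref{a} with $f$ replaced by $\underline f$ and the same initial data $u_0\in X(h_0)$. Theorem~\ref{th1.3} applied to $\underline f$ yields constants $\hat g,\hat h\in\mathbb R$ and the semi-wave pair $(\underline c^*,q^*)$ of \eqref{semi1} with $f=\underline f$ such that $\underline h(t)-\underline c^*t\to\hat h$, $\underline g(t)+\underline c^*t\to\hat g$, and $\underline u(t,\cdot)$ is asymptotic to $q^*(\underline h(t)-\cdot)$ on the right half and to $q^*(\cdot-\underline g(t))$ on the left half, uniformly. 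The pointwise inequality $f(t,x,u)\ge\underline f(u)$, the common boundary value $\delta$, the common Stefan flux law, and identical initial data then deliver, via the comparison principle for \eqref{c} to be established in Section~3,
\begin{equation*}
[\underline g(t),\underline h(t)]\subset[g(t),h(t)],\qquad u(t,x)\ge\underline u(t,x)\ \text{on}\ [\underline g(t),\underline h(t)],\quad t\ge 0.
\end{equation*}
Since $\underline h(t)-\underline c^*t$ is continuous on $[0,\infty)$ with finite limit $\hat h$, it is bounded below there, and similarly $\underline g(t)+\underline c^*t$ is bounded above; this lets us choose $M_0$ large enough that $\underline h(t)\ge\underline c^*t-M_0$ and $\underline g(t)\le-\underline c^*t+M_0$ for all $t\ge 0$, and the inclusion $[g(t),h(t)]\supset[-\underline c^*t+M_0,\underline c^*t-M_0]$ follows.

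For the uniform convergence on the strip $\{|x|\le(\underline c^*-\epsilon)t\}$, note that $\underline h(t)-x\ge\epsilon t-|\hat h|-o(1)\to\infty$ throughout this range, so $q^*(\underline h(t)-x)\to q^*(\infty)=1$ uniformly; combined with the semi-wave asymptotic profile this forces $\underline u(t,x)\to 1$ uniformly on the strip, whence $\liminf_{t\to\infty}u(t,x)\ge 1$ uniformly there, and the ODE upper bound $u(t,x)\le W(t)\to 1$ supplies the matching upper estimate, giving the local uniform convergence $u(t,x)\to 1$. I expect the principal technical obstacle to be the rigorous comparison between two free boundary problems that share the nonstandard boundary condition $u=\delta>0$ but are driven by different reaction terms: because the boundary value is strictly positive, the classical Stefan comparison argument must be adapted, but this should go through by the strong maximum principle together with a Hopf-type argument at the moving boundary, possibly after a small perturbation of $\delta$ to enforce strict ordering, and is foundational material for \eqref{c} to be developed in Section~3.
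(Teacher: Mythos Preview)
Your sandwich strategy is the right idea and matches the paper's overall plan, but there is a genuine gap in the comparison step. The free-boundary comparison principle for \eqref{c} (Lemma~\ref{lemma2.3b}) does \emph{not} follow from the pointwise inequality $f(t,x,u)\ge\underline f(u)$, common boundary value $\delta$, and common Stefan law alone. Because the boundary value is $\delta>0$ rather than $0$, the lemma carries the extra hypothesis
\[
\underline h(t_0)<\bar h(t_0)\ \Longrightarrow\ \bar u(t_0,\underline h(t_0))\ge\delta,
\]
that is, the \emph{upper} solution must be at least $\delta$ at the location of the \emph{lower} solution's free boundary. For the true solution $u$ of \eqref{c} this is not known a priori: $u$ may well dip below $\delta$ in the interior for early times, so your direct comparison of $(u,g,h)$ with $(\underline u,\underline g,\underline h)$ cannot be justified by the comparison principle as stated. (There is also the minor issue that identical initial data give $[\underline g(0),\underline h(0)]=[g(0),h(0)]$, violating the strict inclusion required at $t=0$; a perturbation of $\delta$ does not fix the first, structural, problem.)

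The paper closes this gap by inserting an intermediate step: before invoking the free-boundary comparison, it proves via a traveling-wave lower barrier (using only the ordinary comparison principle over the fixed region $\{g(t)\le x\le h(t)\}$) that $u(t,x)\ge\delta$ for all $t\ge T_0$ and $x\in[g(t),h(t)]$; see \eqref{>delta1}. Only after this is established does it launch the auxiliary $\underline f$-problem, and it does so from time $T_0$ with a strictly smaller initial interval $[g(T_0)+\epsilon,h(T_0)+\epsilon]$ and constant initial datum $\delta$, so that both hypotheses of Lemma~\ref{lemma2.3b} are met. You should incorporate this step; without it the comparison does not go through.
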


When $\delta>1$, the vanishing behaviour described in Conjecture 1 is a consequence of the following more general result.

\begin{theorem}\label{th3.2}
	Suppose that  $\delta\in (1,\infty)$, part {\rm (i)} of \eqref{f4} holds and additionally $f(t,x,u)$ is uniformly $C^{\beta/2}$-continuous in $t$ for $t\geq 0$, $x\in\mathbb R$ and $u\in [0,\delta]$.
	 Then for every initial function $u_0 \in X(h_0)$, vanishing happens to \eqref{c}, namely, there exists $\xi^*\in\mathbb R$ depending on $f$ and $u_0$ such that 
\[
\lim_{t\to\infty} g(t)=\lim_{t\to\infty} h(t)=\xi^*, \ \lim_{t\to\infty} u(t,x)=\delta \mbox{ uniformly for } x\in [g(t), h(t)].
\]
\end{theorem}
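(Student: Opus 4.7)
\emph{Overall plan and Step 1.} My plan is to show that the boundary datum $\delta>1$ eventually traps $u$ below $\delta$, which in turn forces both fronts to retreat monotonically; I then rule out a non-degenerate limiting interval via a Hopf / strong maximum principle argument on a limiting profile, and finally convert $h(t)-g(t)\to 0$ into uniform convergence of $u$ to $\delta$ through uniform regularity. For \emph{Step~1 (eventual upper bound)}, note that $f(t,x,1)\equiv 0$ together with $\bar f(u)<0$ for $u>1$ forces $\bar f(1)=0$ by continuity; let $\bar w(t)$ solve $\bar w'=\bar f(\bar w)$ with $\bar w(0)=\max\{\delta,\|u_0\|_\infty\}$, so $\bar w(t)\searrow 1$. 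I would verify that $V(t):=\max\{\delta,\bar w(t)\}$ is a spatially constant super-solution: where $V=\delta$, part~(i) of~\eqref{f4} yields $-dV_{xx}-f(t,x,V)=-f(t,x,\delta)\geq|\bar f(\delta)|>0$; where $V=\bar w>\delta$, $V_t-dV_{xx}-f\geq\bar w'-\bar f(\bar w)=0$. Since $V(0)\geq u_0$ and $V\geq\delta=u$ on the free boundaries, the comparison principle on the moving domain gives $u\leq V$, and hence a time $T_1>0$ past which $u(t,x)\leq\delta$.

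\emph{Step 2 (Monotone retreat).} For $t\geq T_1$, apply the parabolic strong maximum principle to $\delta-u\geq 0$: if $\delta-u$ vanished at an interior point, then $u\equiv\delta$ on a parabolic cone, yielding $0=f(t,x,\delta)\leq\bar f(\delta)<0$, a contradiction. Hence $u<\delta$ strictly in the interior, and Hopf's lemma at each lateral boundary forces $u_x(t,g(t))<0<u_x(t,h(t))$. The free-boundary laws then give $g'(t)>0>h'(t)$. Combined with the a priori uniform boundedness of $(g,h)$ available from the global existence theory of Section~3, monotonicity produces limits $g(t)\nearrow g_\infty\leq h_\infty\nwarrow h(t)$.

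\emph{Step 3 ($L_\infty:=h_\infty-g_\infty=0$).} Suppose for contradiction $L_\infty>0$. Parabolic Schauder estimates on the straightened domain give uniform $C^{1+\beta/2,\,2+\beta}$ bounds on $u$ up to the free boundaries, hence uniform continuity of $g',h'$; together with $\int_{T_1}^\infty g'\,dt<\infty$ and $\int_{T_1}^\infty(-h')\,dt<\infty$, this forces $g'(t),h'(t)\to 0$. For any $t_n\to\infty$, Arzel\`a--Ascoli yields a subsequence along which $u(t_n+t,x)\to u^*(t,x)$ in $C^{1,2}_{\rm loc}(\mathbb{R}\times(g_\infty,h_\infty))$, and the uniform $C^{\beta/2}$-continuity of $f$ in $t$ allows one to extract $f(t_n+t,x,u)\to f^*(t,x,u)$ locally uniformly. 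Using Lipschitz continuity of $u_x$ in $x$ to transfer the boundary data from $g(t_n+t)$ to $g_\infty$, the limit profile satisfies
\begin{equation*}
u^*_t=du^*_{xx}+f^*(t,x,u^*)\ \text{on}\ \mathbb{R}\times(g_\infty,h_\infty),\quad u^*=\delta,\ u^*_x=0\ \text{on}\ \{g_\infty,h_\infty\},\quad u^*\leq\delta,
\end{equation*}
with $f^*(t,x,\delta)\leq\bar f(\delta)<0$. Setting $v:=\delta-u^*\geq 0$, one obtains $v_t-dv_{xx}-c(t,x)v=-f^*(t,x,\delta)\geq|\bar f(\delta)|>0$ with bounded $c$; the strong maximum principle gives $v>0$ in the interior, and Hopf at $x=g_\infty$ then yields $v_x(t,g_\infty)<0$, i.e.\ $u^*_x(t,g_\infty)>0$, contradicting $u^*_x(t,g_\infty)=0$. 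Hence $L_\infty=0$ and $\xi^*:=g_\infty=h_\infty$.

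\emph{Step 4 and main obstacle.} With $L(t):=h(t)-g(t)\to 0$ and the uniform Lipschitz bound on $u(t,\cdot)$ in $x$,
\[
|u(t,x)-\delta|=|u(t,x)-u(t,g(t))|\leq\|u_x(t,\cdot)\|_{L^\infty}\,L(t)\to 0
\]
uniformly on $[g(t),h(t)]$, concluding the theorem. The main obstacle is Step~3: one must secure uniform up-to-boundary H\"older/Schauder estimates, pass the translated triples to a $C^{1,2}_{\rm loc}$ limit, and in particular transfer the free-boundary Neumann condition to the fixed boundary $\{g_\infty,h_\infty\}$; once that is in place, the overdetermined Dirichlet--Neumann conditions collide with the strict sign $f^*(t,x,\delta)<0$ immediately.
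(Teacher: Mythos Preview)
Your Steps 1--3 follow the paper's route: an ODE comparison forces $u\le\delta$ after finite time, Hopf then gives $g'>0>h'$ (the paper only needs the weak inequalities, but yours are fine), and the exclusion of $L_\infty>0$ via an overdetermined limiting problem is exactly the paper's Step~2, carried out after straightening to $[g_\infty,h_\infty]$ and using $L^p$ rather than Schauder to obtain the needed uniform bounds on $U_y$ (your direct appeal to Schauder is a bit circular since the coefficients involve $h',g'$, but the paper's $L^p$ route fixes this). One harmless slip: at $x=g_\infty$ the outward normal is $-\partial_x$, so Hopf gives $v_x(t,g_\infty)>0$ and hence $u^*_x(t,g_\infty)<0$; the contradiction with the Neumann condition still stands.

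The genuine gap is your Step~4. The ``uniform Lipschitz bound on $u(t,\cdot)$'' was never established in the regime $L(t)\to 0$. Your Schauder bounds in Step~3 were derived under the contradiction hypothesis $L_\infty>0$, which is what keeps the straightened problem uniformly parabolic; once $L(t)\to 0$, the straightened diffusion coefficient scales like $d/L(t)^2$, and the global existence theory of Section~3 supplies only $T$-dependent gradient bounds. So $\|u_x(t,\cdot)\|_{L^\infty}\,L(t)\to 0$ is unjustified. The paper closes this step by a comparison argument instead: after translating $\xi^*$ to $0$, for each small $\epsilon>0$ choose $T_\epsilon$ with $[g(t),h(t)]\subset(-\epsilon,\epsilon)$ for $t\ge T_\epsilon$, and compare $u$ from below with the solution of $v_t-dv_{xx}=-Mv$ on $(-\epsilon,\epsilon)$ with $v=\delta$ on the boundary and $v(0,\cdot)=0$ (here $M$ is the Lipschitz constant of $f$ in $u$ on $[0,\delta]$). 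This $v$ increases to the unique positive stationary solution $v_*^\epsilon$, and the rescaling $y=x/\epsilon$ shows $v_*^\epsilon\to\delta$ uniformly as $\epsilon\to 0$. That yields $\liminf_{t\to\infty}u\ge\delta$ uniformly, which together with $u\le\delta$ finishes the proof. Replacing your Lipschitz argument with this comparison closes the gap.
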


When $\delta=1$, the dynamics of \eqref{c} exhibits a transition behaviour between successful spreading (as in Theorem \ref{th3.1}) and vanishing (as in Theorem \ref{th3.2}), which is described by the following theorem.  
\begin{theorem}\label{th3.3}
	Suppose that \eqref{f4} holds and $\delta=1$. Then for every initial function $u_0 \in X(h_0)$, the unique solution $(u(t,x), g(t), h(t))$ of \eqref{c}
	satisfies, as $t \to \infty$,
	\[\begin{cases}
		[g(t), h(t)] \to [g_\infty, h_\infty] \mbox{ is a finite interval, }\smallskip\\
		\   u(t, x) \to 1 \quad \text{ uniformly for } x \in [g(t), h(t)].
	\end{cases}	
	\]
	Moreover, $h_\infty-g_\infty>0$ always holds except that, in the case $u_0(x)-1$ changes sign in $[-h_0, h_0]$, we need to assume additionally that there exists some small $\varepsilon>0$ such that
	\begin{equation}\label{additional}\varepsilon \underline f(u)\geq  \bar f(u) \mbox{ for } u\in [1, 1+\varepsilon].
	\end{equation}
\end{theorem}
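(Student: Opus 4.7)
Proof plan. The plan is to split according to the sign behavior of $u_0-1$ on $[-h_0,h_0]$ into three cases: Case A where $u_0\le 1$, Case B where $u_0\ge 1$, and Case C where $u_0-1$ changes sign. Cases A and B will be handled by monotonicity of the free boundaries and do not require \eqref{additional}, while Case C is the main obstacle and requires \eqref{additional} for the positivity $h_\infty>g_\infty$. Global existence and the uniform bound $u(t,x)\le \max(1,\sup u_0)$ follow from the existence theorem earlier in the section together with a comparison with the constant supersolution $\max(1,\sup u_0)$ on the moving domain (using $f\le \bar f(u)\le 0$ for $u\ge 1$). The main technical lever is the following clean identity, available exactly because $\delta=1$:
\begin{equation*}
\frac{d}{dt}\int_{g(t)}^{h(t)} u(t,x)\,dx=\int_{g(t)}^{h(t)} f(t,x,u(t,x))\,dx,
\end{equation*}
obtained from Leibniz's rule together with the cancellation between the Neumann-type flux terms and the free boundary velocities. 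I will use it repeatedly to bound $M(t):=\int u\,dx$ from below and relate it to the width $L(t):=h(t)-g(t)$.

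In Case A, comparison with $\bar u\equiv 1$ gives $u\le 1$, hence $u_x(t,h(t))\ge 0$ and $u_x(t,g(t))\le 0$, so $h$ is non-increasing, $g$ is non-decreasing, and $[g(t),h(t)]\subseteq[-h_0,h_0]$; both boundaries converge by monotonicity. The convergence $u\to 1$ comes from recycling the semi-wave construction in the proof of Theorem \ref{th1.3}: for each $\delta'\in(\theta^*_{\underline f},1)$, the profile $\underline u(t,x)=\max\{q_0(c_0 t-x-L),q_0(c_0 t+x-L)\}$ (with $q_0$ a semi-wave for $\underline f$) is a subsolution once $L$ is large, giving $\liminf u\ge \delta'$, and an ODE comparison with $w'=\underline f(w)$ upgrades this to $u\to 1$ uniformly. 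For $L_\infty>0$: once $u\ge 1-\eta$ uniformly with $\eta$ small enough that $\underline f(u)\ge c_0(1-u)$ on $[1-\eta,1]$ (using $\underline f'(1)<0$), the identity gives $M'(t)\ge c_0(L(t)-M(t))\ge 0$ (since $u\le 1$ forces $M\le L$), so $M$ is eventually non-decreasing, hence $L_\infty=M_\infty\ge M(t_*)>0$. Case B is symmetric and easier: comparison gives $u\ge 1$, so $h$ is non-decreasing, $g$ is non-increasing, and $L_\infty\ge 2h_0$; boundedness of $L$ follows from integrating $\frac{d}{dt}\int(u-1)\,dx=\int f\,dx-(h'-g')$ with $\int f\le 0$, and $u\to 1$ from the Lyapunov $E(t)=\tfrac12\int(u-1)^2\,dx$ whose derivative is $-d\int u_x^2+\int(u-1)f\le -d\int u_x^2\le 0$ on $\{u\ge 1\}$, combined with the fact that $u\equiv 1$ is the unique stationary solution of \eqref{c} when $\delta=1$ (the Stefan condition at equilibrium forces $u_x(g)=u_x(h)=0$, so ODE uniqueness gives $u\equiv 1$).

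Case C is the main obstacle. The one-sided Lyapunov $E_+(t)=\tfrac12\int(u-1)_+^2\,dx$ satisfies $E_+'(t)=-d\int_{\{u>1\}}u_x^2+\int_{\{u>1\}}(u-1)f\le 0$, forcing $(u-1)_+\to 0$ via parabolic compactness and the uniqueness of the stationary solution; the semi-wave sub-solution of Case A still delivers $\liminf u\ge \delta'$, giving $(1-u)_+\to 0$. Combined with Schauder estimates, this yields $u\to 1$ uniformly and $(g(t),h(t))\to(g_\infty,h_\infty)$ with $L_\infty<\infty$. The delicate step is $L_\infty>0$. Condition \eqref{additional}, $\varepsilon\underline f(u)\ge \bar f(u)$ on $[1,1+\varepsilon]$, translates (since both $\bar f,\underline f$ are negative there) to $|\bar f(u)|\ge \varepsilon|\underline f(u)|$ near $u=1^+$; using $\underline f(u)\sim \underline f'(1)(u-1)$ with $\underline f'(1)<0$, this gives $-(u-1)f\ge c\varepsilon(u-1)^2$ on $\{u>1\}$ for $u$ close to $1$, whence $E_+'(t)\le -c\varepsilon E_+(t)$ once $u\le 1+\eta$, yielding the exponential decay $E_+(t)\le Ce^{-c\varepsilon t}$. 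This in turn controls the negative contribution to $M'(t)$ from the $\{u>1\}$ region by a quantity of order $e^{-c\varepsilon t/2}$, which is integrable in $t$, so the Case A mass-estimate $M'(t)\ge c_0\int_{\{u\le 1\}}(1-u)\,dx-Ce^{-c\varepsilon t/2}$ keeps $M(t)$ bounded below by a positive constant; together with $u\to 1$ uniformly this yields $L_\infty=M_\infty>0$. Turning the pointwise hypothesis \eqref{additional} into this effective exponential bound on $E_+$, and merging it with the Case A mass-identity argument in the presence of a genuinely two-sided solution, is the technical crux of the whole proof.
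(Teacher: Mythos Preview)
Your case-split and the mass identity $\frac{d}{dt}\int_g^h u\,dx=\int_g^h f\,dx$ are indeed the backbone of the argument, and Cases A and B go through essentially as you describe (the paper actually treats $u\to 1$ and the boundedness of $[g,h]$ uniformly across all three cases via the traveling-wave subsolution and an auxiliary free-boundary comparison, but your monotonicity/Lyapunov route is adequate for A and B). Case C, however, has two genuine gaps.

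First, you assert via ``Schauder estimates'' that $[g(t),h(t)]\to[g_\infty,h_\infty]$ with $L_\infty<\infty$, but when $u_0-1$ changes sign the free boundaries are not monotone, and neither the uniform boundedness of $[g,h]$ nor the existence of $\lim g,\lim h$ follows from regularity bounds alone; your Lyapunov $E_+$ does not supply either. The paper obtains both by comparison (Lemma~\ref{lemma2.3b}) with auxiliary free-boundary problems having zero reaction: Step~1 takes $\hat u_0\ge\max(u_0,1)$ so that $\hat u\ge 1$, $f(t,x,\hat u)\le 0$, and the vanishing result for \eqref{free-bound-0} applied to $\hat u-1$ traps $[g,h]$ in a fixed finite interval; Step~3, once $|u-1|\le\epsilon$, takes $\hat u_0\in[1,1+\epsilon]$ so that exact conservation of $\int\hat u$ yields $h(t)\le h(T)+C\epsilon$ and $g(t)\ge g(T)-C\epsilon$ for all $t\ge T$, from which the limits follow. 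Second, your route to $L_\infty>0$ via $E_+(t)\le Ce^{-2c\varepsilon t}$ plus Cauchy--Schwarz produces $M_\infty\ge M(T_0)-K\,E_+(T_0)^{1/2}$, but these two terms scale like $L(T_0)$ and $L(T_0)^{1/2}$ respectively, so the sign is undetermined without a lower bound on $L(T_0)$, which is exactly what is at issue. The paper's device is different: bound $u\le v(t)$ with $v'=\bar f(v)$, use monotonicity of $\underline f$ near $1$ to get $U'(t)\ge L(t)\,\underline f(v(t))\ge \varepsilon^{-1}L(t)v'(t)$, then integrate from local-maximum times $t_n$ of $L$ (so $L(s)v'(s)\ge L(t_n)v'(s)$ for $s\ge t_n$), obtaining $U(t)\ge L(t_n)\big[(1-\eta)-\varepsilon^{-1}(v(t_n)-1)\big]$; the bracket tends to $1-\eta>0$, contradicting $U(t)\to 0$. (Your idea is repairable if you bypass $E_+$: show directly $\frac{d}{dt}\int(u-1)_+\le \int_{\{u>1\}}f\le -c\varepsilon\int(u-1)_+$, using $\int_{\{u>1\}}u_{xx}\le 0$; then the integrated error in $M$ scales \emph{linearly} in $L(T_0)$ and the argument closes once $\|u-1\|_\infty$ drops below a threshold depending only on $\varepsilon$ and $\underline f'(1)$---but you still need the paper's Steps~1 and~3 first.)
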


The additional condition \eqref{additional} is assumed for technical reasons, which we believe is not necessary. It should be noted that both $\underline f(u)$
and $\bar f(u)$ are negative for $u>1$, and \eqref{additional} is not needed if $f=f(u)$ is independent of $(t,x)$ and is non-increasing near $u=1$.
\medskip
	
The rest of this section is organised as follows. In subsection 3.1, we prove the local existence and uniqueness for \eqref{c}, in subsection 3.2,  we present some comparison principles and obtain several crucial a priori bounds for the local solution of \eqref{c}, which will be used in the proof of the global existence and uniqueness result for \eqref{c} in subsection 3.3. Subsection 3.4 is devoted to the proof of Theorem \ref{th3.1} and subsection 3.5 gives the proof of Theorem \ref{th3.2}. In subsection 3.6, we prove Conjecture 4 part (i) and Theorem \ref{th3.3}.

\subsection{Local Existence and Uniqueness}

We prove the following rather general local existence and uniqueness theorem in this subsection.
The proof will follow the approach outlined in \cite{d2024}.

\begin{theorem}\label{th2.1}
Assuming that \eqref{f2} is satisfied, then  for any \( u_0 \in X(h_0) \) and \( \alpha \in (0, 1) \), there exists a \( T > 0 \) such that problem \eqref{c} has a unique solution
\[
(u, g, h) \in C^{(1+\alpha)/2, 1+\alpha}(\Omega_T) \times \left[C^{1+\alpha/2}([0, T])\right]^2.
\]
Moreover, 
\begin{equation*}
	\|u\|_{C^{(1+\alpha)/2, 1+\alpha}(\Omega_T)} + \|g\|_{C^{1+\alpha/2}([0, T])} + \|h\|_{C^{1+\alpha/2}([0, T])} \leq C,
\end{equation*}
and
\[
h, g \in C^{1+(1+\beta)/2}((0, T]), \quad u \in C^{1+\beta/2, 2+\beta}(\Omega_T), \quad u > 0 \text{ in } \Omega_T,
\]
where $\beta$ is from \eqref{f2},  \( \Omega_T = \{(t, x) \in \mathbb{R}^2 : t \in (0, T], x \in [g(t), h(t)]\} \), and the constants \( C \) and \( T \) depend only on \( h_0 \), \( \alpha \) and \( \|u_0\|_{C^2([-h_0, h_0])} \).
\end{theorem}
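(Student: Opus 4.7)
The plan is to follow the front-fixing strategy outlined in \cite{d2024}. First I would straighten the unknown boundaries to a fixed interval via a smooth change of variables. Concretely, pick a cut-off $\chi\in C^\infty(\mathbb R)$ that equals $1$ near $\pm h_0$ and is supported in a small neighbourhood of these two points, and introduce
\begin{equation*}
x=y+\chi(y)\bigl[(h(t)-h_0)\,\mathbf 1_{y>0}+(g(t)+h_0)\,\mathbf 1_{y<0}\bigr],\qquad v(t,y):=u(t,x(t,y)),
\end{equation*}
so that $y\in[-h_0,h_0]$ is a fixed spatial domain for all $t$. In the new variables, $v$ satisfies a uniformly parabolic quasilinear equation
\begin{equation*}
v_t=A(t,y;g,h,g',h')\,v_{yy}+B(t,y;g,h,g',h')\,v_y+f(t,x(t,y),v),
\end{equation*}
with boundary data $v(t,\pm h_0)=\delta$, initial data $v(0,y)=u_0(y)$, and Stefan-type flux relations $g'(t)=-(d/\delta)J_-(t)v_y(t,-h_0)$, $h'(t)=-(d/\delta)J_+(t)v_y(t,h_0)$, where the Jacobian factors $J_\pm$ depend smoothly on $(g,h)$. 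The coefficients $A,B$ are smooth functions of $(g-(-h_0),\,h-h_0,\,g',h')$ and reduce to $(d,0)$ when $g\equiv -h_0,\,h\equiv h_0$.

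The next step is a contraction mapping argument in the closed ball
\begin{equation*}
\mathcal B_T:=\bigl\{(g,h)\in[C^{1+\alpha/2}([0,T])]^2:\ g(0)=-h_0,\ h(0)=h_0,\ \|g+h_0\|_{C^{1+\alpha/2}}+\|h-h_0\|_{C^{1+\alpha/2}}\le 1\bigr\}.
\end{equation*}
For a given $(g,h)\in\mathcal B_T$, solve the linear parabolic problem for $v$ on $[-h_0,h_0]\times[0,T]$ with the frozen coefficients $A,B$: Schauder theory (e.g. \cite{d2024} and Ladyzhenskaya--Solonnikov--Ural'tseva) provides a unique $v\in C^{(1+\alpha)/2,1+\alpha}$ with a bound depending only on $h_0$, $\alpha$, $\|u_0\|_{C^2}$ and $T$; the compatibility $u_0(\pm h_0)=\delta$ and $u_0\in C^2$ ensure optimal boundary regularity for $v_y$. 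Then define the map $\mathcal T(g,h):=(\widetilde g,\widetilde h)$ by
\begin{equation*}
\widetilde g(t)=-h_0-\tfrac{d}{\delta}\int_0^t J_-(s)v_y(s,-h_0)\,ds,\qquad \widetilde h(t)=h_0-\tfrac{d}{\delta}\int_0^t J_+(s)v_y(s,h_0)\,ds.
\end{equation*}
Since integration against a bounded kernel improves Hölder regularity by one full time derivative and by shrinking $T$ it gains a small factor $T^{\alpha/2}$, one shows $\mathcal T$ maps $\mathcal B_T$ into itself for $T$ small. A parallel estimate for the difference of two iterates $\mathcal T(g_1,h_1)-\mathcal T(g_2,h_2)$, using the linear dependence of $A,B,J_\pm$ on $(g,h)$ modulo quadratic errors, gives contraction; uniqueness of the full free-boundary solution follows by the same difference estimate.

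The main obstacle is precisely this contraction step: one must control $\|v_y(\cdot,\pm h_0)\|_{C^{\alpha/2}([0,T])}$ linearly in the candidate $(g,h)$, with a small constant for small $T$. The bound requires Hölder regularity up to the lateral boundary, which rests on the compatibility $u_0(\pm h_0)=\delta$, and on keeping the coefficients $A,B$ close to those of the heat operator; this is guaranteed by the restriction to $\mathcal B_T$ together with smallness of $T$. Everything else is essentially bookkeeping: the change-of-variable computation and the standard parabolic Schauder estimates.

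Once the fixed point $(u,g,h)$ has been produced with regularity $C^{(1+\alpha)/2,1+\alpha}\times[C^{1+\alpha/2}]^2$, I would bootstrap. With $g,h\in C^{1+\alpha/2}$ now known, the coefficients $A,B$ of the straightened equation are $C^{\alpha/2,\alpha}$, so together with $f\in C^{\beta/2,\beta,1}$ Schauder theory upgrades $v$, hence $u$, to $C^{1+\beta/2,2+\beta}(\Omega_T)$ away from the corner singularity; plugging the improved $v_y(\cdot,\pm h_0)\in C^{(1+\beta)/2}$ into the Stefan conditions yields $g,h\in C^{1+(1+\beta)/2}((0,T])$. Finally, positivity $u>0$ in $\Omega_T$ is immediate from the strong maximum principle applied to the linear parabolic operator satisfied by $u$, together with $u_0>0$ on $[-h_0,h_0]$ and the boundary datum $u=\delta>0$. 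The dependence of $C$ and $T$ on only $h_0$, $\alpha$ and $\|u_0\|_{C^2([-h_0,h_0])}$ is tracked through every estimate.
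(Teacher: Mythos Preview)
Your outline follows the same front-fixing + contraction + bootstrap scheme as the paper, but two technical points deserve attention where the paper proceeds differently.

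First, the paper runs the fixed point on the triple $(U,g,h)$ rather than on $(g,h)$ alone, freezing $U$ inside the source term $f(t,\Psi(t,y),U)$; this makes the intermediate boundary-value problem genuinely linear. Your ``linear parabolic problem'' still carries $f(t,x(t,y),v)$ on the right and is therefore semilinear, so as written you would need a separate existence argument at each step of the iteration (or else enlarge the fixed-point space to include $v$, which is what the paper does).

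Second, and more substantively, the paper obtains the $C^{(1+\alpha)/2,1+\alpha}$ bound via parabolic $L^p$ estimates plus Sobolev embedding rather than Schauder theory. The reason is that $u_0\in C^2$ need not satisfy the higher-order corner compatibility required for Schauder estimates up to $t=0$; the paper states this explicitly in its Step~4, where Schauder is applied only after multiplying by a cutoff vanishing near $t=0$. Closely tied to this is the paper's extension trick: the candidate $(U,g,h)$ on $[0,T]$ is extended by constants to a fixed interval $[0,T_1]$, and all $L^p$/embedding estimates are carried out on $[0,T_1]$. This guarantees that the embedding constant is independent of $T$, which is precisely what is needed to extract the small factor $T^{\alpha/2}$ you invoke for the contraction; without such a device the constants in your H\"older bounds could degenerate as $T\to 0$, and the contraction step would not close.

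The positivity $u>0$ via the maximum principle and the bootstrap to $C^{1+\beta/2,2+\beta}$ regularity away from $t=0$ are handled exactly as you describe.
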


\begin{proof} The strategy of the proof is as follows. For any given continuous function pair $(g(t), h(t))$, $0\leq t\leq T$, in a suitable set, say $B^T$, and any continuous function $u(t,x)$, $0\leq t\leq T, x\in [g(t), h(t)]$,  in a certain set $Z^T$, we solve the initial boundary value problem
\begin{equation}\label{d}
\begin{cases}
v_t-dv_{xx}=f(t,x, u),  & 0<t\leq T,\; g(t)<x<h(t),\\
v(t,g(t))=v(t,h(t))=\delta, & 0<t\leq T,\\
-g(0)=h(0)=h_0, v(0,x)=u_0(x), & -h_0\leq x\leq h_0.
\end{cases}
\end{equation}
The solution $v(t,x)=v^{(u, g,h)}(t,x)$ to \eqref{d} depends on $(u,g,h)$ and induces a mapping 
\begin{equation}\label{map}
(u(t,x), g(t), h(t))\to (v(t,x), -h_0-\int_0^t \frac d\delta v_x(s, g(s))ds, h_0-\int_0^t\frac d\delta v_x(s, h(s))ds).
\end{equation}
If $(u^*, g^*, h^*)$ is a fixed point of the above mapping, then it is easy to see that $(u^*(t,x), g^*(t), h^*(t))$ solves \eqref{c} for $0\leq t\leq T$. We will show that the above mapping is indeed a contraction mapping over a suitable set of $(u,g,h)$ when $T$ is sufficiently small (and hence has a unit fixed point), which relies on the applications of parabolic $L^p$ estimates and Sobolev embeddings. 

This strategy is carried out below in four steps. In Step 1, we show that for small $T>0$, \eqref{d} can be changed to an equivalent initial boundary value problem with straight lateral boundaries, for which existing parabolic $L^p$ estimates can be readily applied. But to control the desired norm of the solution by the Sobolev embeddings  in order to apply the contraction mapping theorem later, we encounter a problem: the embedding results involve a constant which depends on $T$ and to obtain a contraction mapping we require this constant to be independent of $T$ for all small $T$. To overcome this difficulty we will use an extension trick, which is given in Step 2. Using this trick, we are able to prove the existence of a unique solution to the straight boundary problem obtained in Step 1 and show that for all small $T>0$, the mapping in \eqref{map}, but with $v$ replaced by the solution $U$ of the equivalent straight boundary problem of Step 1, is a contraction mapping, and hence has a unique fixed point; these constitute Step 3, which yields a solution to \eqref{c} for $0\leq t\leq T$ with sufficiently small $T>0$. Finally in Step 4, we explain that the obtained solution has enough regularity to be regarded as a classical solution.

{\bf Step 1}: Straightening the boundaries.

We transform the curved boundaries of \eqref{d}  into straight boundaries. For a given pair \( (g(t), h(t)) \), which are continuous in  $t$ with \( (g(0), h(0)) = (-h_0, h_0) \), we select a function \( \zeta \in C^3(\mathbb{R}) \) such that
\[
\zeta(y) = 
\begin{cases} 
	1, & \text{if } |y - h_0| < \frac{h_0}{4}, \\ 
	0, & \text{if } |y - h_0| > \frac{h_0}{2}, 
\end{cases} 
\quad \text{and } |\zeta'(y)| < \frac{5}{h_0} \text{ for all } y.
\]
Define \( \xi(y) = -\zeta(-y) \), and introduce the transformation \( (t, y) \mapsto (t, x) \) given by
\begin{equation*}
	x = \Psi(t, y) := y + \zeta(y)(h(t) - h_0) - \xi(y)(g(t) + h_0) \quad \text{for } y \in \mathbb{R}.
\end{equation*}
This transformation is a diffeomorphism for fixed \( t \in [0, T] \) from $[-h_0, h_0]$ to $[g(t), h(t)]$, provided that $T>0$ is small such that \( |h(t) - h_0| < \frac{h_0}{12} \) and \( |g(t) + h_0| < \frac{h_0}{12} \) for $t\in [0, T]$. Under the inverse of this transformation, it is clear that the curved boundaries \( x = g(t) \) and \( x = h(t) \) are mapped  to \( y = -h_0 \) and \( y = h_0 \), respectively.

Direct calculations yield the following expressions:
\begin{equation}\label{2.4}
\begin{cases}
	\displaystyle\frac{\partial y}{\partial x} = \frac{1}{1 + \zeta'(y)(h(t) - h_0) - \xi'(y)(g(t) + h_0)} =: \sqrt{A(y, h(t), g(t))}, \\[2mm]
\displaystyle 	\frac{\partial^2 y}{\partial x^2} = \frac{-\zeta''(y)(h(t) - h_0) - \xi''(y)(g(t) + h_0)}{\left[1 + \zeta'(y)(h(t) - h_0) - \xi'(y)(g(t) + h_0)\right]^3} =: B(y, h(t), g(t)), \\[2mm]
\displaystyle	\frac{\partial y}{\partial t} = \frac{-h'(t)\zeta(y) + g'(t)\xi(y)}{1 + \zeta'(y)(h(t) - h_0) - \xi'(y)(g(t) + h_0)} \\
\ \ \ \ \ =: -h'(t)C(y, h(t), g(t)) + g'(t)D(y, h(t), g(t)).
\end{cases}
\end{equation}
Define $U(t, y) := v(t, x)$ for $(t, x) \in \Omega_T$. Then  \eqref{d} is reduced to
{\small \begin{equation}\label{2.5}
\begin{cases}
	U_t - dA U_{yy} - (h'C - g'D + d B) U_y = f(t, \Psi(t, y),u(t, \Psi(t, y)), & 0 < t \leq T, \, -h_0 < y < h_0,\\
		U(t, h_0) = U(t, -h_0) = \delta, & 0 < t \leq T,\\
			h'(t) = -\frac d\delta U_y(t, h_0), \quad &0 < t \leq T,\\
				g'(t) = -\frac d\delta U_y(t, -h_0), \quad &0 < t \leq T,\\
					h(0) = h_0, \quad g(0) = -h_0, \quad
						U(0, y) = u_0(y), \quad& -h_0 \leq y \leq h_0.
\end{cases}
\end{equation}}

If $U$ solves \eqref{2.5}, then through the transformation $\Psi$ we can obtain 
\[
v(t,x)=U(t, \Psi^{-1}(t, x)), 
\]
which solves \eqref{d}. Here $\Psi^{-1}(t,\cdot)$ is the inverse of $\Psi(t,\cdot)$, i.e., 
\[\mbox{$y=\Psi^{-1}(t, \Psi(t,y))$ for $t\in [0, T],\ y\in [-h_0, h_0]$.}
\]

{\bf Step 2}: An extension trick, and application of $L^p$ theory and Sobolev embeddings.

Define $T_1$ as 
\[
T_1 := \min\left\{\frac{h_0}{12(1+h^0)}, \frac{h_0}{12(1-g^0)}\right\},
\]
where $h^0 = -\frac d\delta U_y(0, h_0)$ and $g^0 = -\frac d\delta U_y(0, -h_0)$. For $T \in (0, T_1]$, let 
\[
D_T := \{(t, y) :  0 \leq t \leq T, -h_0 \leq y \leq h_0\}.
\]
We now introduce the following function spaces:
\begin{align*}
&X_{1, T} := \{U \in C(D_T) : U(0, y) = u_0(y), \|U - u_0\|_{C(D_T)} \leq 1\},\\
&X_{2, T} := \{h \in C^1([0, T]) : h(0) = h_0, h'(0) = h^0, \|h' - h^0\|_{C([0, T])} \leq 1\},\\
&X_{3, T} := \{g \in C^1([0, T]) : g(0) = -h_0, g'(0) = g^0, \|g' - g^0\|_{C([0, T])} \leq 1\}.
\end{align*}
Clearly, the product space $X_T := \prod_{i=1}^3 X_{i, T}$ is a complete metric space when equipped with the metric: 
\[
d((U_1, h_1, g_1), (U_2, h_2, g_2)) := \|U_1 - U_2\|_{C(D_T)} + \|h_1' - h_2'\|_{C([0, T])} + \|g_1' - g_2'\|_{C([0, T])}.
\]
 For $0 < T < T_1$, we define a subspace of $X_{T_1}$, denoted by $X_{T_1}^T := \prod_{i=1}^3 X_{i, T_1}^T$, as follows
\begin{align*}
&X_{1,T_1}^{T} := \{U \in X_{1, T_1} : U(t,y) = U(T, y) \text{ for } T \leq t \leq T_1\},\\
&X_{2,T_1}^{T} := \{h \in X_{2, T_1} : h(t) = h(T) \text{ for } T \leq t \leq T_1\},\\
&X_{3,T_1}^{T} := \{g \in X_{3, T_1} : g(t) = g(T) \text{ for } T \leq t \leq T_1\}.
\end{align*}
Thus, for each $(U, h, g) \in X_T$, we can extend it to a member of $X_T^{T_1}$. Henceforth, we will identify $X_T$ with $X_T^{T_1}$.

For $(U, h, g) \in X_T = X^T_{T_1} \subseteq X_{T_1}$, using the expressions for $A$, $B$, and $C$ from \eqref{2.4}, we obtain the following estimate
\begin{equation}\label{2.6a}
\begin{aligned}
\|h'C - g'D + dB\|_{C(D_{T_1})} \leq \sup_{(y, t) \in D_{T_1}} \frac{|h'(t)\zeta(y) - g'(t)\xi(y)|}{1 + \zeta'(y)(h(t) - h_0) - \xi'(y)(g(t) + h_0)} \\
+ d \sup_{(y, t) \in D_{T_1}} \frac{|\zeta''(y)(h(t) - h_0) + \xi''(y)(g(t) + h_0)|}{\left[1 + \zeta'(y)(h(t) - h_0) - \xi'(y)(g(t) + h_0)\right]^3} \leq L_1,
\end{aligned}
\end{equation}
where $L_1$ depends on $h_0$, $h^0$, and $g^0$. Furthermore, for $(t, y) \in D_{T_1}$, it is straightforward to verify
\begin{equation}\label{2.7a}
	\frac{d}{4} \leq dA(y, h(t), g(t)) \leq 36d.
\end{equation}

For two points $Y_1 = (s_1, y_1)$ and $Y_2 = (s_2, y_2)$ in $D_{T_1}$, with parabolic distance 
\[\delta(Y_1, Y_2) = \sqrt{(y_1 - y_2)^2 + |s_1 - s_2|},
\]
 we define
 \begin{equation}\label{2.8a}
\begin{aligned}
	\omega(R) :=&\ d \max_{Y_1, Y_2 \in D_{T_1}} \sup_{\delta(Y_1, Y_2) \leq R} 
	\left| A(y_1, h(s_1), g(s_1)) - A(y_2, h(s_2), g(s_2)) \right|\\
	\leq&\ 1296 \, d \bigg| 
{\left( 1 + \zeta'(y_2)(h(s_2) - h_0) - \xi'(y_2)(g(s_2) + h_0) \right)^2}\\
	& \hspace{1.2cm} - {\left( 1 + \zeta'(y_1)(h(s_1) - h_0) - \xi'(y_1)(g(s_1) + h_0) \right)^2}
	\bigg| \\
	\leq &\ l_1 \left( \frac{h^0 + g^0 + 2}{h_0} R + h_0 \right) R \to 0 \quad \text{as } R \to 0,
\end{aligned}
\end{equation}
where $l_1$ is some suitably large constant depending on $h_0$, $h^0$, and $g^0$. 

 For $(u,g, h)$ in \eqref{d}, we write $U(t,y):=u(t,\Psi^{-1}(t,y))$ and consider the initial boundary value problem
\begin{equation}\label{2.7}
	\begin{cases}
		\bar{U}_t - dA \bar{U}_{yy} - (h'C - g'D + d B)\bar{U}_y = f(t, \Psi^{-1}(t,y),U), & -h_0 < y < h_0, \, 0 < t \leq T_1, \\
		\bar{U}(t, h_0) = \bar{U}(t, -h_0) = \delta, & 0 < t \leq T_1, \\
		\bar{U}(0, y) = u_0(y), & -h_0 \leq y \leq h_0.
	\end{cases}
\end{equation}
Using $L^p$ theory and the Sobolev embedding theorem, we conclude that \eqref{2.7} has a unique solution $\bar{U}$ satisfying
\begin{equation}\label{2.8}
	\|\bar{U}\|_{C^{(1+\alpha)/2, 1+\alpha}(D_{T_1})} \leq C_{T_1} \|\bar{U}\|_{W^{2,1}_p(D_{T_1})} \leq K_1,
\end{equation}
where $p > 3/(2 - \alpha)$, and $K_1$ depends on $p$, $\|f(\cdot, \cdot, U(\cdot,\cdot))\|_{L^p(D_{T_1})}$, $\|u_0\|_{C^2([-h_0, h_0])}$, $L_1$, $D_{T_1}$, $l_1$, and $C_{T_1}$ depends on $D_{T_1}$ and $\alpha\in (0,1)$.

Define $\bar{h}(t)$ and $\bar{g}(t)$ by
$\bar{h}'(t) = -\frac{d}{\delta} \bar{U}_y(t, h_0)$, $\bar{g}'(t) = -\frac{d}{\delta} \bar{U}_y(t, -h_0)$. 
It is clear that $\bar{h}', \bar{g}' \in C^{\alpha/2}([0, T_1])$ and satisfy
\begin{equation}
	\|\bar{h}'\|_{C^{\alpha/2}([0, T_1])}, \|\bar{g}'\|_{C^{\alpha/2}([0, T_1])} \leq K_2,
\end{equation}
where $K_2$ depends on $K_1$.

Now define the mapping $F: X_T = X_T^{T_1} \to C(D_{T_1}) \times \left[C([0, T_1])\right]^2$ by
$F(U, h, g) = (\bar{U}, \bar{h}, \bar{g})$,
and then consider its restriction on $X_T$, namely
\[
\tilde{F}(U, h, g) := F(U, h, g)\big|_{X_T}.
\]
We observe   that $(U(t,y), h(t), g(t))$ is a fixed point of the mapping $\tilde{F}$ if and only if  $(U(t, \Psi^{-1}(t, x)), h(t), g(t))$ solves \eqref{c} for $t \in [0, T]$.

{\bf Step 3:} We show that $\tilde{F}$ is a contraction mapping when $T > 0$ is sufficiently small.

For a fixed $0 < T < \min\{T_1, K_1^{-\frac{2}{1+\alpha}}, K_2^{-\frac{2}{\alpha}}\}$, we have
\begin{align*}
&\|\bar U - u_0\|_{C(D_T)} \leq T^{\frac{1+\alpha}{2}} \|\bar U\|_{C^{0, (1+\alpha)/2}(D_T)} \leq T^{\frac{1+\alpha}{2}} \|U\|_{C^{0, (1+\alpha)/2}(D_{T_1})} \leq K_1 T^{\frac{1+\alpha}{2}} \leq 1,\\
&\|\bar h'(t) - h^0\|_{C([0, T])} \leq T^{\frac{\alpha}{2}} \|\bar{h}'\|_{C^{\alpha/2}([0, T_1])} \leq T^{\frac{\alpha}{2}} \|\bar{h}'\|_{C^{\alpha/2}([0, T_1])} \leq K_2 T^{\frac{\alpha}{2}} \leq 1,\\
&\|\bar g'(t) - g^0\|_{C([0, T])} \leq T^{\frac{\alpha}{2}} \|\bar{g}'\|_{C^{\alpha/2}([0, T_1])} \leq T^{\frac{\alpha}{2}} \|\bar{g}'\|_{C^{\alpha/2}([0, T_1])} \leq K_2 T^{\frac{\alpha}{2}} \leq 1.
\end{align*}
These inequalities confirm that $\tilde{F}$ maps $X_T$ into itself.

We now proceed to show that $\tilde{F}$ is a contraction on $X_T$ for small enough $T > 0$. Let $(U_i, h_i, g_i) \in X_T = X_T^{T_1}$ for $i = 1, 2$, and define $W := \bar{U}_1 - \bar{U}_2$.
Then $W$ satisfies the following system
\begin{equation}\label{2.12}
	\begin{cases}
		W_t - dA_1 W_{yy} - (dB_1 + h_1'C_1  - g_1'D_1 ) W_y = \Phi, & -h_0 < y < h_0, \, 0 < t \leq T_1, \\
		W(t, h_0) = W(t, -h_0) = 0, & 0 < t \leq T_1, \\
		W(0, y) = 0, & -h_0 \leq y \leq h_0,
	\end{cases}
\end{equation}
where
\begin{align*}
\Phi =& (dB_1 - dB_2 + h_1'C_1 - h_2'C_2 - g_1'D_1+ g_2'D_2) \bar{U}_{2, y} \\
&+ (dA_1 - dA_2) \bar{U}_{2, yy} + f(x,t,U_1) - f(x,t,U_2).
\end{align*}
Here, $A_i := A(y, h_i, g_i)$, $B_i := B(y, h_i, g_i)$, $C_i := C(y, h_i, g_i)$, and $D_i := D(y, h_i, g_i)$ for $i = 1, 2$. A direct calculation yields
\begin{align*}
	&\|A_1 - A_2\|_{C(D_{T_1})} = \sup_{(t,y) \in D_{T_1}} \left| A(y, h_1(t), g_1(t)) - A(y, h_2(t), g_2(t)) \right| \\
	= &\sup_{(t,y) \in D_{T_1}} \bigg| 
	\frac{1}{\left(1 + \zeta'(y)(h_1(t) - h_0) - \xi'(y)(g_1(t) + h_0)\right)^2}\\
	&\quad\quad\quad\ \ - \frac{1}{\left(1 + \zeta'(y)(h_2(t) - h_0) - \xi'(y)(g_2(t) + h_0)\right)^2}
	\bigg| \\
	&\leq S_1 \left( \|h_1 - h_2\|_{C([0, T_1])} + \|g_1 - g_2\|_{C([0, T_1])} \right),
\end{align*}
where $S_1$ is a constant depending on $h_0$, $h^0$, and $g^0$, but independent of $T_1$. Similarly, we deduce
\begin{align*}
	&\|B_1 - B_2\|_{C(D_{T_1})}, \, \|C_1 - C_2\|_{C(D_{T_1})}, \, \|D_1 - D_2\|_{C(D_{T_1})}
\\ 
\leq&\  S_2
\left( \|h_1 - h_2\|_{C([0, T_1])} + \|g_1 - g_2\|_{C([0, T_1])} \right),
\end{align*}
where $S_2$ depends similarly on  parameters. Moreover, by \eqref{2.4}, 
\begin{equation*}
	\|C_2\|_{C(D_{T_1})} + \|D_2\|_{C(D_{T_1})} \leq 12 =: S_3.
\end{equation*}
By combining these estimates with  \eqref{2.8}, we deduce that for sufficiently small $T > 0$, 
\begin{equation}\label{2.14}\begin{aligned}
\|\Phi\|_{L^p(D_{T_1})} \leq &\ \|dB_1 - dB_2 + h_1'C_1 - h_2'C_2 - g_1'D_1 + g_2'D_2\|_{L^\infty(D_{T_1})} \|\bar{U}_{2,y}\|_{L^p(D_{T_1})} \\
 & + d\|A_1 - A_2\|_{L^\infty(D_{T_1})} \|\bar{U}_{2,yy}\|_{L^p(D_{T_1})} + \|f(\cdot,\cdot,U_1) - f(\cdot,\cdot,U_2)\|_{L^p(D_{T_1})}\\
\leq &\ S_1^* \left( \|U_1 - U_2\|_{C(D_{T_1})} + \|h_1 - h_2\|_{C^1([0,T_1])} + \|g_1 - g_2\|_{C^1([0,T_1])} \right),
\end{aligned}
\end{equation}
where $S_1^*$ depends on $D_{T_1}$, $K_1$, $f$, and $S_i$ for $i = 1, 2$.

Using \eqref{2.6a}, \eqref{2.7a}, \eqref{2.8a}, and \eqref{2.14}, we can apply the $L^p$ estimates to \eqref{2.12} and the Sobolev embedding theorem to derive
\begin{align*}
	&\|W\|_{C^{1+\alpha/2, 1+\alpha}(D_{T_1})} \leq C_{T_1} \|W\|_{W_p^{1,2}(D_{T_1})}\\
\leq &K_3 \left( \|U_1 - U_2\|_{C(D_{T_1})} + \|h_1 - h_2\|_{C^1([0, T_1])} + \|g_1 - g_2\|_{C^1([0, T_1])} \right),
\end{align*}
where $K_3$ is determined by $p$, $D_{T_1}$, $S_1^*$, $L_1$, $l_1$, and $C_{T_1}$. Furthermore, from the definitions of $\bar{h}_i$ and $\bar{g}_i$, we obtain
\begin{align*}
&	\|\bar{h}_1' - \bar{h}_2'\|_{C^{\alpha/2}([0, T_1])} \leq \frac{d}{\delta} \|\bar{U}_{1, y}(\cdot, h_0) - \bar{U}_{2, y}(\cdot, h_0)\|_{C^{0, \alpha/2}(D_{T_1})},\\
&\|\bar{g}_1' - \bar{g}_2'\|_{C^{\alpha/2}([0, T_1])} \leq \frac{d}{\delta} \|\bar{U}_{1, y}(\cdot, -h_0) - \bar{U}_{2, y}(\cdot, -h_0)\|_{C^{0, \alpha/2}(D_{T_1})}.
\end{align*}
Then combining the above estimates, we conclude
\begin{align*}
	&\|\bar U_1 - \bar U_2\|_{C^{(1+\alpha)/2, 1+\alpha}(D_{T_1})} + \|\bar{h}_1' - \bar{h}_2'\|_{C^{\alpha/2}([0, T_1])} + \|\bar{g}_1' - \bar{g}_2'\|_{C^{\alpha/2}([0, T_1])}\\
\leq &\ K_4 \left( \| U_1 -  U_2\|_{C(D_{T_1})} + \|h_1 - h_2\|_{C^1([0, T_1])} + \|g_1 - g_2\|_{C^1([0, T_1])} \right),
\end{align*}
where $K_4$ depends on $d$, $\delta$, and $K_3$. Finally, setting $T = \min\{ \frac{1}{2}, T_1, K_1^{-\frac{2}{1+\alpha}}, K_2^{-\frac{2}{\alpha}}, (2K_4)^{-\frac{2}{\alpha}} \}$ ensures 
\begin{align*}
&\|\bar{U}_1 - \bar{U}_2\|_{C(D_T)} + \|\bar{h}_1' - \bar{h}_2'\|_{C([0, T])} + \|\bar{g}_1' - \bar{g}_2'\|_{C([0, T])}\\
\leq&\ T^{\frac{1+\alpha}{2}} \|\bar{U}_1 - \bar{U}_2\|_{C^{\frac{1+\alpha}{2}, 1+\alpha}(D_{T_1})} + T^{\frac{\alpha}{2}} \|\bar{h}_1' - \bar{h}_2'\|_{C^{\frac{\alpha}{2}}([0, T_1])} + T^{\frac{\alpha}{2}} \|\bar{g}_1' - \bar{g}_2'\|_{C^{\frac{\alpha}{2}}([0, T_1])}\\
\leq&\ \frac{1}{2} \left( \|U_1 - U_2\|_{C(D_{T_1})} + \|h_1' - h_2'\|_{C([0, T_1])} + \|g_1' - g_2'\|_{C([0, T_1])} \right)\\
=&\ \frac{1}{2} \left( \|U_1 - U_2\|_{C(D_T)} + \|h_1' - h_2'\|_{C([0, T])} + \|g_1' - g_2'\|_{C([0, T])} \right),
\end{align*}
proving that $\tilde{F}$ is a contraction mapping on $X_T$. Hence, $\tilde{F}$ has a unique fixed point $(U, h, g)$ in $X_T$. Additionally, by the maximum principle, $U > 0$ in $[0, T] \times [-h_0, h_0]$.

{\bf Step 4:} Finally, we employ Schauder theory to derive additional regularity for the solution of \eqref{2.5} in $[-h_0, h_0] \times (0, T]$.

From Step 3, we know that $U \in C^{(1+\alpha)/2, 1+\alpha}(D_T)$ and $h, g \in C^{1+\alpha/2}([0, T])$. Consequently, the coefficients in \eqref{2.5} satisfy the following regularity properties
\begin{equation}
	dA \in C^{\alpha/2, \alpha}(D_T) \quad \text{and} \quad h'C - g'D + dB \in C^{\alpha/2, \alpha}(D_T).
	\label{2.15}
\end{equation}
However, since $u_0 \in C^2([-h_0, h_0])$, the Schauder theory cannot be applied to \eqref{2.5} direcly. To address this, we employ a standard cut-off technique. Let $\varepsilon > 0$ be any small constant, and choose $\xi^* \in C^\infty([0, T])$ such that
\[
\xi^*(t) = 
\begin{cases}
	1, & t \in [2\varepsilon, T], \\
	0, & t \in [0, \varepsilon].
\end{cases}
\]
Define $\tilde{U} := U \xi^*$. Using \eqref{2.5} we obtain 
\begin{equation}\label{2.16}
	\begin{cases}
		\tilde{U}_t = dA \tilde{U}_{yy} + (h'C - g'D + dB)\tilde{U}_y + F(t, y), & -h_0 < y < h_0, \, 0 < t < T, \\
		\tilde{U}(t, h_0) = \tilde{U}(t, -h_0) = \delta \xi^*(t), & 0 < t < T, \\
		\tilde{U}(0, y) = 0, & -h_0 \leq y \leq h_0,
	\end{cases}
\end{equation}
where $F(t, y) = U \xi_t^* + f(t,y,U) \xi^*$.

We may assume $\alpha\geq \beta$ (where $\beta$ is from \eqref{f2}), and then $F \in C^{\beta/2, \beta}([-h_0, h_0] \times [0, T])$.  In view of \eqref{2.15} and \eqref{2.7a}, we can apply the Schauder estimates  to \eqref{2.16} to obtain
\[
\|{U}\|_{C^{1+\beta/2, 2+\beta}([-h_0, h_0] \times [2\varepsilon, T])} \leq \|\tilde{U}\|_{C^{1+\beta/2, 2+\beta}([-h_0, h_0] \times [0, T])} \leq K \|F\|_{C^{\beta/2, \beta}([-h_0, h_0] \times [0, T])},
\]
where $K$ depends on $\varepsilon$, $h_0$, $\beta$, and $L_1$ in \eqref{2.6a}. Since $\varepsilon > 0$ can be arbitrarily small, it follows that $h, g \in C^{1+(1+\beta)/2}((0, T])$ and $u \in C^{1+\beta/2, 2+\beta}(\Omega_T)$.
Thus $(u, g, h)$ is a classical solution of \eqref{c} on $\Omega_T$.
\end{proof}

\subsection{Comparison principles and a priori bounds} In this subsection, we collect some results which are needed to prove the global existence and uniqueness of solutions to \eqref{c}.
We first give some comparison principles in a form that is more convenient to use than \cite[Lemma 2.2, Lemma 2.3]{d2024}.

\begin{lemma}[Comparison Principle 1]
	\label{lemma2.2b}
	Assume \eqref{f2} holds, $T \in (0, \infty)$, $g_*, \underline h,  \bar{h} \in C^1([0, T])$ with $g_*(t)<\min\{\underline h(t),\bar h(t)\}$, $\underline u \in C^{0,1}(\overline D_T) \cap C^{1, 2}( D_T)$ with $D_T = \{(t, x) \in \mathbb{R}^2 : 0 < t \leq T,\ g_*(t) < x < \underline{h}(t)\}$, $\bar u \in C^{0,1}(\overline E_T) \cap C^{1, 2}(E_T)$ with $E_T = \{(t, x) \in \mathbb{R}^2 : 0 < t \leq T,\ g_*(t) < x < \bar{h}(t)\}$, and
	\begin{equation*}\label{3.20}
	\begin{cases}
	\underline u_t - d\underline u_{xx} - f(t,x,\underline u)\leq	\bar u_t - d\bar u_{xx} - f(t,x,\bar u), & 0 < t \leq T, \, g_*(t) < x < \min\{\underline h(t),\bar{ h}(t)\}, \\
		\underline u (t, g_*(t) \leq \bar u (t, g_*(t)),\ & 0 \leq t \leq T,  \\
		\bar u = \delta, \, \bar{h}'(t) \geq -\frac d\delta \bar u_x, & 0 < t \leq T, \, x = \bar{h}(t), \\
		\underline u = \delta, \, \underline {h}'(t) \leq -\frac d\delta \underline u_x, & 0 < t \leq T, \, x = \underline{h}(t), \\
t_0\in (0, T] \mbox{ and } \underline h(t_0) < \bar{h}(t_0) \implies  \bar u (t_0, \underline h(t_0))\geq \delta, & \\
		\underline h(0) < \bar h(0), \, \underline u(0,x) \leq \bar u(0, x), & x \in [g_*(0), \underline h(0)],
	\end{cases}
	\end{equation*}
	 then
	\[
	\underline h(t) < \bar{h}(t) \ {\rm for }\ t \in (0, T], \quad \underline u(t, x) < \bar{u}(t, x)\ {\rm for }\ t \in (0, T] \ {\rm and }\ g_*(t) < x < \underline{h}(t).
	\]
\end{lemma}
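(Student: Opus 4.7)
The plan is the standard free-boundary comparison argument in two stages: first show that the right fronts stay strictly ordered on $[0,T]$ by a Hopf-lemma contradiction at the would-be first contact time, and then deduce the interior ordering from the parabolic maximum principle applied to the difference $w:=\bar u-\underline u$.

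Set $t^{*}:=\sup\{\,t\in[0,T]:\underline h(s)<\bar h(s)\text{ for all }s\in[0,t]\,\}$. Continuity and the initial hypothesis give $t^{*}>0$. Assume for contradiction that $t^{*}\leq T$, so that $\underline h(t)<\bar h(t)$ on $[0,t^{*})$ and $\underline h(t^{*})=\bar h(t^{*})$. On the cylinder
\[
Q^{*}:=\bigl\{(t,x):0<t<t^{*},\ g_{*}(t)<x<\underline h(t)\bigr\}
\]
let $w=\bar u-\underline u$. The differential hypothesis and the uniform Lipschitz dependence of $f$ in $u$ on the bounded range of $\underline u,\bar u$ give
\[
w_{t}-dw_{xx}-c(t,x)w\geq 0\qquad\text{in }Q^{*}
\]
for some $c\in L^{\infty}(Q^{*})$. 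On the parabolic boundary $w\geq 0$: at $t=0$ and at $x=g_{*}(t)$ by hypothesis, and at $x=\underline h(t)$ because $\underline u=\delta$ while the standing assumption forces $\bar u(t,\underline h(t))\geq \delta$ whenever $\underline h(t)<\bar h(t)$. The weak parabolic maximum principle then yields $w\geq 0$ on $\overline{Q^{*}}$.

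At the contact corner $P^{*}:=(t^{*},\underline h(t^{*}))=(t^{*},\bar h(t^{*}))$ we have $w(P^{*})=\delta-\delta=0$, a minimum of $w$ on $\overline{Q^{*}}$. Since $\underline h\in C^{1}([0,T])$, an interior parabolic ball tangent to the lateral boundary $x=\underline h(t)$ at $P^{*}$ fits inside $Q^{*}$, so Hopf's lemma applies and gives $w_{x}(P^{*})<0$, i.e.\ $\bar u_{x}(P^{*})<\underline u_{x}(P^{*})$. Combining this with the Stefan-type hypotheses and $\underline h(t^{*})=\bar h(t^{*})$,
\[
\bar h'(t^{*})\ \geq\ -\tfrac{d}{\delta}\bar u_{x}(P^{*})\ >\ -\tfrac{d}{\delta}\underline u_{x}(P^{*})\ \geq\ \underline h'(t^{*}).
\]
But $\bar h-\underline h>0$ on $[0,t^{*})$ and vanishes at $t^{*}$, forcing $(\bar h-\underline h)'(t^{*})\leq 0$, a contradiction. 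Hence $\underline h(t)<\bar h(t)$ for all $t\in(0,T]$.

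With the fronts strictly ordered, the weak maximum principle argument of the previous paragraph applied on the full cylinder $D_{T}$ gives $w\geq 0$ on $\overline{D_{T}}$; the strict inequality $\underline u<\bar u$ in the interior then follows from the strong parabolic maximum principle applied to $w_{t}-dw_{xx}-cw\geq 0$, since an interior zero of $w$ would propagate to force $w\equiv 0$ on a prior cylinder, which is in turn ruled out by another Hopf-lemma application at $x=\underline h(t)$ exactly as above. The main technical issue I anticipate is the Hopf-lemma application at the corner point $P^{*}$: this needs an interior parabolic sphere condition at a point where $t=t^{*}$ meets the $C^{1}$ lateral curve $x=\underline h(t)$, which is provided by a standard corner-point Hopf lemma (cf.\ Friedman's parabolic PDE book), but must be invoked with some care.
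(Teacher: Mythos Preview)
Your proposal is correct and follows essentially the same route as the paper: argue by contradiction at the first contact time $t^{*}$ of the fronts, apply the strong maximum principle to $w=\bar u-\underline u$ on the region $\{0<t\leq t^{*},\ g_{*}(t)<x<\underline h(t)\}$, and then use the Hopf boundary lemma at $(t^{*},\underline h(t^{*}))$ to contradict $(\bar h-\underline h)'(t^{*})\leq 0$. The paper's proof is almost identical, only terser; your extra care about the corner-point Hopf application is reasonable but the paper simply invokes the Hopf lemma without comment, treating it as standard since $\underline h\in C^{1}$ supplies the interior sphere condition at the lateral boundary point.
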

\begin{proof} 
 We claim that
$\underline h(t)<\overline h(t)$ for all $t\in (0, T]$. Clearly this
is true for small $t>0$ since $\underline h(0)<\bar h(0)$. If our claim does not hold, then we can
find a first $t^*\leq T$ such that $\underline h(t)<\overline h(t)$
for $t\in (0, t^*)$ and $\underline h(t^*)=\overline h(t^*)$. It
follows that
\begin{equation}
\label{h-h}
 \underline h'(t^*)\geq \overline h'(t^*).
\end{equation}
 We now compare
$\underline u$ and $\overline u$ over the region
$$\Omega_{t^*}:=\{(t,x)\in\mathbb R^2: 0< t\leq t^*, g_* (t)< x< \underline h(t)\}.$$
 The strong maximum principle yields
$\underline u(t,x)<\overline u(t,x)$ in $\Omega_{t^*}$. Hence
$w(t,x):=\overline u(t,x)-\underline u(t,x)>0$ in $\Omega_{t^*}$ with
$w(t^*, \underline h(t^*))=0$. It then follows from the Hopf boundary lemma that $w_x(t^*, \underline h(t^*))< 0$, from which we deduce
\[
\overline h'(t^*)-\underline h'(t^*)\geq -\frac d\delta\Big[ \overline u_x(t^*, \overline h(t^*))-\underline u_x(t^*, \underline h(t^*))\Big]=-\frac d\delta w_x(t^*, \underline h(t^*))>0,
\]
which is a contradiction to 
\eqref{h-h}. This proves our claim that $\underline h(t)<\overline
h(t)$ for all $t\in (0, T]$. We may now apply the usual comparison
principle over $\Omega_{T}$ to conclude that $\underline u< \overline
u$ in $\Omega_T$.
\end{proof}

\begin{lemma}[Comparison Principle 2]	
	\label{lemma2.3b}
	Assume \eqref{f2} holds, $T \in (0, \infty)$, $\underline g, \underline h, \bar g, \bar{h} \in C^1([0, T])$ with $G(t):= \max\{\underline g(t), \bar g(t)\} <H(t):= \min\{\underline h(t),\bar{ h}(t)\}$
	for $t\in [0, T]$, $\underline u \in C^{0,1}(\bar D_T) \cap C^{1, 2}(D_T)$ with $D_T = \{(t, x) \in \mathbb{R}^2 : 0 < t \leq T,\  \underline g(t) < x < \underline{h}(t)\}$,  $\bar u \in C^{0,1}(\bar E_T) \cap C^{1, 2}(E_T)$ with $E_T = \{(t, x) \in \mathbb{R}^2 : 0 < t \leq T,\  \bar g(t) < x < \bar{h}(t)\}$, and
	\begin{equation*}\label{3.20}
	\begin{cases}
	\underline u_t - d\underline u_{xx} - f(t,x,\underline u)\leq	\bar u_t - d\bar u_{xx} - f(t,x,\bar u), & 0 < t \leq T, \, G(t) < x < H(t), \\
		\underline u = \delta, \, \underline g'(t) \geq -\frac d\delta \underline u_x, & 0 < t \leq T, \, x = \underline g(t), \\
		\underline u = \delta, \, \underline {h}'(t) \leq -\frac d\delta \underline u_x, & 0 < t \leq T, \, x = \underline{h}(t), \\
		\bar u = \delta, \, \bar g'(t) \leq -\frac d\delta \bar u_x, & 0 < t \leq T, \, x = \bar g(t), \\
		\bar u = \delta, \, \bar{h}'(t) \geq -\frac d\delta \bar u_x, & 0 < t \leq T, \, x = \bar{h}(t), \\	
t_0\in (0, T] \mbox{ and } \underline h(t_0) < \bar{h}(t_0) \implies  \bar u (t_0, \underline h(t_0))\geq \delta, &\\
t_0\in (0, T] \mbox{ and } \underline g(t_0) > \bar g(t_0) \implies  \bar u (t_0, \underline g(t_0))\geq \delta, &	\\
		[\underline g(0), \underline h(0)]\subset (\bar g(0), \bar h(0)), \, \underline u(0,x) \leq \bar u(0, x), & x \in [\underline g(0), \underline h(0)],
	\end{cases}
	\end{equation*}
	 then
	\[
	[\underline g(t), \underline h(t)]\subset (\bar g(t), \bar{h}(t)) \ {\rm for }\ t \in (0, T], \quad \underline u(t, x) < \bar{u}(t, x)\ {\rm for }\ t \in (0, T] \ {\rm and \ } \underline g(t) < x < \underline{h}(t).
	\]
\end{lemma}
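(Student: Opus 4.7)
The plan is to adapt the argument of Lemma \ref{lemma2.2b} to the two-sided moving boundary setting. The new difficulty compared with Comparison Principle 1 is that both endpoints of $[\underline g(t),\underline h(t)]$ can now collide with the corresponding endpoints of $(\bar g(t),\bar h(t))$, so I must simultaneously track the right contact ($\underline h=\bar h$) and the left contact ($\underline g=\bar g$).

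First, I would define
\[
t^*:=\sup\{t\in(0,T]:\underline g(s)>\bar g(s)\ \text{and}\ \underline h(s)<\bar h(s)\ \text{for all } s\in(0,t]\}.
\]
Continuity and the strict inclusion at $t=0$ ensure $t^*>0$. If $t^*<T$, then by continuity at least one of the equalities $\underline h(t^*)=\bar h(t^*)$ or $\underline g(t^*)=\bar g(t^*)$ holds, while the opposite endpoint still satisfies a strict inclusion. Assume by symmetry that $\underline h(t^*)=\bar h(t^*)$; then
\begin{equation*}
\underline h'(t^*)\geq\bar h'(t^*).
\end{equation*}

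Next I would compare $\underline u$ and $\bar u$ on
\[
\Omega_{t^*}:=\{(t,x):0<t\leq t^*,\ \underline g(t)<x<\underline h(t)\}.
\]
On the parabolic boundary of $\Omega_{t^*}$, the hypotheses give $\underline u\leq\bar u$: at $t=0$ by assumption, on $x=\underline h(t)$ via $\underline u=\delta\leq\bar u(t,\underline h(t))$ (using the implication stated in the hypothesis whenever $\underline h(t)<\bar h(t)$ and continuity at $t^*$), and similarly on $x=\underline g(t)$. The standard parabolic maximum principle applied to $w:=\bar u-\underline u$ (together with the Lipschitz bound on $f(t,x,\cdot)$) yields $w\geq 0$ in $\Omega_{t^*}$, and then the strong maximum principle gives $w>0$ in the interior. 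At the contact point we have $w(t^*,\underline h(t^*))=\delta-\delta=0$, so the Hopf boundary lemma produces $w_x(t^*,\underline h(t^*))<0$. Consequently
\[
\bar h'(t^*)-\underline h'(t^*)\geq-\tfrac{d}{\delta}\bigl[\bar u_x(t^*,\bar h(t^*))-\underline u_x(t^*,\underline h(t^*))\bigr]=-\tfrac{d}{\delta}w_x(t^*,\underline h(t^*))>0,
\]
contradicting the displayed inequality above. The case $\underline g(t^*)=\bar g(t^*)$ is handled symmetrically using the left-boundary inequalities.

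Having established $[\underline g(t),\underline h(t)]\subset(\bar g(t),\bar h(t))$ for all $t\in(0,T]$, the strict inequality $\underline u<\bar u$ on $D_T$ then follows from the usual strong parabolic maximum principle applied on $D_T$, since on its parabolic boundary $\underline u\leq\bar u$ for exactly the reasons used above. The main obstacle, as already indicated, is the bookkeeping at $t^*$: making sure that exactly one endpoint contact occurs first (or handling both by symmetry), and verifying that the hypothesis $\bar u(t_0,\underline h(t_0))\geq\delta$ (respectively on the left) gives the correct boundary ordering needed to invoke the maximum principle before the contact time. Everything else is a routine combination of strong maximum principle, Hopf lemma, and the prescribed Stefan-type conditions.
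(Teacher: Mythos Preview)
Your proposal is correct and follows exactly the approach the paper intends: the paper's own proof merely states that this is a simple variation of the proof of Lemma~\ref{lemma2.2b} and leaves the details to the reader, and your argument is precisely that variation, tracking both moving endpoints and applying the strong maximum principle plus the Hopf lemma at the first contact time.
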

\begin{proof} This is a simple variation of the proof of Lemma \ref{lemma2.2b}, and is left to the interested reader.
\end{proof}

\begin{remark} The functions $(\bar{u}, \bar{h})$ and $(\underline u, \underline h)$  in Lemmas \ref{lemma2.2b} are called a pair of upper and lower solutions to \eqref{c}; similarly, the function triples $(\bar{u}, \bar g, \bar{h})$ and $(\underline u, \underline g, \underline h)$ in Lemma \ref{lemma2.3b} are called a pair of upper and lower solutions to problem \eqref{c}. There is a symmetric version of Lemma \ref{lemma2.2b}, where the conditions on the left and right boundaries are swapped. \end{remark}
\medskip

The following result indicates that the population range $[g(t), h(t)]$ does not shrink to a single point in finite time.

\begin{lemma}\label{lemma2.4}
	\label{lemma:population_range}
	Assume \eqref{f2} and \eqref{f3} hold, $T \in (0, \infty)$, and $(u, g, h)$ solves \eqref{c} for $0 < t < T$. Then there exist positive constants $C_1(T)$ and $C_3(T)$ such that 
	\begin{align*}
		C_1(T)\leq u(t, x)\leq C_2, \ h(t)-g(t)\geq C_3(T) \quad \mbox{ for } t \in (0, T),\ x \in [g(t), h(t)],
	\end{align*}
	where  $C_2:=\max_{x\in [-h_0,h_0]}{u_0}(x)+M_0$ with $M_0$  given by \eqref{f3}. 
	\end{lemma}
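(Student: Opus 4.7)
The plan is to establish the three bounds in sequence, in the order stated.

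\textbf{Upper bound on $u$.} I take $C_2 := \max_{[-h_0,h_0]}u_0 + M_0$ and argue that this constant serves as a supersolution. On the parabolic boundary of the moving region $\{g(t)<x<h(t),\ 0\leq t\leq T\}$ one has $u_0\leq C_2$ initially and $u\equiv\delta\leq\max u_0\leq C_2$ on the lateral free boundaries (using $u_0(\pm h_0)=\delta$). Since $C_2\geq M_0$, \eqref{f3} gives $f(t,x,C_2)<0$. Writing $w:=u-C_2$, using the Lipschitz hypothesis in \eqref{f2} to linearise $f(t,x,u)-f(t,x,C_2)=c(t,x)w$ with bounded $c$, and applying the standard parabolic maximum principle on the moving domain yields $w\leq 0$, i.e. $u\leq C_2$.

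\textbf{Lower bound on $u$.} Let $L_f$ be a Lipschitz constant of $f(t,x,\cdot)$ on $[0,C_2]$ uniform in $(t,x)$, provided by \eqref{f2}. Since $f(t,x,0)\equiv 0$, we obtain $f(t,x,u)\geq -L_f u$ for $u\in[0,C_2]$. Setting $m_0:=\min_{[-h_0,h_0]}u_0>0$ (positive since $u_0\in X(h_0)$), I consider the spatially constant candidate $\underline u(t):=m_0 e^{-L_f t}$. A direct check shows $\underline u_t - d\underline u_{xx} - f(t,x,\underline u)\leq -L_f\underline u + L_f\underline u = 0$, while $\underline u\leq m_0\leq\delta$ on the lateral boundaries and $\underline u(0)=m_0\leq u_0$ at $t=0$. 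The parabolic comparison principle on the moving domain then gives $u(t,x)\geq m_0 e^{-L_f T}=:C_1(T)>0$.

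\textbf{Lower bound on $h(t)-g(t)$.} This is the central point; I would deduce it from the mass identity
\[
\frac{d}{dt}\int_{g(t)}^{h(t)} u(t,x)\,dx \;=\; \int_{g(t)}^{h(t)} f(t,x,u(t,x))\,dx.
\]
To derive it, differentiate $I(t):=\int_{g(t)}^{h(t)} u\,dx$ by Leibniz and substitute $u_t=du_{xx}+f$, obtaining
\[
I'(t)=\delta(h'(t)-g'(t)) + d\bigl[u_x(t,h(t))-u_x(t,g(t))\bigr] + \int_{g(t)}^{h(t)} f\,dx;
\]
the Stefan conditions $\delta h'(t)=-du_x(t,h(t))$ and $\delta g'(t)=-du_x(t,g(t))$ make the first two terms cancel exactly. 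Since $|f(t,x,u)|\leq L_f u$ on $[0,C_2]$, this gives $|I'(t)|\leq L_f I(t)$, so Gronwall yields $I(t)\geq I(0)e^{-L_f t}\geq 2h_0 m_0 e^{-L_f T}$ on $[0,T]$. Combining with the pointwise bound $I(t)\leq C_2\,L(t)$ from Step 1 produces $L(t):=h(t)-g(t)\geq (2h_0 m_0/C_2)e^{-L_f T}=:C_3(T)$.

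\textbf{Where the difficulty lies.} The only delicate step is the exact cancellation in the derivation of the mass identity, which depends essentially on the precise form of the Stefan conditions in \eqref{c}; once that identity is in hand, the remainder is a routine Gronwall estimate. In particular, no fine boundary-gradient estimate or intricate barrier construction is needed, because the Lipschitz assumption in \eqref{f2} combined with $f(t,x,0)\equiv 0$ automatically supplies the pointwise bound $|f|\leq L_f u$ that powers both Step 2 and Step 3.
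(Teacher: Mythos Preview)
Your proposal is correct and follows essentially the same route as the paper: a constant supersolution for the upper bound, the exponential subsolution $m_0e^{-L_ft}$ for the lower bound, and the mass identity $\frac{d}{dt}\int_{g}^{h}u\,dx=\int_{g}^{h}f\,dx$ combined with Gronwall for the width estimate. The only cosmetic difference is that the paper first compares $u$ with the solution of an auxiliary initial--boundary value problem before invoking the constant bound $C_2$, whereas you use $C_2$ directly as a supersolution; the content is the same.
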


\begin{proof}
	Consider the initial boundary value problem
	\begin{equation}\label{2.17}
		\begin{cases}
			v_t - dv_{xx} = f(x,t,v), & t \in (0, T), \, x \in (g(t), h(t)), \\
			v(t, g(t)) = v(t, h(t)) = \delta, & t \in (0, T), \\
			v(0, x) = v_0, & x \in [-h_0, h_0],
		\end{cases}
	\end{equation}
	with $v_0=0$ or $v_0=\max_{x\in [-h_0,h_0]}{u_0}(x)+M$. Denote by $\underline v$ the solution of \eqref{2.17} with initial condition $v_0=0$, and $\bar v$  the solution of \eqref{2.17} with initial condition $v_0=\max_{x\in [-h_0,h_0]}{u_0}(x)+M$.
	By standard comparison principle, we have
	\begin{align*}
		C_2\geq \bar v(t,x)\geq u(t, x) \geq  \underline v(t, x) >  0 \quad \mbox{ for } t \in (0, T),\ x \in [g(t), h(t)],
	\end{align*}
	with $C_2=\max_{x\in [-h_0,h_0]}{u_0}(x)+M_0$. Then by \eqref{f2}, there exists $L > 0$ depending on $C_2$ such that 
	\begin{align}\label{2.17a}
		f(x,t,u) \geq -Lu \quad \mbox{ for } t \in [0, T),\ x \in [g(t), h(t)].
	\end{align}
	Let $v_1(t)$ be the solution of the ODE problem $v_1'=-Lv_1$ with $v_1(0)=\min_{x\in [-h_0,h_0]}{u_0}(x)$. A direct calculation gives $v_1(t)=e^{-Lt}\min_{x\in [-h_0,h_0]}{u_0}(x)<\delta$ for $t> 0$. By the comparison principle, we obtain
	\begin{align*}
		u(t,x)\geq e^{-Lt}\min_{x\in [-h_0,h_0]}{u_0}(x) \quad \mbox{ for } t \in [0, T),\ x \in [g(t), h(t)].
	\end{align*}
	This proves the first part of the lemma with $C_1(T)=e^{-LT}\min_{x\in [-h_0,h_0]}{u_0}(x)$.
	
	Define
	\[
	U(t) := \int_{g(t)}^{h(t)} u(t, x) \, dx.
	\]
	Then, for $t \in (0, T)$,
	\[\begin{aligned}
	U'(t) &= h'(t)u(t, h(t)) - g'(t)u(t, g(t)) + \int_{g(t)}^{h(t)} u_t(t, x) \, dx\\
	&=\delta h'(t) - \delta g'(t) + \int_{g(t)}^{h(t)} \big[du_{xx} + f(t,x, u(t,x))\big] \, dx\\
	 &=\int_{g(t)}^{h(t)} f(t,x,u(t,x)) \, dx \geq -L \int_{g(t)}^{h(t)} u(t,x) \, dx = -L U(t).
	 \end{aligned}
	\]
	Thus, $U(t) \geq U(0)e^{-Lt} > 0 \text{ for } t \in (0, T)$.
	Recalling that $0 <u(t, x) \leq C_2 \text{ in } D_T$, we deduce
	\begin{align*}
		C_2[h(t)-g(t)]\geq U(t) \geq U(0)e^{-Lt} \geq 2h_0  \min_{x\in [-h_0,h_{0}]} u_0(x)e^{-Lt},
	\end{align*}
	which yields 
	\[
h(t)-g(t)\geq \frac{2h_0}{C_2}e^{-Lt} \min_{x\in [-h_0,h_{0}]} u_0(x).
	\]
	This completes the proof of the second part.
\end{proof}

The proof of the following a priori estimates require new techniques beyond \cite{d2024}.

\begin{lemma}\label{lemma2.5}
	Assume \eqref{f2} holds and $(u, g, h)$ is a solution to \eqref{c} defined for $t \in [0, T)$ with $T \in (0, \infty)$. Then there exists a constant $C_4(T) > 0$
	such that
	\[
	|g'(t)|, |h'(t)|\leq C_4(T) \quad {\rm for }\ t \in [0, T).
	\]
	\end{lemma}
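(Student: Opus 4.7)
\emph{Plan.} Since the Stefan-type free boundary conditions give $|g'(t)|=(d/\delta)|u_x(t,g(t))|$ and $|h'(t)|=(d/\delta)|u_x(t,h(t))|$, the lemma reduces to a uniform bound on $|u_x|$ at the two free boundaries for $t\in[0,T)$. I describe the argument at $x=h(t)$; the treatment at $x=g(t)$ is symmetric. From Lemma~\ref{lemma2.4} we have $C_1(T)\le u\le C_2$ on the parabolic cylinder and $h(t)-g(t)\ge C_3(T)>0$. Fix $L\in(0,C_3(T)/4)$ and work on the thin curved strip $\Omega^h_L:=\{(t,x):0\le t<T,\ h(t)-L<x<h(t)\}$, which lies strictly inside the parabolic domain and is disjoint from $x=g(t)$.

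I would construct one-sided space-time barriers tailored to each sign of $h'$. Using the quadratic ansatz
\[
\bar w(t,x)=\delta+A(h(t)-x)-B(h(t)-x)^2,\qquad \underline w(t,x)=\delta-A(h(t)-x)+B(h(t)-x)^2,
\]
with $B$ so large that $2dB\ge \|f\|_{L^\infty([0,T]\times\mathbb R\times[0,C_2])}$, $A\ge 2BL$ (so that $\bar w_x\le 0$ and $\underline w_x\ge 0$ on $[h(t)-L,h(t)]$ with $|\bar w_x|,|\underline w_x|\le A$), and $AL-BL^2\ge C_2-\delta$ (so that $\bar w\ge C_2\ge u$ and $\underline w\le 2\delta-C_2\le u$ at $x=h(t)-L$). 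A direct computation using $\bar w_t=-\bar w_x h'(t)$ and $\bar w_{xx}=-2B$ gives
\[
\bar w_t-d\bar w_{xx}-f(t,x,\bar w)=-\bar w_x\,h'(t)+2dB-f\ \ge\ 2dB-\|f\|_\infty\ \ge\ 0\qquad\text{when } h'(t)\ge 0,
\]
since $\bar w_x\le 0$ and $h'\ge 0$ make $-\bar w_x h'\ge 0$; the analogous computation shows that $\underline w$ is a subsolution when $h'(t)\le 0$. On the sub-regime where the corresponding sign condition on $h'$ holds, the comparison principle (Lemma~\ref{lemma2.2b}) yields $\underline w\le u\le\bar w$ on $\Omega^h_L$, and then the Hopf-type slope inequality at $x=h(t)$ gives $|u_x(t,h(t))|\le A$ and hence $|h'(t)|\le dA/\delta$.

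The main technical obstacle, and the point where new technique beyond \cite{d2024} is required, is the \emph{patching} of the two sign regimes: for Allee-type $f$ the sign of $u-\delta$ near the free boundary is not controlled (unlike the monostable case of \cite{d2024}, where one-sided monotonicity made only one barrier needed), so $h'$ may oscillate between positive and negative values. The proposed resolution is, at each target time $t_0\in(0,T)$, to isolate a short subinterval $(t_0-\tau,t_0]$ on which $h'$ has constant sign -- which is possible since $h'\in C^{\alpha/2}$ by Theorem~\ref{th2.1} -- and to augment the ansatz by a monotone time-correction $\eta(t)\ge 0$ with $\eta(t_0-\tau)\ge C_2-\delta$ and $\eta(t_0)=0$. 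The role of $\eta$ is twofold: it absorbs the unfavourable parabolic initial data at $t=t_0-\tau$ so that the comparison can be initiated, while vanishing at $t_0$ so that $\bar w(t_0,h(t_0))=u(t_0,h(t_0))=\delta$ and Hopf still delivers the slope estimate. The constant $B$ must be enlarged so that $2dB\ge\|f\|_\infty+|\eta'|$, and $\tau$ must be chosen (uniformly in $t_0$) small enough to preserve sign-constancy of $h'$ on $(t_0-\tau,t_0]$; making these choices quantitative and uniform is the technical crux of the argument.

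Carrying out the symmetric construction at the left free boundary $x=g(t)$ yields the matching bound on $|g'(t)|$. Combining these, and setting $C_4(T):=dA/\delta$, completes the proof. The main obstacle, as indicated, is the local-in-time patching of the two one-sided barrier comparisons; the chief new ingredient is the time-correction $\eta$ and the use of the continuity of $h'$ from Theorem~\ref{th2.1} to make the patching work with constants depending only on $C_1(T),C_2,C_3(T),\delta,d$ and $\|f\|_{L^\infty}$.
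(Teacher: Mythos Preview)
Your barrier constructions are along the right lines, but the patching argument has a genuine gap that renders the proof circular. You claim that for each $t_0$ one can isolate a subinterval $(t_0-\tau,t_0]$ on which $h'$ has constant sign, citing $h'\in C^{\alpha/2}$; but H\"older continuity does not imply local sign-constancy (consider behaviour of the type $h'(t)\sim|t-t_0|^{\alpha}\sin(1/(t-t_0))$). Even restricting to $t_0$ with $h'(t_0)\ne 0$, the length $\tau$ of any sign-constant subinterval is controlled by $|h'(t_0)|$ together with the H\"older seminorm of $h'$, and neither of these is bounded uniformly on $[0,T)$ a priori: the H\"older bound from Theorem~\ref{th2.1} comes from Schauder estimates on compact subintervals, and its uniformity up to $T$ is exactly what the present lemma is needed to establish. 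Your correction $\eta$ then has $|\eta'|\sim(C_2-\delta)/\tau$, forcing $B$ (and hence $A$) to depend on $\tau$, so the final constant $dA/\delta$ depends on $t_0$ in an uncontrolled way. You correctly identify this uniformity as ``the technical crux,'' but the proposal does not actually resolve it.

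The paper sidesteps the sign-patching issue entirely by a bootstrap/continuation argument. One fixes $m>0$ and sets $T_1:=\sup\{t\in[0,T):h'(s)>-dm\text{ for all }s\in[0,t)\}$. On $[0,T_1)$ the \emph{assumed} lower bound $h'\ge -dm$ is fed back into the computation to make an exponential lower barrier $\underline u(t,x)=\delta\big[c(e^{-(h(t)-x)m/c}-1)+1\big]$ a subsolution: the exponential profile gives $\underline u_t=-h'(t)\,\delta m\,e^{-(h(t)-x)m/c}$ and $\underline u_{xx}=(\delta m^2/c)\,e^{-(h(t)-x)m/c}$, so $h'\ge -dm$ converts directly into $\underline u_t\le cd\,\underline u_{xx}$, and choosing $c<1$ leaves room to absorb $f$. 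With $h'\ge -dm$ in hand on $[0,T_1)$, a quadratic upper barrier (now with the unfavourable $h'$-term controlled by $-dm$, not by a sign assumption) yields the upper bound on $h'$ there. Finally, Hopf's lemma applied to $u-\underline u$ at $(T_1,h(T_1))$ gives the \emph{strict} inequality $h'(T_1)>-dm$, contradicting the maximality of $T_1$ unless $T_1=T$. Thus both bounds hold on all of $[0,T)$, with constants depending only on $d,\delta,u_0,f,T$ --- and no appeal to regularity of $h'$ beyond its continuity at $t=0$.
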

	We will see from the proof below that $C_4(T)$ depends only on $T$, $d$, $\delta$, $u_0$ and $f$.

\begin{proof}
	In the following, we only prove the boundedness of $|h'|$, as the proof for $|g'|$ is similar.  
	
	The key is to construct a lower solution to bound  $h'(t)$ from below. 
	Define, for some constants $0 < c < 1$, $k > 0$ and $m > 0$ to be specified later, 
	\begin{align*}
		\begin{cases} 
		\theta(s) := c(e^{-s/c} - 1) \mbox{ for  } s \geq 0,\\
		\underline{u}(t,x): = \delta [\theta((h(t)-x)m) + 1]  \mbox{ for } t \geq 0, \, x \in [h(t)-k/m, h(t)].
	\end{cases}
	\end{align*}
	 We are going to show that, for suitable choices of $(c,k,m)$,  
	\begin{equation}\label{1}
		\begin{cases}
			\underline{u}_t < d \underline{u}_{xx} + f(t,x, \underline{u}), & t \in (0, T), \, x \in [h(t)-k/m, h(t)], \\
			\underline{u}(t,h(t)-k/m)  < u(t,h(t)-k/m), \ \underline{u}(t, h(t)) = \delta, & t \in [0, T), \\
			\underline{u}(0, x) \leq u_0(x), & x \in [h_0-k/m, h_0],
		\end{cases}
	\end{equation}
	and
	\begin{align}\label{2}
		h'(t) \geq -\frac{d}{\delta} \underline{u}_x(t, h(t)) = -dm, \quad t \in [0, T).
	\end{align}
	
	We first choose $c \in (0,1)$  close to $1$. Then define
	\begin{equation}\label{2.20m}
	\begin{aligned}
	&k := -c \ln\left(\frac{C_1}{c \delta} + 1 - \frac{1}{c}\right) > 0,\\
			&m := \max\left\{\frac{\max_{x \in [-h_0, h_0]} |u_0'(x)|}{\delta} e^{k/c}, \frac{2k}{C_3}, \sqrt{\frac{F_* c e^{k/c}}{d \delta (1-c)}}\right\}
	\end{aligned}
	\end{equation}
	with
	\begin{align}\label{2.23k}
		F_* := \sup_{t\geq 0, x\in\mathbb R, u \in [0, C_2]} |f(t,x,u)|,
	\end{align}
	where $C_1$, $C_2$, $C_3$ are as given in Lemma~\ref{lemma2.4}. Note that we always have $C_1 < \delta$.
	
	From Lemma~\ref{lemma2.4} and the expression for $k$, we obtain
	\begin{align*}
		&h(t) - k/m \in (g(t), h(t)) \quad {\rm for}\ t \in [0, T), \\
		&\underline{u}(t, h(t) - k/m) = C_1 \leq u(t, h(t) - k/m) \quad {\rm for}\ t \in [0, T).
	\end{align*}
	The definition of $\underline{u}$ gives $\underline{u}(t, h(t)) = \delta$. Thus, the desired relations in the second line of \eqref{1} hold. 
	
	A direct calculation gives 
	\begin{align*}
		\underline{u}_x(t,x) = m \delta e^{-[(h(t)-x)m]/c} \geq m \delta e^{-k/c} \quad \mbox{ for } t \geq 0, \, x \in [h(t)-k/m, h(t)].
	\end{align*}
	This and \eqref{2.20m} clearly yield $\underline{u}_x(0,x) \geq u_0'(x)$ for $x \in [h_0-k/m, h_0]$, which in turn implies, using $\underline{u}(0, h_0) = u_0(h_0) = \delta$, the following estimate
	\begin{align*}
		\underline{u}(0,x) \leq  u_0(x) \quad \mbox{ for } x \in [h_0-k/m, h_0),
	\end{align*}
	which is the third inequality in \eqref{1}.
	
	It remains to prove the first inequality in \eqref{1}, and in doing so we will complete the proof of the lemma.
	
	{\bf Step 1:}  We show that the first inequality in \eqref{1} holds for $t \in [0, T_1)$, where $T_1 \leq T$ is given by
	\begin{align*}
		T_1 := \sup\{t\in [0, T) : h'(s) > -\frac{d}{\delta} \underline{u}_x(s, h) = -dm \ {\rm for\ all}\ s \in [0, t)\}.
	\end{align*}
	Since $\underline{u}_x(0, h_0) = m \delta > u_0'(h_0)$, by the continuity of $h'(t)$ and $u_x(t,h(t))$ for $t\in [0, T)$, we have	
	\begin{align*}
		h'(0) = -\frac{d}{\delta} u_0'(h_0) > -\frac{d}{\delta} \underline{u}_x(0, h_0) = -dm,
	\end{align*}
	which implies that $T_1$ is well-defined, and 
	\begin{align}\label{2.20}
		h'(s) \geq -\frac{d}{\delta} \underline{u}_x(s, h) = -dm \quad {\rm for\ all}\ s \in [0, T_1).
	\end{align}
	(We will eventually show  that $T_1$ equals $T$.)
	
	A direct computation gives
	\begin{align*}
		\underline{u}_t = -h'(t) \delta m e^{-(h(t)-x)m/c}, \quad \underline{u}_x = \delta m e^{-(h(t)-x)m/c},
		 \quad \underline{u}_{xx} = \frac{\delta m^2}{c} e^{-(h(t)-x)m/c}.
	\end{align*}
	Using \eqref{2.20m}, \eqref{2.23k} and \eqref{2.20},  for $ t \in (0, T_1), \, x \in [h(t)-k/m, h(t)]$, we have
	\begin{align*}
		\underline{u}_t &\leq d \delta m^2 e^{-(h(t)-x)m/c} \leq  cd  \underline{u}_{xx}=d\underline u_{xx}-d(1-c)\frac{\delta m^2}{c} e^{-(h(t)-x)m/c}\\
		 &\leq d\underline u_{xx}-d(1-c)\frac{\delta m^2}{c} e^{-k/c} \leq d\underline u_{xx}- F_* \leq d \underline{u}_{xx} + f(t,x,\underline{u}).
	\end{align*}
	Hence, the first inequality in \eqref{1} holds for $0<t<  T_1$.
	
{\bf Step 2.} {We prove that for $t \in [0, T_1)$,
	\begin{align}\label{2.26}
		-dm \leq h'(t) \leq \frac 8 3M_*(C_2-\delta)d/\delta
	\end{align}
	for some $M_*$ defined later.}

The lower bound for $h'(t)$ has been proved in \eqref{2.20}. So we only need to prove the upper bound for $h'(t)$. Define
	\[
	\begin{aligned}
		&\Omega := \left\{(t, x) : 0 \leq t < T_1, \, h(t) - 1/(2M_*) < x < h(t)\right\}, \\
		&\bar{u}(t, x) := \frac 4 3 \left(C_2 - \delta\right)\left[2M_*(h(t)-x) - M_*^2(h(t)-x)^2\right] + \delta.
	\end{aligned}
	\]
with
\begin{align}\label{2.25M}
	M_* := \max\left\{\frac{1}{2C_3}, \sqrt{\frac{3F_*}{8d(C_2-\delta)}} + m, \frac{3\max_{x \in [-h_0, h_0]}|u_0'(x)|}{4(C_2-\delta)}\right\},
\end{align}
where $m$ and $F_*$ are given by \eqref{2.20m} and \eqref{2.23k}, respectively,   $C_2$ and $C_3$ are given by Lemma \ref{lemma2.4}. A direct calculation gives
\begin{align*}
	&\Omega \subset \{(t, x) : 0 \leq t < T_1, g(t) < x < h(t)\}, \\
	&\delta = \bar{u}(t, h(t)) \leq \bar{u}(t, x) \leq \bar{u}\left(t, h(t) - 1/(2M_*)\right) = C_2 \text{ for } (t, x) \in \Omega, \\
	&\bar{u}\left(t, h(t) - 1/(2M_*)\right) = C_2 \geq u\left(t, h(t) - 1/(2M_*)\right).
\end{align*}

Moreover, for $(t, x) \in \Omega$, by \eqref{2.20}, the estimate $\bar{u}(t,x) \leq C_2$, \eqref{2.23k} and \eqref{2.25M}, we deduce
\begin{align*}
	\bar{u}_t - d \bar{u}_{xx} - f(t,x,\bar{u}) &=\frac 4 3 \left(C_2 - \delta\right) h'(t) \left[2M_* - 2M_*^2(h(t)-x)\right] + d\frac 4 3\left(C_2 - \delta\right) 2M_*^2 - f(t,x,\bar{u}) \\
	&\geq -2dM_*m\frac 4 3(C_2 - \delta)[1 - M_*(h(t)-x)] + d\frac 4 3(C_2 - \delta) 2M_*^2 - F_* \\
	&\geq -2dM_*m\frac 4 3(C_2 - \delta) + d\frac 4 3(C_2 - \delta) 2M_*^2 - F_* \\
	& = \frac 8 3dM_*[M_*-m](C_2 - \delta) - F_* \geq 0.
\end{align*}

We show next that
\[
\bar{u}(0, x) \geq u_0(x) \text{ for } x \in \left[h_0 - 1/(2M_*), h_0\right].
\]
Indeed, by \eqref{2.25M}, for such $x$,
\begin{align*}
	\bar{u}_x(0, x) = -\frac 8 3M_*(C_2 - \delta)[1 - M_*(h_0 - x)] \leq -\frac 4 3M_*(C_2 - \delta) \leq u_0'(x).
\end{align*}
Thus, it follows from $\bar{u}(0, h_0) = u_0(h_0) = \delta$ that $\bar{u}(0, x) \geq u_0(x)$ for $x \in [h_0 - 1/(2M_*), h_0]$.

The above analysis allows us to apply the usual comparison principle over
\[
\left\{(t, x) : 0 \leq t < T_1, \, h(t) - 1/(2M_*) < x < h(t)\right\}
\]
to conclude that $u(t, x) \leq \bar{u}(t, x)$ in this region. Since $u(t, h(t)) = \bar{u}(t, h(t)) = \delta$, it follows that
\[
u_x(t, h(t)) \geq \bar{u}_x(t, h(t)) = -\frac 8 3M_*(C_2 - \delta) \quad \text{for } t \in [0, T_1).
\]
Hence,
\[
-dm \leq h'(t) = -\frac{d}{\delta} u_x(t, h(t)) \leq \frac 8 3M_*(C_2 - \delta)\frac{d}{\delta} \quad \mbox{ for } t \in (0, T_1).
\]

{\bf Step 3.} {We prove that \eqref{2}  holds, which then allows us to  repeat the argument in Step 2  with $T_1$ replaced by $T$ to conclude that \eqref{2.26} holds for all $t \in [0, T)$.}

If $T_1 = T$, then the proof is complete. Suppose now $T_1 < T$, which implies
\begin{align}\label{2.27}
	h'(T_1) = -\frac{d}{\delta} \underline{u}_x(T_1, h(T_1)) = -dm.
\end{align}
Let $w := u - \underline{u}$. Then 
\begin{equation*}
	\begin{cases}
		w_t > dw_{xx} + f(t,x,u) - f(t,x, \underline{u}) = dw_{xx} + c w, & t \in (0, T_1], \, x \in [h(t)-k/m, h(t)], \\
		w(t, h(t)-k/m) > 0, \ w(t, h(t)) = 0, & t \in [0, T_1], \\
		w(0, x) \geq 0, & x \in [h_0-k/m, h_0],
	\end{cases}
\end{equation*}
for some $c \in L^\infty$. By the comparison principle,
\[
w(t, x) > 0 \quad \mbox{ for }t \in (0, T_1], \, x \in [h(t)-k/m, h(t)).
\]
Since $w(T_1, h(T_1))=0$, we may now apply the Hopf boundary lemma to conclude that $w_x(T_1, h(T_1)) <0$. Therefore $u_x(T_1, h(T_1)) < \underline{u}_x(T_1, h(T_1))$ and
\[
h'(T_1) = -\frac{d}{\delta} u_x(T_1, h(T_1)) > -\frac{d}{\delta} \underline{u}_x(T_1, h(T_1)) = -dm,
\]
which contradicts \eqref{2.27}. Hence, $T_1 = T$ always holds. This indicates \eqref{2.26} holds for $t \in [0, T)$. The proof is complete.
\end{proof}



\subsection{Global existence}
\begin{theorem}\label{th-glob}
Assuming that \eqref{f2} and \eqref{f3} are satisfied, then the conclusions in Theorem \ref{th2.1} hold for any $T>0$.
\end{theorem}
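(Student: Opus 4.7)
The plan is to use the standard continuation argument combined with the a priori estimates already established. Let $(u, g, h)$ be the unique local solution from Theorem \ref{th2.1} and let $T_{\max}\in (0,\infty]$ be the supremum of all $T>0$ such that this solution can be extended as a solution of \eqref{c} on $[0, T]$ with the regularity stated in Theorem \ref{th2.1}. I aim to show $T_{\max}=\infty$ by deriving a contradiction from the assumption $T_{\max}<\infty$.

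Assume $T_{\max}<\infty$. By Lemma \ref{lemma2.4}, for any $T\in (0, T_{\max})$ we have $C_1(T_{\max})\leq u(t,x)\leq C_2$ and $h(t)-g(t)\geq C_3(T_{\max})$ on $[0, T]$, and by Lemma \ref{lemma2.5} we have $|g'(t)|, |h'(t)|\leq C_4(T_{\max})$ on $[0, T]$. In particular, the constants on the right-hand sides depend only on $T_{\max}$, not on $T$, so all these bounds hold uniformly on $[0, T_{\max})$. Consequently $g, h\in C^{0,1}([0, T_{\max}])$, and by extending continuously we obtain $g_* := \lim_{t\to T_{\max}^-} g(t)$ and $h_* := \lim_{t\to T_{\max}^-} h(t)$, with $h_*-g_*\geq C_3(T_{\max})>0$.

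The key step is to show that $u(t,\cdot)$ has uniform $C^2$-regularity as $t\to T_{\max}^-$, so that the local existence theorem can be restarted from a time close to $T_{\max}$ with a uniform lower bound on the local existence time. To this end, fix a small $\tau\in (0, T_{\max})$ and straighten the boundaries over $[\tau, T_{\max})$ via the diffeomorphism $\Psi$ introduced in Step 1 of the proof of Theorem \ref{th2.1}, but now adapted to $(g(t), h(t))$ for $t\in [\tau, T_{\max})$; since $h(t)-g(t)$ is bounded away from $0$ and $g, h\in C^{0,1}$, the transformed coefficients in the analogue of \eqref{2.5} are bounded and H\"older continuous in $t$ (here the additional uniform H\"older-continuity-in-$t$ hypothesis is not needed, since Lipschitz bounds on $g, h$ and bounded $u$ allow interior parabolic $L^p$ estimates followed by Schauder estimates on the transformed equation). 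A standard bootstrap using parabolic $L^p$ theory followed by Schauder estimates (with a cut-off near $t=\tau$ exactly as in Step 4 of the proof of Theorem \ref{th2.1}) yields a uniform bound
\[
\|u(t,\cdot)\|_{C^{2+\beta}([g(t), h(t)])}\leq K \quad \text{for all } t\in [2\tau, T_{\max}),
\]
where $K$ depends only on $\tau, T_{\max}, d, \delta, u_0, f$.

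With this uniform $C^{2+\beta}$-bound on $u(t,\cdot)$, together with the uniform lower bound on $h(t)-g(t)$, one can pick $t_0\in (2\tau, T_{\max})$ with $T_{\max}-t_0$ as small as desired and apply Theorem \ref{th2.1} with initial time $t_0$ and initial data $(u(t_0,\cdot), g(t_0), h(t_0))$ (after translating the spatial variable so that the midpoint of $[g(t_0), h(t_0)]$ is $0$, which does not affect the equation since \eqref{c} is invariant under spatial translation). Since the local existence time in Theorem \ref{th2.1} depends only on the half-length of the initial interval, $\alpha$, and $\|u(t_0,\cdot)\|_{C^2}$, all of which are uniformly controlled, we obtain a uniform $T_*>0$ such that the solution extends to $[t_0, t_0+T_*]$. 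Choosing $t_0>T_{\max}-T_*/2$ gives an extension past $T_{\max}$, and uniqueness in Theorem \ref{th2.1} lets us glue this continuation to $(u, g, h)$. This contradicts the definition of $T_{\max}$, so $T_{\max}=\infty$, completing the proof. The main obstacle is the regularity bootstrap at the free boundary (ensuring that the Schauder constants do not deteriorate as $t\to T_{\max}^-$), which is handled exactly by the cut-off argument already used in Step 4 of Theorem \ref{th2.1}.
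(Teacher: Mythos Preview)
Your proposal is correct and follows essentially the same approach as the paper: assume finite maximal existence time, use Lemmas \ref{lemma2.4} and \ref{lemma2.5} for uniform a priori bounds, bootstrap via the cut-off Schauder argument of Step 4 in Theorem \ref{th2.1} to get a uniform $C^{2+\beta}$ bound on $u(t,\cdot)$ near $T_{\max}$, then restart the local existence to derive a contradiction. One minor remark: your claim that \eqref{c} is invariant under spatial translation is not literally true when $f$ depends on $x$, but since the translated $f$ still satisfies \eqref{f2}--\eqref{f3}, Theorem \ref{th2.1} applies to the translated problem with the same constants, so the argument goes through.
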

\begin{proof}
 By Theorem \ref{th2.1}, we may assume that \eqref{c} has a unique solution $(u, g, h)$ defined on some maximal time interval $(0, T_m)$ with $T_m\in (0,\infty]$, and
\[
h, g \in C^{1+\frac{\beta}{2}}((0, T_m)), \quad u \in C^{1+\frac{\beta}{2}, 2+\beta}(\Omega_{T_m}),
\]
where
$\Omega_{T_m} := \{(t, x) : t \in (0, T_m), x \in [g(t), h(t)]\}$.
To complete the proof, we must demonstrate that $T_m = \infty$. Suppose $T_m < \infty$. Then, by the proof of Theorem \ref{th2.1}, along with Lemmas \ref{lemma2.4} and \ref{lemma2.5}, there exist positive constants $C_1, C_2=C_2(T_m)$, such that for $t \in [0, T_m)$ and $x \in [g(t), h(t)]$,
\[
0 \leq u(x, t) \leq C_1, \quad |h'(t)| + |g'(t)| \leq C_2, \quad |h(t)|, |g(t)| \leq C_2 t + h_0.
\]
For any small constant $\varepsilon > 0$, it follows from the proof of Theorem \ref{th2.1} and Lemmas \ref{lemma2.4} and \ref{lemma2.5} that $u, v \in C^{1+\beta, \frac{1+\beta}{2}}(\Omega_{T_m-\varepsilon})$. Thus, as in Step 4 of the proof of Theorem \ref{th2.1}, applying Schauder’s estimates, for any fixed $0 < T_0 < T_m - \varepsilon$, we obtain $\|u\|_{C^{2+\beta,1+\frac{\beta}{2} }(\Omega_{T_m - \varepsilon} \setminus \Omega_{T_0})} \leq Q^*$,
where $Q^*$ depends on $T_0$, $T_m$, and $C_i$ for $i = 1, 2$, but is independent of $\varepsilon$. Since $\varepsilon > 0$ can be made arbitrarily small, it follows that for any $t \in [T_0, T_m)$, 
\[
\|u(t, \cdot)\|_{C^{2+\beta}([g(t), h(t)])} \leq Q^*.
\]

By repeating the arguments used in the proof of Theorem \ref{th2.1}, we can conclude that there exists $T > 0$ small, depending on $Q^*$ and $C_i$ $(i = 1, 2)$, such that the solution to \eqref{c} with initial time $T_m - \frac{T}{2}$ can be extended uniquely to $t=T_m-\frac T2+T>T_m$, a contradiction to the definition that $T_m$ is the maximal time interval for the solution. Thus, we must have $T_m = \infty$. \end{proof}

\subsection{Proof of Theorem \ref{th3.1}} Let us first note that $f(t,x,1)\equiv 0$ and $f(t,x,u)\leq \bar f(u)<0$ for $u>1$ imply $\bar f(1)=0$.
Let $M_0=\max\{\|u_0\|_\infty, 1\}$ and $v(t)$ be the solution of 
\[
v'=\bar f(v),\ v(0)=M_0.
\]
 Since $\bar f(v)<0$ for $v>1$, we clearly have $1\leq v(t)\leq M_0$ and $v(t)\to 1$ as $t\to\infty$. Since $f(t,x,u)\leq\bar f(u)$ for $u\geq 1$, we obtain
\[
v_t-dv_{xx}=\bar f(v)\geq f(t,x,v) \mbox{ for } t>0, x\in [g(t), h(t)].
\]
We also have $v\geq 1>\delta=u$ for $x\in\{g(t), h(t)\}$ and $v(0)=M_0\geq u(0,x)$ for $x\in [-h_0, h_0]$. Therefore the standard comparison principle over the region $\{(t,x): t>0, x\in [g(t), h(t)]\}$ infers $u(t,x)\leq v(t)$ in this region. It follows that
\begin{equation}\label{limsup}
\limsup_{t\to\infty}u(t,x)\leq \lim_{t\to\infty}v(t)=1 \mbox{ uniformly for } x\in[g(t), h(t)].
\end{equation}

To bound $u(t,x)$ from below,
we first make use of \eqref{f4} to show that  there exists $T_0>0$ such that
 \begin{equation}\label{>delta1} u(t, x) \geq \delta \mbox{  for $ t\geq T_0$ and $x\in [g(t), h(t)]$}.
 \end{equation}
 Similar to the proof of Theorem \ref{th1.3}, since $\underline f$ satisfies ${\bf (f_A)}$, 
  we are able to choose a function $\hat{f} \in C^1$  sufficiently close to $\underline f$ in $L^\infty$ such that $\hat{f}(s) \leq f(s)$ for $s \geq 0$, and  $\hat f$ satisfies $\mathbf{(F_b)}$  with $(P, Q)=(\hat\theta, \delta)$ for some $\hat\theta\in [\theta, \delta)\cap (0,\delta)$.  Then, by Lemma \ref{lemma3.1}, the traveling wave problem \eqref{fg} 	has a solution pair $(c,q)=(c_0,q_0)$ with $c_0>0$ and $q_0(\cdot)$ strictly increasing.
	
	The same reasoning as in the proof of Theorem \ref{th1.3} shows that for some $L > 0$ sufficiently large,	
	\[
	\underline{u}(t, x) := \max\{q_0(ct - x - L), q_0(ct + x - L)\}
	\]
	satisfies
	 (in the weak sense)
	\[
	\underline u_t \leq  d \underline u_{xx} + \underline f(\underline u)\leq f(t,x,\underline u)  \quad \text{for } t > 0, \; x \in \mathbb{R}.
	\]
	Additionally, 	\[
		0 \leq \underline{u}(t, x) \leq  \delta \leq u(t,x) \mbox{ for } t>0,\ x\in\{g(t), h(t)\},\]
		and
		\[
		\underline{u}(0, x)  \leq u_0(x) \quad \text{for } x \in [-h_0, h_0].
	\]
	Therefore we can apply the standard comparison principle over $\{(t, x) : t > 0, x \in [g(t), h(t)]\}$ to deduce  $u(t, x) \geq \underline{u}(t, x)$ in this region. 
	
	Moreover, using
	\[
	\lim_{t \to \infty} \|\underline{u}(t,\cdot) - \delta\|_{L^\infty(\mathbb{R})} = 0,
	\]
	and $\delta > \theta$, there exists  $t_0 > 0$ such that
	\begin{equation*}\label{t0}
{u}(t, x) \geq\underline{u}(t, x) >\theta \quad \mbox{ for }  t \geq t_0, \; x \in [g(t), h(t)].
	\end{equation*}
	
	As in the proof of Theorem \ref{th1.3}, let $m_0 := \min_{x \in [g(t_0), h(t_0)]} u(t_0, x)$. Then $\delta \geq m_0 >\theta$. Consider the auxiliary ODE problem:
	\[
	w' = \underline f(w),  \quad w(t_0) = m_0.
	\]
	Since $\underline f(u)> 0$ in $(\theta, 1)$, the solution $w(t)$ is increasing in $t$, and $w(t) \to 1$ as $t \to \infty$. Thus, there exists $T_0 \geq t_0$ such that $w(T_0) = \delta$ and $m_0 \leq w(t) \leq \delta$ for $t \in [t_0, T_0]$. By the comparison principle over the region $\{(t, x) : t_0 \leq t \leq T_0, \; g(t) \leq x \leq h(t)\}$, we obtain $u(t, x) \geq w(t)$ in this region. In particular,
	\[
	u(T_0, x) \geq w(T_0) = \delta \quad \mbox{ for } x \in [g(T_0), h(T_0)].
	\]
	
	Comparing $u(t, x)$ with the constant function $\underline{u}_1(t, x) \equiv \delta$ over $\{(t, x) : T_0 \leq t < \infty, \; g(t) \leq x \leq h(t)\}$, we conclude 
	by the usual comparison principle that $u(t, x) \geq \delta$ for $x \in [g(t), h(t)]$ and $t \in [T_0, \infty)$.  This proves \eqref{>delta1},

Next using $T_0$ in \eqref{>delta1} as the initial time and consider the problem
 \begin{equation}\label{lower}
\begin{cases}
 \tilde u_t-d  \tilde u_{xx}=\underline f( \tilde u), & t>T_0,\; \tilde g(t)<x< \tilde h(t),\\
\tilde u(t, \tilde g(t))=\tilde u(t,\tilde h(t))=\delta, &  t>T_0,\\
\tilde g'(t)=-\frac{d}{\delta} \tilde u_{x}(t, \tilde g(t)),& t>T_0,\\
\tilde h'(t)=-\frac{d}{\delta}\tilde u_x(t, \tilde h(t)), & t>T_0,\\
\tilde g(T_0)=g(T_0)+\epsilon,\ \tilde h(T_0)=h(T_0)-\epsilon, \  \tilde u(T_0,x)=\delta, & g(T_0)+\epsilon\leq x\leq h(T_0)-\epsilon,
\end{cases}
\end{equation}
where $\epsilon>0$ is small.  

Since $\underline f$ satisfies ${\bf (f_A)}$ and $\delta>\theta^*_{\underline f}$, Theorem \ref{th1.3} applies to \eqref{lower}. With \eqref{>delta1} holding for all $t\geq T_0$, and $f(t,x,u)\geq \underline f(u)$, we  see that $(u,g,h)$ is an upper solution to \eqref{lower} and can 
apply Lemma \ref{lemma2.3b} to conclude that
\[
[\tilde g(t), \tilde h(t)]\subset (g(t), h(t)),\ \ \tilde u(t,x)\leq u(t,x) \mbox{ for } t\geq T_0, \ x\in [\tilde g(t), \tilde h(t)].
\]
The conclusions of the theorem now follow directly from this, \eqref{limsup} and the results in Theorem \ref{th1.3} for $(\tilde u, \tilde g, \tilde h)$.
\hfill $\Box$

\subsection{Proof of Theorem \ref{th3.2}} We complete the proof by three steps.\medskip

\noindent
{\bf Step 1.} We show that there exists $T_0>0$ such that $u(t,x)\leq \delta$ for $t\geq T_0$ and $x\in [g(t), h(t)]$.

Let $m_0=\|u_0\|_\infty+1$ and consider the ODE problem
\[
w'=\bar f(w),\ w(0)=m_0.
\]
Since $\bar f(1)=0$, $\bar f(u)<0$ for $u>1$ and $m_0\geq 1+\delta$, the unique solution $w(t)$ is strictly decreasing and $w(t)\to 1$ as $t\to\infty$. Therefore there exists a unique $T_0>0$ such that $w(T_0)=\delta\in (1, m_0)$. We may now compare $u(t,x)$ with $w(t)$ in the region $\{(t,x): t\in [0, T_0], \ x\in [g(t), h(t)]\}$ by the usual comparison principle, and conclude that $u(t,x)\leq w(t)$ in this region. It follows that $u(T_0, x)\leq \delta$ for $x\in [g(T_0, h(T_0)]$. Since $f(t,x,\delta)\leq \bar f(\delta)<0$, we see that over the region $\{(t,x): t\geq T_0, \ x\in [g(t), h(t)]\}$, $\bar u\equiv\delta$ is always an upper solution to the initial boundary value problem satisfied by $u(t,x)$, and therefore $u(t,x)\leq \delta$ in this region.
\medskip

\noindent
{\bf Step 2.} We show that there exists $\xi^*\in\mathbb R$  such that 
\begin{equation}\label{xi*}
\lim_{t\to\infty} g(t)=\lim_{t\to\infty} h(t)=\xi^*.
\end{equation}

The conclusion proved in Step 1 above implies that for $t\geq T_0$,
\[
h'(t)=-\frac d\delta u_x(t, h(t))\leq 0,\ g'(t)=-\frac d\delta u_x(t, g(t))\geq 0.
\]
Therefore
\[
h_\infty:=\lim_{t\to\infty} h(t) \mbox{ and } g_\infty:=\lim_{t\to\infty} g(t) \mbox{ exist, and } [g_\infty, h_\infty]\subset [g(T_0, h(T_0)].
\]
It remains to show $g_\infty=h_\infty$. 

Arguing indirectly we assume  $h_\infty>g_\infty$ and seek  a contradiction. Define
\[   y=\frac {[h(t)-x] g_\infty+[x-g(t)]h_\infty}{h(t)-g(t)},\ \ U(t,y)=u(t,x).
\]
By direct calculation we see that  $U$ satisfies
\begin{equation}\label{U}
\begin{cases}
U_t-d\Big[\frac{h_\infty-g_\infty}{h(t)-g(t)}\Big]^2U_{yy}+\frac{h'(t)g_\infty-g'(t)h_\infty-[h'(t)-g'(t)]y}{h(t)-g(t)}U_y=F(t,y,U), & t>0,\ y\in [g_\infty, h_\infty],\\
U(t, g_\infty)=U(t, h_\infty)=\delta, & t>0,\\
h'(t)=-\frac d\delta \frac{h_\infty-g_\infty}{h(t)-g(t)}U_y(t, h_\infty), & t>0,\\
g'(t)=-\frac d\delta \frac{h_\infty-g_\infty}{h(t)-g(t)}U_y(t, g_\infty), & t>0,\\
U(0, y)=u_0\Big(\frac{2h_0y-(g_\infty+h_\infty)h_0}{h_\infty-g_\infty}\Big), & y\in [g_\infty, h_\infty],
\end{cases}
\end{equation}
where
\[
F(t,y,U):=f(t, \frac{h(t)(y-g_\infty)+g(t)(h_\infty-y)}{h_\infty-g_\infty}, U).
\]
We easily see that $\|F(\cdot, \cdot, U(\cdot,\cdot))\|_{L^\infty([0,\infty)\times [g_\infty, h_\infty])}<\infty$.  Since $h'(t)\geq 0\geq g'(t)$ for $t\geq T_0$, from Step 2 in the proof of Lemma \ref{lemma2.5} we see that
$\|h'\|_{L^{\infty}([0,\infty))}+\|g'\|_{L^\infty([0,\infty))}<\infty$. Therefore, in the first line of \eqref{U}, the coefficient of $U_{yy}$  is uniformly continuous, and the coefficient of $U_y$ belongs to $L^\infty([0,\infty)\times [g_\infty, h_\infty])$.
 We may now apply standard $L^p$ theory to the initial boundary value problem consisting of the first, second and fifth lines of \eqref{U} to conclude that, for any $p>1$, 
\[
\|U\|_{W^{1,2}_p([n, n+2]\times [g_\infty, h_\infty])}\leq C_p
\]
for all integers $n\geq 1$ and some $C_p>0$ independent of $n$. Taking $p$ sufficiently large and use the Sobolev embedding theorem, we see from
the third and fourth equations in \eqref{U} that $h'(t)$ and $g'(t)$ are uniformly continuous in $t$ for $t\geq 1$. This implies that $h'(t)\to 0$ and $g'(t)\to 0$ as $t\to\infty$,
since otherwise,  say there exists $s_n$ increasing to $\infty$ as $n\to\infty$  such that $h'(s_n)\geq \sigma>0$ for all $n$, then we can find $\epsilon>0$ small so that $h'(s)\geq \sigma/2$ for $s\in [s_n-\epsilon, s_n+\epsilon]$ and all $n\geq 1$, which together with $h'(t)\geq 0$ for $t>T_0$ implies $h_\infty=\infty$.

Let $\{t_n\}$ be an arbitrary sequence increasing to $\infty$ as $n\to\infty$, and define
\[
U_n(t,y):=U(t_n+t, y).
\]
Then applying the $L^p$ estimates to the initial boundary value problem satisfied by $U_n$ (which is a simple variation of the first, second and fifth lines of \eqref{U}) and using  the Sobolev embedding theorem,
we see that, for any $\alpha\in [\beta,1)$, subject to a subsequence, 
$U_n\to \tilde U$ in $C_{loc}^{(1+\alpha)/2, 1+\alpha}(\mathbb{R}\times [g_\infty, h_\infty])$. By Step 1 we may assume $U_n\leq \delta$ for all $n\geq 1$ and hence $\tilde U\leq \delta$.

We claim that
 $\tilde U$ is a classical solution of
\begin{equation}\label{tilde-U}
\begin{cases}
\tilde U_t-d\tilde U_{yy}=\tilde F(t, y, \tilde U), & t\in\mathbb R,\ y\in [g_\infty, h_\infty],\\
\tilde U(t, g_\infty)=\tilde U(t, h_\infty)=\delta, & t\in\mathbb R,\\
\tilde U_y(t, g_\infty)=\tilde U_y(t, h_\infty)=0, & t\in\mathbb R,
\end{cases}
\end{equation}
where $\tilde F(t,y,\xi)$ is a function in $C^{\beta/2, \beta, 1-0}(\mathbb R\times [g_\infty, h_\infty]\times [0, \delta])$, with $\xi\to \tilde F(t,y,\xi)$   Lipschitz continuous over $[0, \delta]$ uniformly for $t\in\mathbb R$ and $y\in [g_\infty, h_\infty]$, and $\tilde F(t,y, \xi)\leq \bar f(\xi)<0$ for $\xi\in (1, \delta]$ and all $t\in\mathbb R,\ y\in [g_\infty, h_\infty]$.

Indeed, by our assumptions on $f(t,x,u)$ we know that, for any $T>0$, there exists a positive integer $n_T$ such that
\[
\hat F_n(t, y, \xi):=F(t_n+t, y,\xi), \ n\geq n_T
\]
form a sequence in $C([-T, T]\times [g_\infty, h_\infty]\times [0, \delta])$ which is uniformly bounded and equicontinuous. Therefore by passing to a subsequence,
$\hat F_n(t,y,\xi)\to \tilde F_T(t,y,\xi)$  as $n\to\infty$ in the space $C([-T, T]\times [g_\infty, h_\infty]\times [0, \delta])$. Taking a sequence $T_k\to\infty$ as $k\to\infty$ and using a standard diagonal process of selecting subsequences, we can find a subsequence of $\{t_n\}$, still denoted by $\{t_n\}$ for convenience of notation, and a function $\tilde F\in C(\mathbb R\times [g_\infty, h_\infty]\times [0, \delta])$
such that, as $n\to\infty$,
\[\hat  F_n(t, y, \xi)\to \tilde F(t,y, \xi) \mbox{ uniformly in $[-T, T]\times [g_\infty, h_\infty]\times [0, \delta]$ for every $T>0$.}
\]
Since  
\[\mbox{ $U_n(t,y)\to \tilde U(t, y)$ and }\ 
Y_n(t,y):=\frac{h(t_n+t)(y-g_\infty)+g(t_n+t)(h_\infty-y)}{h_\infty-g_\infty}\to y
\]  
uniformly in $[-T, T]\times [g_\infty, h_\infty]$ as $n\to\infty$,  we see that for any $T>0$,
\[
\begin{aligned}
0&\leq |\hat F_n(t, Y_n(t,y), U_n(t,y))-\tilde F(t,y, \tilde U(t,y))|\\
&\leq  |\hat F_n(t, Y_n(t,y), U_n(t,y))-\tilde F(t,Y_n(t,y),  U_n(t,y))|\\
& \ \ \ \ \ \ + |\tilde F(t, Y_n(t,y), U_n(t,y))-\tilde F(t,y, \tilde U(t,y))|\\
& \to 0 \mbox{ as $n\to\infty$ uniformly in $[-T, T]\times [g_\infty, h_\infty]\times [0, \delta]$.}
\end{aligned}
\]
One can now easily see from the above fact for $\hat F_n(t, Y_n(t,y), U_n(t,y))$ and the equations satisfied by $U_n(t,y)$  that for any $p>1$,
 $\tilde U$ is a $W^{1,2}_{p, loc}(\mathbb R\times [g_\infty, h_\infty])$ solution of
\eqref{tilde-U}. 

Since $\xi\to f(t,x,\xi)$ is Lipschitz for $\xi\in [0, \delta]$ uniformly for $t\geq 0$ and $ x\in\mathbb R$, there exists a constant $L>0$ such that 
\[
|\hat F_n(t,y,\xi_1)-\hat F_n(t,y,\xi_2)|\leq L|\xi_1-\xi_2| \mbox{ for } t\geq -t_n, y\in [g_\infty, h_\infty],\ \xi_1,\xi_2\in [0, \delta].
\]
Letting $n\to\infty$ we deduce
\[
|\tilde F(t,y,\xi_1)-\tilde F(t,y,\xi_2)|\leq L|\xi_1-\xi_2| \mbox{ for } t\in\mathbb R, y\in [g_\infty, h_\infty],\ \xi_1,\xi_2\in [0, \delta].
\]
Similarly, we can show $\tilde F(t,y,\xi)$ is $C^{\beta/2, \beta}$ in $(t,y)$ for $ t\in\mathbb R, y\in [g_\infty, h_\infty],\xi\in[0,\delta]$. Moreover, from $f(t,x,u)\leq \bar f(u)$ we deduce 
\[
\hat F_n(t,y,\xi)\leq \bar f(\xi)  \mbox{ for } t\geq -t_n, y\in [g_\infty, h_\infty],\ \xi \in [1, \delta].
\]
Letting $n\to\infty$ we obtain $\tilde F(t,y,\xi)\leq\bar f(\xi)$ for $t\in\mathbb R,\ y\in [g_\infty, h_\infty],\ \xi \in [1, \delta]$.
Thus $\tilde F$ has all the claimed properties.

With these properties of $\tilde F$, we may now use standard regularity iteration to see that $\tilde U$ is a classical solution of \eqref{tilde-U}. Recall that  Step 1 implies  $\tilde U(t,y)\leq \delta$, and $\tilde F(t,y,\delta)\leq \bar f(\delta)<0$ further implies that  we can use  the usual strong maximum principle to conclude that
$\tilde U (t, y)<\delta$ for $t\in\mathbb R$ and $y\in (g_\infty, h_\infty)$. Then the Hopf boundary lemma applied to the function $W(t,y):=\delta-\tilde U(t,y)$ at the boundary points $(t, g_\infty)$ and $(t, h_\infty)$ leads to $\tilde U_y(t, g_\infty)<0<\tilde U_y(t, h_\infty)$, which contradicts the third equation in \eqref{tilde-U}. 
This contradiction proves the desired fact that $g_\infty=h_\infty$, and Step 2 is now finished.
\medskip

\noindent
{\bf Step 3.} We prove that $u(t,x)\to\delta$ as $t\to\infty$ uniformly for $x\in [g(t), h(t)]$.

Since by Step 1, $u(t,x)\leq \delta$ for $t\geq T_0$ and $x\in [g(t), h(t)]$, we only need to bound $u(t,x)$ from below. We do this by using suitable lower solutions.
By a translation of $x$ we may assume that $\lim_{t\to\infty}g(t)=\lim_{t\to\infty} h(t)=0$. By the assumptions on $f(t,x,u)$ we can find $M>0$ such that 
\[
\mbox{$f(t,x,\xi)\geq -M\xi$ for all $t\geq 0, x\in\mathbb R$ and $\xi\in [0, \delta]$.}
\]

 For any small $\epsilon>0$, we consider the auxiliary problem
\begin{equation}\label{-M}
\begin{cases} v_t-dv_{xx}=-Mv, & t>0, x\in [-\epsilon,\epsilon],\\
v(t,\pm \epsilon)=\delta, & t>0,\\
v(0, x)=0, & x\in [-\epsilon,\epsilon].
\end{cases}
\end{equation}
Since $0$ is a lower solution of the stationary problem of \eqref{-M}, the unique solution $v(t,x)$ is increasing in $t$. As $\delta$ is an upper solution of \eqref{-M}, we further have $v(t,x)\leq \delta$ for all $t>0$ and $x\in [-\epsilon,\epsilon]$. Therefore $v_*(x):=\lim_{t\to\infty} v(t,x)$ exists, and by standard regularity considerations $v_*$ is a stationary solution of \eqref{-M}, and $0<v_*(x)\leq\delta$. The maximum principle indicates that $v_*$ is the unique stationary solution of \eqref{-M}. Define
\[
v_\epsilon(x):=v_*(\epsilon x) \mbox{ for } x\in [-1,1].
\]
Clearly
\[
-d v_\epsilon''=-\epsilon^2 M v_\epsilon \mbox{ in } (-1, 1),\ v_\epsilon(\pm 1)=\delta.
\]
Since $\epsilon^2Mv_\epsilon\to 0$ in $L^\infty([-1,1])$ as $\epsilon \to 0$, it follows that $v_\epsilon\to v_0$ in $W^{2,p}([-1,1])$ for any $p>1$, and $v_0$ solves
\[
-dv_0''=0,\ v_0(\pm 1)=\delta.
\]
Hence $v_0\equiv \delta$ and $v_\epsilon(x)\to\delta$ as $\epsilon\to 0$ uniformly for $x\in [-1,1]$.

Let $T_\epsilon>T_0$ satisfy
\[
[g(t), h(t)]\subset (-\epsilon, \epsilon) \mbox{ for } t\geq T_\epsilon.
\]
Then we can compare $u(t,x)$ with $v(t-T_\epsilon, x)$ over the region $\{(t,x): t\geq T_\epsilon, x\in [g(t), h(t)]\}$ by the usual comparison principle to conclude that
$
u(t,x)\geq v(t-T_\epsilon,x)$ in this region. It follows that 
\[
\liminf_{t\to\infty} u(t,x)\geq \lim_{t\to\infty} v(t-T_\epsilon,x)=v_*(x)=v_\epsilon(x/\epsilon) \mbox{ uniformly for } x\in [g(t), h(t)].
\]
Letting $\epsilon\to 0$ and using $v_0\equiv\delta$ we deduce
\[
\liminf_{t\to\infty} u(t,x)\geq \delta \mbox{ uniformly for } x\in [g(t), h(t)].
\]
This completes Step 3, and the proof of Theorem \ref{th3.2} is finished. \hfill $\Box$

\subsection{Proof of Conjecture 4 part (i) and Theorem \ref{th3.3}}  The proof of these two results share several ideas and techniques.
\subsubsection{Proof of Conjecture 4 part (i)} The case $\delta=\theta$ is simpler and will be proved after the case $\delta\in (0, \theta)$.
The proof of this latter case consists of the following three steps.
\medskip

\noindent
{\bf Step 1.} We prove that there exists $M>0$ such that $[g(t), h(t)]\subset (-M, M)$ for all $t>0$ and $\limsup _{t\to\infty} u(t,x)\leq \delta$ uniformly for $x\in[g(t), h(t)]$.

Clearly $\tilde f(u):=f(u+\delta)$ is  a combustion type function with ignition temperature $\tilde\theta=\theta-\delta$ and $\tilde f(1-\delta)=0$, $\tilde f(u)<0$ for $u>1-\delta$. Since $u_0(x)\leq\theta$ for $x\in [-h_0, h_0]$ we can find $\bar u_0(x)\in X(h_0+1)$ such that 
\[
\theta\geq \bar u_0(x)\geq u_0(x) \mbox{ for } x\in [-h_0, h_0],\ \bar u_0(x)>\delta \mbox{ for } x\in (-h_0, h_0).
\]

We now consider the auxiliary free boundary problem
 \begin{equation}\label{tilde-0}
  \begin{cases}
  \tilde u_t-d\tilde u_{xx}=\tilde f(u), & t>0, \ \tilde g(t)<x<\tilde h(t),\\
   \tilde u(t,\tilde g(t))= \tilde u(t,\tilde h(t))=0,&t>0,\\
    \tilde g'(t)=-\frac d\delta \tilde u_x(t, \tilde g(t)), & t>0,\\
    \tilde  h'(t)=-\frac d\delta \tilde u_x(t, \tilde h(t)),&t>0,\\
   \tilde  u(0,x)=\bar u_0(x)-\delta,\ -\tilde g(0)=\tilde h(0)=h_0+1,& \tilde g(0)\leq x\leq\tilde h(0).
  \end{cases}
  \end{equation}
By \cite{dl2015}, there are three possibilities for the unique solution $(\tilde u, \tilde g, \tilde h)$ of \eqref{tilde-0} as $t\to\infty$: (i) spreading, (ii) vanishing and (iii) transition. Since $\bar u_0(x)-\delta\leq \tilde\theta=\theta-\delta$, it follows from the comparison principle that $\tilde u(t,x)<\tilde\theta$ for $t>0$ and $x\in [\tilde g(t), \tilde h(t)]$. Therefore $\tilde f(\tilde u(t,x))\equiv 0$, which implies, by the comparison principle again, $\tilde u(t,x)\leq \max_{x\in [\tilde g(t_0), \tilde h(t_0)]}\tilde u(t_0,x)<\tilde\theta$ for $t>t_0>0$. Hence only vanishing is possible for $(\tilde u, \tilde g,\tilde h)$. It follows that 
\[
\lim_{t\to\infty}[\tilde g(t), \tilde h(t)]=[\tilde g_\infty, \tilde h_\infty] \mbox{ is a finite interval, } [\tilde g(t), \tilde h(t)]\subset (\tilde g_\infty,\tilde h_\infty) \mbox{ for } t>0,
\]
and 
\[
\tilde u(t,x)\to 0 \mbox{ as } t\to\infty \mbox{ uniformly for } x\in [\tilde g(t),\tilde h(t)].
\]

On the other hand, the standard comparison principle implies $u(t,x)\leq \theta$ for $t\geq 0$ and $x\in [g(t), h(t)]$. So $f(u(t,x))\equiv 0$ and it is easy to check that $(u, g, h)$ and $(\tilde u+\delta, \tilde g, \tilde h)$ form a pair of upper and lower solutions as described in Lemma \ref{lemma2.3b}, which allows us to conclude that
\[
[g(t), h(t)]\subset (\tilde g(t), \tilde h(t))\subset (\tilde g_\infty, \tilde h_\infty),\ u(t,x)\leq \tilde u(t,x)+\delta \mbox{ for } t>0,\ x\in [g(t), h(t)].
\]
The desired conclusions clearly follow directly from the above estimates.
\medskip

\noindent
{\bf Step 2.} We show that $u(t,x)\to\delta$ as $t\to\infty$ uniformly for $x\in [g(t), h(t)]$.

Let $M>0$ satisfy $[g(t), h(t)]\subset (-M, M)$ for $t>0$ as stated in Step 1. 
Then consider the auxiliary initial boundary value problem
\[
\begin{cases}
v_t-d v_{xx}=0,& t>0,\ -M<x<M,\\
v(t, \pm M)=\delta,& t>0,\\
v(0, x)=0, & x\in [-M, M].
\end{cases}
\]
One easily sees that $v(t,x)$ is increasing in $t$ and $v(t,x)\to\delta$ as $t\to\infty$ uniformly for $x\in [-M, M]$. 
Moreover, we can compare $u(t,x)$ with $v(t,x)$ over the region $\{(t,x): t>0, x\in [g(t), h(t)]\}$ and use the usual comparison principle to conclude that
$u(t,x)\geq v(t,x)$ in this region. It follows that 
\[
\liminf_{t\to\infty}u(t,x)\geq \lim_{t\to\infty} v(t,x)=\delta \mbox{ uniformly for } x\in [g(t), h(t)].
\]
This and the second conclusion in Step 1 imply that $u(t,x)\to\delta$ as $t\to\infty$ uniformly for $x\in [g(t), h(t)]$. 
\medskip

\noindent
{\bf Step 3.} We prove that
 $g_\infty:=\lim_{t\to\infty}g(t)$ and $h_\infty:=\lim_{t\to\infty}h(t)$ exist, and 
 \[
 \int_{-h_0}^{h_0}u_0(x)dx=\delta (h_\infty-g_\infty).
 \]

 By Step 2, for any given small $\epsilon>0$, there exists $T=T_\epsilon>0$ such that
 \[
 u(t,x)\in [\delta-\epsilon, \delta+\epsilon]\subset (0, \theta) \mbox{ for } t\geq T,\ x\in [g(t), h(t)].
 \]
  Now we consider the following auxiliary problem
\begin{equation}\label{hat-u}
 \begin{cases}
  \hat u_t-d\hat u_{xx}=0, & t>0, \ \hat g(t)<x<\hat h(t),\\
   \hat u(t,\hat g(t))= \hat u(t,\hat h(t))=\delta,&t>0,\\
    \hat g'(t)=-\frac d\delta \hat u_x(t, \hat g(t)), & t>0,\\
    \hat  h'(t)=-\frac d\delta \hat u_x(t, \hat h(t)),&t>0,\\
   \hat  u(0,x)=\hat u_0(x),\ \hat g(0)=g(T)-\epsilon, \hat h(0)=h(T)+\epsilon,& \hat g(0)\leq x\leq \hat h(0),
  \end{cases}
\end{equation}
where $\hat u_0$ is a $C^2$ function satisfying
\[
\hat u_0(\hat g(0))=\hat u_0(\hat h(0))=\delta,\ \max\{\delta, u(T, x)\} \leq \hat u_0(x)\leq \delta+\epsilon \mbox{ for } x\in [\hat g(0), \hat h(0)].
\]
We easily see that the unique solution $(\hat u, \hat g, \hat h)$ of \eqref{hat-u} satisfies $\delta+\epsilon\geq \hat u(t,x)\geq \delta$ for all $t>0$ and $x\in [\hat g(t), \hat h(t)]$.
It follows that $\hat h'(t)\geq 0\geq \hat g'(t)$ for $t>0$.
Define
\[
\hat U(t):=\int_{\hat g(t)}^{\hat h(t)} \hat u(t,x)dx.
\]
Then, for $t >0$,
	\[\begin{aligned}
	\hat U'(t) &= \hat h'(t)\hat u(t, \hat h(t)) - \hat g'(t)\hat u(t, \hat g(t)) + \int_{\hat g(t)}^{\hat h(t)} \hat u_t(t, x) \, dx\\
	&=\delta \hat h'(t) - \delta \hat g'(t) + \int_{\hat g(t)}^{\hat h(t)} d \hat u_{xx}  \, dx\\
	 &=\delta \hat h'(t) - \delta\hat g'(t) +d\hat u_x(t, \hat h(t))-d\hat u_x(t, \hat g(t))=0.
	 \end{aligned}
	\]
Hence $\hat U(t)=\hat U(0)$ for $t>0$. It follows that, for all $t>0$,
\[\begin{cases}
[\hat h(t)-\hat g(t)](\delta+\epsilon)\geq \hat U(t)=\hat U(0)\geq [\hat h(0)-\hat g(0)]\delta,\medskip\\
[\hat h(t)-\hat g(t)]\delta\leq \hat U(t)=\hat U(0)\leq [\hat h(0)-\hat g(0)](\delta+\epsilon).
\end{cases}
\]
We thus obtain, for all $t>0$,
\[
\frac\delta {\delta+\epsilon} [\hat h(0)-\hat g(0)]
\leq \hat h(t)-\hat g(t)\leq \frac{\delta+\epsilon}\delta [\hat h(0)-\hat g(0)].
\]
The monotonicity of $\hat g(t)$ and $\hat h(t)$ then implies
\[
 \hat h(t)\leq \frac{\delta+\epsilon}\delta [\hat h(0)-\hat g(0)]+\hat g(t)\leq \hat h(0)+\frac\epsilon\delta[\hat h(0)-\hat g(0)].
\]
Similarly
\[
\hat g(t)\geq \hat g(0)-\frac\epsilon\delta[\hat h(0)-\hat g(0)].
\]
By Step 3 we have 
\[
\hat h(0)-\hat g(0)=h(T)-g(T)+2\epsilon\leq 2(M+\epsilon).
\]
We thus obtain, for all $t>0$,
\[\begin{cases}
\hat h(t)\leq h(T)+\epsilon+\frac{2\epsilon}\delta (M+\epsilon),\\
\hat g(t)\geq g(T)-\epsilon-\frac{2\epsilon}\delta (M+\epsilon).
\end{cases}
\]

Since $\theta> \hat u(t,x)\geq \delta$ for all $t>0$ and $x\in [\hat g(t), \hat h(t)]$, it is easily checked that $(\hat u(t,x), \hat g(t),\hat h(t))$ and $(u(T+t,x), g(T+t),h(T+t))$
form a pair of upper and lower solutions as described in Lemma \ref{lemma2.3b}. It follows that
\[
[g(T+t), h(T+t)]\subset (\hat g(t), \hat h(t)) \mbox{ for } t>0.
\]
Therefore we have, for all $t>T$, 
\begin{equation}\label{T}\begin{cases}
 h(t)\leq h(T)+\epsilon+\frac{2\epsilon}\delta (M+\epsilon),\\
 g(t)\geq g(T)-\epsilon-\frac{2\epsilon}\delta (M+\epsilon).
\end{cases}
\end{equation}
This implies $h_\infty:=\lim_{t\to\infty} h(t)$ exists, for otherwise, there exists $t_n\to\infty$  such that
\[
h_*:=\liminf_{t\to\infty} h(t)=\lim_{n\to\infty} h(t_n)<h^*:=\limsup _{t\to\infty} h(t).
\]
For any small $\epsilon>0$ we may take $T=t_n$ with $n$ sufficiently large such that $h(T)\leq h_*+\epsilon$ and \eqref{T} holds.
Then for all $t>T$ we have
\[
h(t)\leq  h(T)+\epsilon+\frac{2\epsilon}\delta (M+\epsilon)\leq h_*+2\epsilon+\frac{2\epsilon}\delta (M+\epsilon).
\]
It follows that $h^*\leq h_*+2\epsilon+\frac{2\epsilon}\delta (M+\epsilon)$, which is impossible when $\epsilon$ is small enough.
We can similarly prove that $g_\infty:=\lim_{t\to\infty} g(t)$ exists.

 For the function $\displaystyle
	U(t) := \int_{g(t)}^{h(t)} u(t, x) \, dx$,
 similar to $\hat U(t)$ we have $U(t)=U(0)$ for $t>0$. It follows that
\[
\int_{-h_0}^{h_0}u_0(x)dx=U(0)=\lim_{t\to\infty} U(t)=\delta (h_\infty-g_\infty).
\]
The proof for the case $\delta\in (0, \theta)$ is now complete. 
\medskip

It remains to consider the case $\delta=\theta$. The proof for this case is a simple variation of the arguments in the  three steps above. To obtain the corresponding conclusions in Step 1, we simply observe that now
the constant triple $(\theta, -h_0-1, h_0+1)$ and $(u(t,x), g(t), h(t))$ form a pair of upper and lower solutions, which yields $u(t,x)\leq \theta=\delta$ and $[g(t), h(t)]\subset (-h_0-1, h_0+1)$
for all $t>0$. The corresponding conclusions in Step 2 and Step 3 are proved by the same arguments with considerable simplifications; for example,  we can now simply take $(\hat u, \hat g, \hat h)\equiv (\theta, g(T)-\epsilon, h(T)+\epsilon)$ in Step 3.
\hfill $\Box$

\subsubsection{Proof of Theorem \ref{th3.3}} The proof consists of four steps. The first three steps are based on  techniques already used  in this paper, but the fourth step relies on a new idea.
\medskip

\noindent
{\bf Step 1.} We prove that there exists $M>0$ such that $[g(t), h(t)]\subset (-M, M)$ for all $t>0$ and $\limsup _{t\to\infty} u(t,x)\leq 1$ uniformly for $x\in[g(t), h(t)]$.

We follow the approach in Step 1 of  subsection 3.6.1. 
 Choose $\bar u_0(x)\in X(h_0+1)$ such that 
\[
\bar u_0(x)\geq u_0(x) \mbox{ for } x\in [-h_0, h_0],\ \bar u_0(x)\geq 1 \mbox{ for } x\in [-h_0-1, h_0+1].
\]
Then consider the auxiliary free boundary problem
 \begin{equation*}\label{tilde-1}
  \begin{cases}
  \tilde u_t-d\tilde u_{xx}=0, & t>0, \ \tilde g(t)<x<\tilde h(t),\\
   \tilde u(t,\tilde g(t))= \tilde u(t,\tilde h(t))=1,&t>0,\\
    \tilde g'(t)=-d \tilde u_x(t, \tilde g(t)), & t>0,\\
    \tilde  h'(t)=-d \tilde u_x(t, \tilde h(t)),&t>0,\\
   \tilde  u(0,x)=\bar u_0(x),\ -\tilde g(0)=\tilde h(0)=h_0+1,
  \end{cases}
  \end{equation*}
  By the standard comparison principle we deduce $\tilde u(t,x)\geq 1$ for $t>0$ and $x\in [\tilde g(t), \tilde h(t)]$. Therefore $f(t,x,\tilde u)\leq 0$.
  We may now use Lemma \ref{lemma2.3b} to conclude that
  \begin{equation}\label{h-tilde h}
[g(t), h(t)]\subset (\tilde g(t), \tilde h(t)),\ u(t,x)\leq \tilde u(t,x) \mbox{ for } t>0,\ x\in [g(t), h(t)].
\end{equation}

  On the other hand, the triple $(\tilde u-1, \tilde g,\tilde h)$ solves a free boundary problem of the form \eqref{free-bound-0}, and as in Step 1 of  subsection 3.6.1, we can use \cite{dl2015} to conclude that only vanishing is possible for $(\tilde u-1, \tilde g,\tilde h)$. Hence
\[\begin{cases}
\lim_{t\to\infty}[\tilde g(t), \tilde h(t)]=[\tilde g_\infty, \tilde h_\infty] \mbox{ is a finite interval, } \\
\tilde u(t,x)-1\to 0 \mbox{ as } t\to\infty \mbox{ uniformly for } x\in [\tilde g(t),\tilde h(t)].
\end{cases}
\]
The desired conclusions clearly follow from this and \eqref{h-tilde h}.
\medskip

\noindent
{\bf Step 2.} We show that $u(t,x)\to 1$ as $t\to\infty$ uniformly for $x\in [g(t), h(t)]$.

Choose $\hat { f}(u)\leq \underline f(u)$ such that $\hat f$ satisfies ${\bf (f_b)}$. Then, by Lemma \ref{lemma3.1}, the following problem
	\begin{equation*}\label{tw}
		\left\{
		\begin{aligned}
			&dq'' - cq' + \hat{f}(q) = 0, \quad z \in \mathbb{R}, \\
			&q(-\infty) = 0, \quad q(\infty) = 1
		\end{aligned}
		\right.
	\end{equation*}
	has a solution pair $(c,q)=(c_0,q_0)$ with $c_0>0$ and $q_0(\cdot)$ strictly increasing.
	
	Next, we  make use of $q_0(z)$ to construct a lower solution to bound $u(t,x)$ from below as in the proof of Theorem \ref{th1.3}. Since $q_0(-\infty) = 0$, we can choose $L > 0$ sufficiently large such that $q_0(h_0 - L) \leq \min_{x \in [-h_0, h_0]} u_0(x)$. Define
	\[
	\underline{u}(t, x) := \max\{q_0(ct - x - L), q_0(ct + x - L)\}.
	\]
	Due to $\hat {f} (u)\leq f(t,x,u)$ we  see
	 that $\underline{u}(t, x)$ satisfies (in the weak sense)
	\[
	\underline u_t \leq  d \underline u_{xx} + f(t,x, \underline u)  \quad \text{for } t > 0, \; x \in \mathbb{R}.
	\]
	Moreover
	\[
		0 < \underline{u}(t, x) <1 \quad \text{for } x \in \mathbb{R},\ \mbox{which implies } \underline u(t, x)< u(t,x) \mbox{ for } t>0,\ x\in\{g(t), h(t)\},\]
		and
		\[
		\underline{u}(0, x) = \max\{q_0(-x - L), q_0(x - L)\} \leq q_0(h_0 - L) \leq u_0(x) \quad \text{for } x \in [-h_0, h_0].
	\]
	Therefore we can apply the standard comparison principle over $\{(t, x) : t > 0, x \in [g(t), h(t)]\}$ to deduce  $u(t, x) \geq \underline{u}(t, x)$ in this region. 
	Since
	\[
	\lim_{t \to \infty} \|\underline{u}(t,\cdot) - 1\|_{L^\infty(\mathbb{R})} = 0,
	\]
	we obtain
	\[
\liminf_{t\to\infty}{u}(t, x) \geq\lim_{t\to\infty}\underline{u}(t, x) =1 \quad \mbox{ uniformly for }  x \in [g(t), h(t)].
	\]
	The desired conclusion now follows from this and Step 1.
\medskip

\noindent
{\bf Step 3.} We show that, as $t \to \infty$,
	$
		[g(t), h(t)] \to [g_\infty, h_\infty] \mbox{ is a finite interval.}	
	$

	We follow  Step 3 of subsection 3.6.1.
	For any small $\epsilon>0$, we can find $T=T_\epsilon>0$ large such that
\[
u(t,x)\in [1-\epsilon, 1+\epsilon] \mbox{ for } t\geq T, \ x\in [g(t), h(t)].
\]

Next we consider the following auxiliary problem
\begin{equation}\label{hat-u-1}
 \begin{cases}
  \hat u_t-d\hat u_{xx}=0, & t>0, \ \hat g(t)<x<\hat h(t),\\
   \hat u(t,\hat g(t))= \hat u(t,\hat h(t))=1,&t>0,\\
    \hat g'(t)=-d\hat u_x(t, \hat g(t)), & t>0,\\
    \hat  h'(t)=-d \hat u_x(t, \hat h(t)),&t>0,\\
   \hat  u(0,x)=\hat u_0(x),\ \hat g(0)=g(T)-\epsilon, \hat h(0)=h(T)+\epsilon,& \hat g(0)\leq x\leq \hat h(0),
  \end{cases}
\end{equation}
where $\hat u_0$ is a $C^2$ function satisfying
\[
\hat u_0(\hat g(0))=\hat u_0(\hat h(0))=1,\ \max\{1, u(T, x)\} \leq \hat u_0(x)\leq 1+\epsilon \mbox{ for } x\in [\hat g(0), \hat h(0)].
\]
We easily see that the unique solution $(\hat u, \hat g, \hat h)$ of \eqref{hat-u-1} satisfies $1+\epsilon\geq \hat u(t,x)\geq 1$ for all $t>0$ and $x\in [\hat g(t), \hat h(t)]$.
It follows that $\hat h'(t)\geq 0\geq \hat g'(t)$ for $t>0$. As  in Step 3 of  subsection 3.6.1, the function
$\displaystyle
\hat U(t):=\int_{\hat g(t)}^{\hat h(t)} \hat u(t,x)dx$
satisfies
$\hat U(t)=\hat U(0)$ for $t>0$, which  leads to, for all $t>0$,
\[\begin{cases}
\hat h(t)\leq h(T)+\epsilon+2\epsilon (M+\epsilon),\medskip\\
\hat g(t)\geq g(T)-\epsilon-2\epsilon (M+\epsilon).
\end{cases}
\]

Since $\hat u(t,x)\geq 1$ for all $t>0$ and $x\in [\hat g(t), \hat h(t)]$, it is easily checked that $(\hat u(t,x), \hat g(t),\hat h(t))$ and $(u(T+t,x), g(T+t),h(T+t))$
form a pair of upper and lower solutions as described in Lemma \ref{lemma2.3b}. It follows that
\[
[g(T+t), h(T+t)]\subset (\hat g(t), \hat h(t)) \mbox{ for } t>0.
\]
Therefore we have, for all $t>T$, 
\begin{equation*}\label{T-1}\begin{cases}
 h(t)\leq h(T)+\epsilon+2\epsilon (M+\epsilon),\medskip\\
 g(t)\geq g(T)-\epsilon-2\epsilon (M+\epsilon).
\end{cases}
\end{equation*}
This implies, as in  Step 3 of  subsection 3.6.1, that $h_\infty:=\lim_{t\to\infty} h(t)$ and $g_\infty:=\lim_{t\to\infty} g(t)$ both exist.
\medskip

\noindent
{\bf Step 4.} We prove $h_\infty-g_\infty>0$.

As before, the function $\displaystyle
	U(t) := \int_{g(t)}^{h(t)} u(t, x) \, dx$ satisfies,
	 for $t >0$,
	\[\begin{aligned}
	U'(t) &=\int_{g(t)}^{h(t)} f(t,x,u(t,x)) \, dx \geq \int_{g(t)}^{h(t)} \underline f(u(t,x)) dx.
	 \end{aligned}
	\]

If  $u_0(x)\geq 1$ for $x\in [-h_0, h_0]$, then by a simple comparison consideration we deduce $u(t,x)\geq 1$ for $t>0$ and $x\in [g(t), h(t)]$, which implies $h'(t)\geq 0\geq g'(t)$ for $t>0$. Therefore $h_\infty\geq h(0)>g(0)\geq g_\infty$.

If   $u_0(x)\leq 1$ for $x\in [-h_0, h_0]$, then by a simple comparison consideration we deduce $ u(t,x)\leq 1$ for $t>0$ and $x\in [g(t), h(t)]$. Since $\underline f$ satisfies ${\bf (f_A)}$, we have $\underline f(u)>0$ for $u\in (\underline \theta, 1)$ with some fixed $\underline\theta\in [0, 1)$. By the conclusion of Step 2, we can find $T_0\geq 0$ such that
$1\geq u(t,x)>\underline\theta$ for $t\geq T_0$ and $x\in [g(t), h(t)]$. It follows that
\[
U'(t)\geq \int_{g(t)}^{h(t)} \underline f(u(t,x)) dx\geq 0 \mbox{ for } t\geq T_0,
\]
with $U'(t)\equiv 0$ for $t\geq T_0$ only if $u(t,x)\equiv 1$ for such $t$. In the case $u(t,x)\equiv 1$ for $t\geq T_0$, it follows that $h(t)\equiv h(T_0)=h_\infty>g_\infty=g(T_0)\equiv g(t)$ for $t>T_0$, and otherwise we have $U(\infty)>U(T_0)>0$ which implies $h_\infty>g_\infty$. 

If $u_0(x)-1$ changes sign in $[-h_0, h_0]$, then a new idea is required.  Now $m_0:=\max_{x\in[-h_0, h_0]}u_0(x)>1$. Let $v(t)$ be the unique solution of
\[
v'=\bar f(v),\ v(0)=m_0.
\]
Then from $\bar f(1)=0$ and $\bar f(u)<0$ for $u>1$ we see that $v(t)$ decreases to 1 as $t\to\infty$. Moreover, the usual comparison principle gives $u(t,x)\leq v(t)$ for $t>0$ and $x\in [g(t), h(t)]$. 

Since $\underline f'(1)<0$, there exists $\epsilon_1\in (0,\varepsilon]$ (where $\varepsilon$ is given in \eqref{additional}) such that $\underline f(u)$ is decreasing in $[1-\epsilon_1, 1+\epsilon_1]$.
 Using $u, v\to 1$ as $t\to\infty$, we can find $T>0$ large so that $u(t,x), v(t)$ both belong to $[1-\epsilon_1, 1+\epsilon_1]$ for $t\geq T$. Therefore, by the extra assumption \eqref{additional}, for $t\geq T$,
 \[
\begin{aligned}
 U'(t)&\geq  \int_{g(t)}^{h(t)} \underline f(u(t,x)) dx\geq \int_{g(t)}^{h(t)} \underline f(v(t)) dx\\
&\geq \varepsilon^{-1} [h(t)-g(t)]\bar f(v(t))=\varepsilon^{-1} [h(t)-g(t)]v'(t),
\end{aligned}
\]
and so, for any $t>s\geq T$,
\[
\int_{g(t)}^{h(t)}u(t,x)dx-\int_{g(s)}^{h(s)}u(s,x)dx=U(t)-U(s)\geq \varepsilon^{-1}\int_s^t[h(\tau)-g(\tau)]v'(\tau)d\tau.
\]

Assuming $h_\infty=g_\infty$, we now deduce a contradiction. Using $0<h(t)-g(t)\to 0$ as $t\to\infty$, we can find a positive sequence $t_n\to\infty$ such that
$h(t_n)-g(t_n)\geq h(t)-g(t)$ for $t\geq t_n$. Therefore for all large $n$ and $t>t_n$, in view of $v'<0$, we have
\[
\int_{t_n}^t[h(\tau)-g(\tau)]v'(\tau)d\tau\geq [h(t_n)-g(t_n)]\int_{t_n}^tv'(\tau)d\tau\geq  [h(t_n)-g(t_n)](1-v(t_n)).
\]
It follows that, for all large $n$ and $t>t_n$,
\begin{equation}\label{t-tn}\begin{aligned}
\int_{g(t)}^{h(t)}u(t,x)dx&\geq \int_{g(t_n)}^{h(t_n)}u(t_n,x)dx+\varepsilon^{-1}[h(t_n)-g(t_n)](1-v(t_n))\\
&=\int_{g(t_n)}^{h(t_n)}[u(t_n,x)+\varepsilon^{-1}(1-v(t_n))]dx.
\end{aligned}
\end{equation}
Since $u(t,x), v(t)\to 1$ as $t\to\infty$ uniformly in $x\in [g(t), h(t)]$, we can fix $n$ sufficiently large such that 
\[
u(t_n,x)+\varepsilon^{-1}(1-v(t_n))>0 \mbox{ for } x\in [g(t_n), h(t_n)],\]
 which implies 
 \[
  I_n:=\int_{g(t_n)}^{h(t_n)}[u(t_n,x)dx+\varepsilon^{-1}(1-v(t_n))]dx>0.
\]
On the other hand, $h_\infty=g_\infty$ and $u(t,x)\to 1$ as $t\to\infty$ imply $\displaystyle\int_{g(t)}^{h(t)}u(t,x)dx\to 0$ as $t\to\infty$.
Thus letting $t\to\infty$ in \eqref{t-tn} we deduce $0\geq I_n>0$. This contradiction proves $h_\infty>g_\infty$.
The proof is now complete.
\hfill $\Box$

\bigskip

\bigskip

\paragraph{\bf Acknowledgements.} The research of Y. Du and W. Ni was supported by the Australian Research Council; part of this work was carried out while L. Li was visiting the University of New England as a visiting PhD student; N. Shabgard was supported by a targeted PhD scholarship of the University of New England. We thank Professor Tim Scharef for useful discussions on numerical simulations in the paper.

 \end{document}